\tikzstyle{vertex} = [circle,fill=black!0,minimum size=4pt,inner sep=1pt]
\tikzstyle{every path} = [bend angle=30,>=stealth]
\newcommand{\breakpoints}[1]{\ensuremath{b(#1)}}
\newcommand{\SBPBIshort}{{\sc sbpbi}}
\newcommand{\decisionproblem}[3]{
\begin{center} 
\noindent\fbox{\parbox{0.97\textwidth}{
\begin{minipage}[t]{1\linewidth}
\textsc{#1}

Input:\ \ \ \ \ \ \  {#2}

Question: {#3}
\end{minipage}
}
}    
\end{center}
}
\title{Sorting by Prefix Block-Interchanges}
\author{
Anthony Labarre
}{
Laboratoire d'Informatique Gaspard Monge, Universit\'e Gustave Eiffel, LIGM (UMR 8049), CNRS, ENPC, ESIEE Paris, UPEM, F-77454, Marne-la-Vall\'ee, France \and \url{http://igm.univ-mlv.fr/~alabarre/}
}{
Anthony.Labarre@u-pem.fr
}{
https://orcid.org/0000-0002-9945-6774
}{}
\authorrunning{
A. Labarre
}
\keywords{permutations, genome rearrangements, interconnection network, sorting, edit distance, prefix block-interchange} 
\begin{document}

\maketitle

\begin{abstract}
We initiate the study of sorting permutations using \emph{prefix block-interchanges}, which exchange any prefix of a permutation with another non-intersecting interval. The goal is to transform a given permutation into the identity permutation using as few such operations as possible. We give a 2-approximation algorithm for this problem, show how to obtain improved lower and upper bounds on the corresponding distance, 
and determine the largest possible value for that distance.
\end{abstract}

\section{Introduction}

The problem of transforming two sequences into one another using a specified set of operations has received a lot of attention in the last decades, with applications in computational biology as \emph{(genome) rearrangement problems}~\cite{fertin-combinatorics} as well as interconnection network design~\cite{lakshmivarahan-symmetry}. In the context of permutations, it can be equivalently formulated as follows:  given a permutation $\pi$ of $[n]=\{1, 2, \ldots, n\}$ and a generating set $S$ (also consisting of permutations of $[n]$), find a minimum-length sequence of elements from $S$ that sorts $\pi$. The problem is known to be \NP-hard in general~\cite{jerrum-complexity} and \W[1]-hard when parameterised by the length of a solution~\cite{Cai1997}, but some families of operations that are important in applications lead to problems that can be solved in polynomial time (e.g. \emph{exchanges}~\cite{knuth-art}, \emph{block-interchanges}~\cite{Christie1996} and \emph{signed reversals}~\cite{DBLP:journals/jacm/HannenhalliP99}), while other families yield hard problems that admit good approximations (e.g. 11/8 for \emph{reversals}~\cite{DBLP:conf/esa/BermanHK02} and for \emph{block-transpositions}~\cite{Elias2006b}).

Several restrictions of these families have also been studied, one of which stands out in the field of interconnection network design: the so-called \emph{prefix constraint}, which forces operations to act on a prefix of the permutation rather than on an arbitrary interval. Those restrictions were introduced as a way of reducing the size of the generated network while maintaining a low value for its \emph{diameter}, thereby guaranteeing a low maximum communication delay~\cite{lakshmivarahan-symmetry}.  The most famous example is perhaps the restriction of reversals (which reverse the order of elements along an interval) to \emph{prefix reversals}, and the corresponding problem known as \emph{pancake flipping}, introduced in \cite{10.2307/2318260} and whose complexity was only settled  thirty years later~\cite{DBLP:journals/jcss/BulteauFR15}. 

As \autoref{tab:results-summary} shows (see \cite{fertin-combinatorics} for undefined terms), although sorting problems using interval transformations are now fairly well understood, progress on the corresponding prefix sorting problems has been lacking, with only two families whose status has been settled and no approximation ratio smaller than $2$ for those problems not known to be in \P. As a result, while the topology of the \emph{Cayley graph} generated by those operations might present attractive properties, efficient routing algorithms (which achieve exactly the same task as the sorting algorithms in genome rearrangements) are still needed for the network to be of practical interest.

\begin{table}[htbp]
\caption{Complexity of some sorting problems on permutations in the unrestricted setting and under the prefix constraint.}
\centering
\begin{tabular}{lll}
Operation & Unrestricted & Prefix-constrained \\
\hline
reversal        & \NP-hard~\cite{caprara-sorting} & \NP-hard~\cite{DBLP:journals/jcss/BulteauFR15} \\
signed reversal & in \P~\cite{DBLP:journals/jacm/HannenhalliP99} & open \\ 
double cut-and-join & \NP-hard~\cite{DBLP:journals/jco/Chen13} & open \\
signed double cut-and-join & in \P~\cite{DBLP:journals/bioinformatics/YancopoulosAF05} & open \\
exchange & in \P~\cite{knuth-art} & in \P~\cite{akers-star} \\
block-transposition & \NP-hard~\cite{DBLP:journals/siamdm/BulteauFR12} & open \\
block-interchange & in \P~\cite{Christie1996} & open \\
\hline
\end{tabular}
\label{tab:results-summary}
\end{table}

In this work, we choose to focus on the family of block-interchanges for the following reasons:
\begin{enumerate}
 \item Along with double cut-and-joins, they constitute one of the most general kind of operations on permutations, including both exchanges and block-transpositions as special cases;
 \item Their behaviour in the unrestricted setting is understood well enough that we can hope for the corresponding prefix sorting problem to be in \P;
 \item Knowledge about these operations in the prefix setting is lacking and will be needed for more general studies; for instance, rearrangement problems on strings are usually \NP-hard, and efficient algorithms to solve them exactly or approximately routinely rely on techniques developed for permutations~\cite[part II]{fertin-combinatorics}, which currently do not exist for prefix block-interchanges.
 \end{enumerate}
To the best of our knowledge, the only published work on prefix block-interchanges is by \cite{DBLP:conf/intcompsymp/ChouYCL14}, who studied them on strings and showed that binary strings can be sorted in linear time, whereas transforming two binary strings into one another using the minimum number of prefix block-interchanges is \NP-complete. Our contributions are as follows: we prove tight upper and lower bounds on the so-called \emph{prefix block-interchange distance}; we give an approximation algorithm which we prove to be a 2-approximation with respect to two different measures; we  
show how to tighten those bounds;  
and finally, we prove that the maximum value of the distance, an important parameter in some applications~\cite{lakshmivarahan-symmetry}, is $\lfloor 2n/3\rfloor$.

\section{Notation and definitions}

A \emph{permutation} is a bijective application of a set (usually $[n]=\{1, 2, \ldots, n\}$ in this work) onto itself. The \emph{symmetric group} $S_n$ is the set of all permutations of $[n]$ together with the usual function composition applied from right to left. We write permutations using lower case Greek letters, viewing them as sequences $\pi=\langle\pi_1\ \pi_2\ \cdots\ \pi_n\rangle$, where  $\pi_i=\pi(i)$, and occasionally rely on the two-line notation to denote them. The  permutation $\iota=\langle 1\ 2\ \cdots\ n\rangle$ is the \emph{identity permutation}.

Permutations are well-known to decompose in a single way into \emph{disjoint cycles} (up to the ordering of cycles and of elements within each cycle), leading to another notation for $\pi$ based on its \emph{disjoint cycle decomposition}. For instance, when $\pi=\langle 7\ 1\ 4\ 5\ 3\ 2\ 6\rangle$, the disjoint cycle notation is $\pi=(1,7,6,2)(3,4,5)$. The \emph{conjugate} of a permutation $\pi$ by a permutation $\sigma$, both in $S_n$, is the permutation $\pi^{\sigma}=\sigma\pi\sigma^{-1}$. All permutations in $S_n$ that can be obtained from one another using this operation 
form a \emph{conjugacy class} (of $S_n$), and have the same cycle structure.

\begin{definition}\label{def:block-interchange}
\cite{Christie1996} 
The \emph{block-interchange} $\beta(i, j, k, \ell)$ with $1\leq i<j\leq k<\ell\leq n+1$ 
is the permutation that exchanges the closed intervals determined respectively by $i$ and $j-1$ and by $k$ and $\ell-1$:
\[
\left\langle
\begin{array}{l}
1\ \cdots\ i-1\ \fbox{$i\ \cdots\ j-1$}\ j\ j+1\ \cdots\ k-1\ \fbox{$k\ \cdots\ \ell-1$}\ \ell\ \ell+1\ \cdots\ n \\
\raisebox{-.05in}{$1\ \cdots\ i-1\ \fbox{$k\ \cdots\ \ell-1$}\ j\ j+1\ \cdots\ k-1\ \fbox{$i\ \cdots\ j-1$}\ \ell\ \ell+1\ \cdots\ n$} \\
\end{array}
\right\rangle.
\] 
\end{definition}

Block-interchanges generalise several well-studied operations: when $j=k$, the resulting operation exchanges two adjacent intervals, and is known as a \emph{(block-)transposition}~\cite{bafna-transpositions}; when $j=i+1$ and $\ell=k+1$, the resulting operation swaps elements in respective positions $i$ and $k$, and is called an \emph{exchange} (or \emph{(algebraic) transposition}); finally, when $i=1$, the resulting operation is called a \emph{prefix block-interchange}; prefix block-transpositions and prefix exchanges are defined analogously. We study the following problem.

\decisionproblem{sorting by prefix block-interchanges (sbpbi)}{a permutation $\pi$ in $S_n$, a number $K\in\mathbb{N}$.}{
is there a sequence of at most $K$ prefix block-interchanges that sorts $\pi$?
}

The length of a shortest sorting sequence of prefix block-interchanges for a permutation $\pi$ is its \emph{(prefix block-interchange) distance}, which we denote $pbid(\pi)$. Distances based on other operations are defined similarly.

\section{A 2-approximation based on the breakpoint graph}\label{sec:first-2-approx}

We give in this section a 2-approximation algorithm for {\SBPBIshort} based on the \emph{breakpoint graph}. We first use this structure in \autoref{sec:ub} to derive an upper bound on $pbid$ and present our algorithm, then derive a lower bound in \autoref{sec:lb} which allows us to prove its performance guarantee. The breakpoint graph is well-known to be equivalent~\cite{labarre-new} to another structure known as the \emph{cycle graph}~\cite{bafna-transpositions}, 
which allows us to use results based on either graph 
indifferently.

\begin{definition}\label{def:breakpoint-graph}
\cite{DBLP:journals/jacm/HannenhalliP99} 
For any $\pi$ in $S_n$, let $\pi'$ be the permutation of $\{0, 1, 2, \ldots, 2n+1\}$ defined by $\pi'_0=0$, $\pi'_{2n+1}=2n+1$, and $(\pi'_{2i-1}, \pi'_{2i})=(2\pi_{i}-1, 2\pi_{i})$ for $1\le i\le n$. 
The \emph{breakpoint graph} of $\pi$ is the undirected edge-bicoloured graph $G(\pi)=(V, E_b\cup E_g)$ whose vertex set is formed by the elements of $\pi'$ ordered by position and whose edge set consists of:
\begin{itemize}
 \item $E_b=\{\{\pi'_{2i}, \pi'_{2i+1}\}\ |\ \forall\ 0\le i\le n\}$, called the set of \emph{black} edges;
 \item $E_g=\{\{{2i}, {2i+1}\}\ |\ \forall\ 0\le i\le n\}$, called the set of \emph{grey} edges.
\end{itemize} 
\end{definition}

\autoref{fig:breakpoint-graph} shows an example of a breakpoint graph. Since $G(\pi)$ is $2$-regular, it decomposes in a single way into edge-disjoint cycles
which alternate black and grey edges. 
The \emph{length} of 
a cycle in $G(\pi)$ is the number of black edges it contains, and a \emph{$k$-cycle} in $G(\pi)$ is 
a 
cycle of length $k$. We let $c(G(\pi))$ (resp. $c_k(G(\pi))$) denote the number of cycles (resp. $k$-cycles) in $G(\pi)$, and refer to cycles of length one as \emph{trivial cycles}.

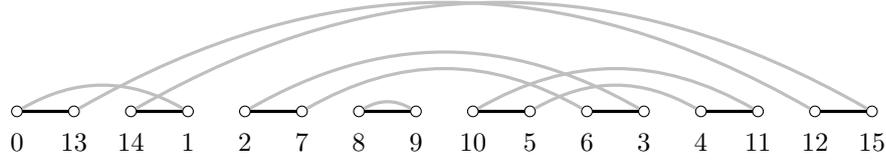
\begin{figure}[htbp]
\centering
\begin{tikzpicture}[scale=.75]
    \foreach [count=\i] \color in {black,black,black,black,black,black,black,black}
        \draw[color=\color,very thick] (2*\i-2, 0) -- (2*\i-1, 0);
    \foreach [count=\i from 0] \name in {0,13,14,1,2,7,8,9,10,5,6,3,4,11,12,15}
    \node[vertex] (\name) at (\i,0) [draw,circle] 
    [label=below:$\strut\name$]
    {};

    \begin{scope}[color=gray!50, very thick]
        \draw (0) to [bend left] (1);
        \draw (2) to [bend left] (3);
        \draw (4) to [bend right] (5);
        \draw (6) to [bend right] (7);
         \draw (8) to [bend left] (9);
        \draw (10) to [bend left] (11);
        \draw (12) to [bend right] (13);
        \draw (14) to [bend left] (15);
    \end{scope}

\end{tikzpicture}
\caption{The breakpoint graph of $\langle$7 1 4 5 3 2 6$\rangle$.}
\label{fig:breakpoint-graph}
\end{figure}

A crucial insight of strategies based on the breakpoint graph is the observation that the transformations that we apply never affect grey edges, whereas they ``cut'' black edges and replace them with new black edges. This point of view conveniently allows us to define block-interchanges in terms of the black edges on which they \emph{act}: using the notation $b_i=\{\pi_{2i-2}, \pi_{2i-1}\}$ for a black edge, a quadruplet $(b_i, b_j, b_k, b_\ell)$ of black edges with $i<j\le k<\ell$ naturally defines the block-interchange $\beta(i, j, k, \ell)$ and conversely. 

\subsection{An upper bound based on the breakpoint graph}\label{sec:ub}

The following quantity, defined for any $\pi$ in $S_n$, has been shown to be a tight\footnote{Here and in the rest of the text, ``tight'' means that equality is achieved by \emph{some} but \emph{not all} instances.} lower bound on the prefix block-transposition distance~\cite{labarre-lower} :
\begin{equation}\label{eqn:labarre-lower-bound-ptd}
g(\pi)= \frac{n+1+c(G(\pi))}{2} - c_1(G(\pi))-\left\{
\begin{array}{ll}
0 & \mbox{if } \pi_1=1,\\
1 & \mbox{otherwise.}
\end{array}
\right. 
\end{equation}

We prove in \autoref{thm:first-upper-bound-on-pbid} that this quantity is also an \emph{upper} bound on the prefix block-interchange distance. To that end, we use the following notation, based on the one introduced in \cite{bafna-transpositions}; for any two permutations $\pi$ and $\sigma$, define:
\begin{itemize}
 \item $\Delta c(\pi,\sigma)=c(G(\sigma))-c(G(\pi))$,
 \item $\Delta c_1(\pi,\sigma)=c_1(G(\sigma))-c_1(G(\pi))$,
 \item $\Delta f(\pi,\sigma)=f(\sigma)-f(\pi)$, where $f(\pi)=0$ if $\pi$ \emph{fixes} 1 (i.e. $\pi_1=1$) and $1$ otherwise, and 
 \item $\Delta g(\pi,\sigma)=g(\sigma)-g(\pi)$.
\end{itemize}
These parameters allow us to obtain the following expression, which will be useful in our proofs:
\begin{equation}\label{eqn:delta-g}
\Delta g(\pi,\sigma)=\Delta c(\pi,\sigma)/2-\Delta c_1(\pi,\sigma)-\Delta f(\pi,\sigma).
\end{equation}

We start by proving in \autoref{lemma:pi-neq-1} the existence of a prefix block-interchange that decreases $g(\pi)$ by at least one if $\pi_1\neq 1$. The proof uses the following structural result, where grey edges $\{\pi'_a, \pi'_b\}$ and $\{\pi'_c, \pi'_d\}$ (with $a<b$ and $c<d$) are said to \emph{intersect} if $a<c<b<d$ or $c<a<d<b$.

\begin{lemma}\label{every-grey-edge-intersects-another-one}
 \cite{DBLP:journals/jacm/HannenhalliP99} For every permutation $\pi$, let $e$ be a grey edge in a nontrivial cycle of $G(\pi)$; then there exists another grey edge $e'$ in $G(\pi)$ that intersects $e$.
\end{lemma}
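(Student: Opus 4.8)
The plan is to argue by contradiction. Suppose $e$ is a grey edge lying in a nontrivial cycle of $G(\pi)$ and that \emph{no} other grey edge intersects it; I will deduce that $e$ must actually lie in a trivial cycle, which is absurd. Write $e=\{\pi'_a,\pi'_b\}$ with $a<b$, so that $a$ and $b$ are positions; by \autoref{def:breakpoint-graph} the values sitting there are two consecutive integers, say $2i$ and $2i+1$. Note that since $G(\pi)$ is $2$-regular with colour-alternating cycles, each of the positions $a$ and $b$ meets exactly one grey edge, namely $e$.

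The first step is to show that the non-intersection hypothesis forces the open interval of positions $I=\{a+1,\dots,b-1\}$ to be closed under grey edges. Take any grey edge $e'=\{\pi'_c,\pi'_d\}$, $c<d$, with exactly one of $c,d$ lying in $I$. Its endpoints cannot be $a$ or $b$ (those are used only by $e$), so the endpoint outside $I$ is either $<a$ or $>b$, and checking the two possibilities against the definition of intersection yields $a<c<b<d$ or $c<a<d<b$ — i.e. $e'$ intersects $e$, a contradiction. Hence every grey edge meeting $I$ has both endpoints in $I$, and therefore the set $V'$ of values occupying positions in $I$ is a union of pairs $\{2m,2m+1\}$.

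The second and crucial step extracts a contradiction from the rigidity of $G(\pi)$: the defining relation $(\pi'_{2j-1},\pi'_{2j})=(2\pi_j-1,2\pi_j)$ means that for every $m\in\{1,\dots,n\}$ the value $2m-1$ occupies the position immediately left of the value $2m$, while value $0$ is at position $0$ and value $2n+1$ at position $2n+1$. Let $V=V'\cup\{2i,2i+1\}$ be the set of values at positions in $[a,b]$; I claim $V=\{2i,2i+1\}$. For $\max V=2i+1$: if some value exceeding $2i+1$ lay in $V$, pick the largest such value $M\in V'$; if $M=2m$ is even, grey-closedness of $I$ puts $2m+1>M$ into $V'\subseteq V$, contradicting maximality, and if $M=2m+1$ is odd then $M<2n+1$ (otherwise value $2n+1$ would sit at a position in $I$, forcing $b>2n+1$), so value $2m+2$ occupies the position just right of $M$'s position, which still lies in $[a,b]$, again contradicting maximality. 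A symmetric argument, using that ``$2m-1$ is just left of $2m$'', gives $\min V=2i$. Hence $V\subseteq\{2i,2i+1\}$, so $V=\{2i,2i+1\}$, $V'=\emptyset$, and $b=a+1$.

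It remains to treat $b=a+1$, where positions $a$ and $a+1$ carry the values $2i$ and $2i+1$. If $a$ is even then $\{a,a+1\}$ is a black edge, so $e$ together with it forms a $1$-cycle, contradicting that $e$ lies in a nontrivial cycle; if $a$ is odd then $\{a,a+1\}=\{2j-1,2j\}$ for some $j$, so those positions carry $2\pi_j-1$ and $2\pi_j$, and equating even parts forces $\pi_j=i$, whence the odd part is $2i-1\neq 2i+1$ — impossible. This finishes the proof. I expect the main obstacle to be the second step: recognising that the ``value $2m-1$ precedes value $2m$'' structure of the breakpoint graph is exactly what drives the argument, and being careful with the boundary cases (the values $0$ and $2n+1$, or a neighbouring value landing exactly at position $a$ or $b$) so the max/min extraction goes through; by contrast the first step is routine once one remembers that \emph{nested} grey edges do not count as intersecting, and the last step is a one-line parity check.
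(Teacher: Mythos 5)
Your proof is correct. Note that the paper itself gives no proof of this lemma—it is imported from \cite{DBLP:journals/jacm/HannenhalliP99}—so there is no in-paper argument to compare against; what you have produced is a self-contained elementary proof from \autoref{def:breakpoint-graph}. Your three steps are all sound: (i) if no grey edge intersects $e=\{\pi'_a,\pi'_b\}$, then no grey edge can have exactly one endpoint-position strictly between $a$ and $b$ (its outside endpoint would have to lie before $a$ or after $b$, which is precisely the definition of intersection), so the values occupying the open interval are closed under the grey pairing $\{2m,2m+1\}$; (ii) the rigidity of the construction—value $2m-1$ sits immediately to the left of value $2m$, with $0$ and $2n+1$ pinned at the two extreme positions—then forces the largest and smallest values appearing on $[a,b]$ to be $2i+1$ and $2i$, hence $b=a+1$ and the interval is empty; (iii) the parity check on $a$ shows that the only consistent placement makes $\{\pi'_a,\pi'_{a+1}\}$ simultaneously a black edge, i.e.\ $e$ lies in a $1$-cycle, contradicting nontriviality (the case $a$ odd being impossible since positions $2j-1,2j$ carry $2\pi_j-1,2\pi_j$, which cannot equal $\{2i,2i+1\}$). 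The boundary issues you flag are handled correctly: the value $2n+1$ cannot occupy a position in the open interval, value $0$ cannot either, and even when the neighbouring value $2m+2$ (resp.\ $2m-1$) lands exactly on position $b$ (resp.\ $a$), it still lies in your set $V$ and violates the extremality of $M$, so the contradiction goes through in every subcase. In spirit this reconstructs the classical crossing/balancedness argument from the genome-rearrangement literature, and it is correct as written.
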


We refer to the grey edge of $G(\pi)$ that contains $\pi'_1$ as the \emph{first} grey edge, and to the cycle that contains $0$ as the \emph{leftmost cycle}. Our figures represent alternating subpaths (i.e., paths that alternate black and grey edges) as dotted edges; therefore, such a dotted edge might correspond to a single grey edge, or to a black edge framed by two grey edges, and so on. 

\begin{lemma}\label{lemma:pi-neq-1}
 For any $\pi$ in $S_n$: if $\pi_1\neq 1$, then there exists a prefix block-interchange $\beta$ such that $\Delta c(\pi, \pi\beta)= 2$, $\Delta c_1(\pi, \pi\beta)\ge 2$, and $\Delta g(\pi, \pi\beta)\le -1$.
\end{lemma}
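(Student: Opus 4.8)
The plan is to build an explicit prefix block-interchange and verify the three claims inside the breakpoint graph $G(\pi)$. Since a prefix block-interchange must act on $b_1=\{0,\pi'_1\}=\{0,2\pi_1-1\}$ and $\pi_1\neq 1$, this black edge does not form a trivial cycle, so it lies in a nontrivial cycle $C$ --- the leftmost one --- whose grey edge $\{0,1\}$ is, by \autoref{every-grey-edge-intersects-another-one}, crossed by another grey edge. The key device is that $\beta(1,j,k,\ell)$ acts on $G(\pi)$ through two independent ``swaps'': it keeps the left endpoints of $b_1,b_j,b_k,b_\ell$ fixed and exchanges the right endpoint of $b_1$ with that of $b_k$, and that of $b_j$ with that of $b_\ell$. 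Exchanging the right endpoints of two black edges in a common cycle splits that cycle in two, whereas doing so for black edges in two distinct cycles merges them; hence $\Delta c(\pi,\pi\beta)=2$ exactly when both swaps act inside a single cycle, and each resulting new black edge that coincides, as a vertex pair, with a grey edge is a newly created trivial cycle.

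The primary construction is a prefix block-transposition. Put $q=\pi^{-1}(\pi_1-1)$, $\ell=q+1$ (reading $\pi_{n+1}$ as $n+1$), and $j=k=\pi^{-1}(\pi_{q+1}-1)+1$, so $\beta(1,j,j,q+1)$ swaps $\langle\pi_1\ \cdots\ \pi_{j-1}\rangle$ with $\langle\pi_j\ \cdots\ \pi_q\rangle$. One grey step along $C$ from the right endpoint $2\pi_1-1$ of $b_1$ reaches the left endpoint $2\pi_q=2(\pi_1-1)$ of $b_{q+1}$, and one more from the right endpoint $2\pi_{q+1}-1$ of $b_{q+1}$ reaches the left endpoint $2(\pi_{q+1}-1)$ of $b_j$; so $b_1,b_{q+1},b_j$ are consecutive black edges of $C$ and the transposition splits $C$ into exactly three cycles. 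Two of the three new edges, $\{2\pi_q,2\pi_1-1\}$ and $\{2\pi_{j-1},2\pi_{q+1}-1\}$, are the grey edges $\{2(\pi_1-1),2(\pi_1-1)+1\}$ and $\{2(\pi_{q+1}-1),2(\pi_{q+1}-1)+1\}$, hence trivial cycles; so $\Delta c=2$ and $\Delta c_1\geq 2$, and the third new edge $\{0,2\pi_j-1\}$ is itself a grey edge exactly when $\pi_j=1$, i.e. exactly when the move brings $1$ to the front, in which case a third trivial cycle appears. This is all valid provided $1<j<q+1$, which holds unless $\pi_{q+1}=1$ or the element $\pi_{q+1}-1$ lies at a position $\geq q+2$.

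For those two exceptional families I would instead use a genuine block-interchange with $k=q+1$: the new edge $\{2\pi_{k-1},2\pi_1-1\}$ is then trivial because $\pi_{k-1}=\pi_q=\pi_1-1$, so the first swap splits $C$, and it remains to choose $b_j$ (at a position $\leq q$) and $b_\ell$ (at a position $\geq q+2$) lying in a common cycle so that the second swap is a split which also collapses one of its new edges $\{2\pi_{\ell-1},2\pi_j-1\}$, $\{2\pi_{j-1},2\pi_\ell-1\}$ to a grey edge; a short counting argument on the elements in positions $1,\dots,q$ --- using \autoref{every-grey-edge-intersects-another-one} to guarantee the relevant cycle is nontrivial and suitably positioned --- produces such a pair, with the sole residual configuration (the first $q$ entries being a permutation of $\{1,\dots,q\}$) handled by induction on $n$, since there the leftmost cycle and every block-interchange confined to the first $q$ positions live inside a genuine breakpoint graph on fewer points. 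Finally, substituting $\Delta c=2$ into \eqref{eqn:delta-g} gives $\Delta g(\pi,\pi\beta)=1-\Delta c_1-\Delta f$ with $\Delta f\in\{-1,0\}$: when $(\pi\beta)_1\neq 1$ we have $\Delta f=0$ and $\Delta c_1\geq 2$ already yields $\Delta g\leq -1$, while when $(\pi\beta)_1=1$ the construction produces a third trivial cycle, so $\Delta c_1\geq 3$ offsets $\Delta f=-1$ and again $\Delta g\leq -1$.

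The step I expect to be the real obstacle is the bookkeeping for the second swap in the exceptional families: guaranteeing that the two black edges it acts on share a cycle (so it splits rather than merges) while simultaneously forcing one of its new edges to become trivial, together with a clean, exhaustive enumeration of the degenerate prefixes and boundary indices ($\ell=n+1$, length-$2$ cycles, the element $\pi_{q+1}-1$ far to the right, and so on). The two-swap decomposition keeps the cycle count exact; tracking cycle membership across the first swap is where the care is needed.
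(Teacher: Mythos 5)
Your primary construction is sound and takes a genuinely different route from the paper (which invokes \autoref{every-grey-edge-intersects-another-one} to get a grey edge crossing the first grey edge and then does a configuration case analysis): the prefix transposition $\beta(1,j,j,q+1)$ with $q=\pi^{-1}(\pi_1-1)$ and $j-1=\pi^{-1}(\pi_{q+1}-1)$ does act on three consecutive black edges of the leftmost cycle, splits it into three cycles two of which are trivial, and your closing computation of $\Delta f$ against the possible third trivial cycle correctly gives $\Delta g\le -1$. The genuine gap is exactly where you feared: the exceptional families are not actually proved. The ``short counting argument'' is only asserted, and the fallback you do spell out---induction on $n$ when $\{\pi_1,\dots,\pi_q\}=\{1,\dots,q\}$---does not work as stated, because the inductive guarantee is relative to the breakpoint graph of the length-$q$ prefix, and the move it naturally returns creates its second $1$-cycle as the adjacency between $q$ and the \emph{sentinel} $q+1$; in the residual case one has $\pi_{q+1}\ge q+2$, so that adjacency does not exist in $\pi\beta$ and neither $\Delta c_1\ge 2$ nor $\Delta g\le -1$ transfers. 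Concretely, for $\pi=\langle 2\ 4\ 3\ 1\ 6\ 5\rangle$ the recursion on $\langle 2\ 4\ 3\ 1\rangle$ yields $\beta(1,3,3,5)$, whose lift gives $\pi\beta=\langle 3\ 1\ 2\ 4\ 6\ 5\rangle$ with $\Delta c=2$ but $\Delta c_1=1$ and $\Delta g=0$.

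The gap is repairable inside your own framework, and no induction is needed. If your ``option 2'' fails, then every value in positions $2,\dots,q$ has its predecessor in positions $0,\dots,q$, and since $\pi_1-1=\pi_q$ the same holds for $\pi_1$; hence $\{\pi_1,\dots,\pi_q\}$ is closed under taking predecessors and equals $\{1,\dots,q\}$. In the exceptional case this forces $\pi_{q+1}\ge q+2$, so the value $q+1$ sits at a position $\ge q+2$ while $q$ sits at a position $\le q-1$ (it cannot be $\pi_q=\pi_1-1$), i.e.\ your ``option 1'' applies with $v=q$. In either option the witnessing grey edge joins an endpoint of $b_j$ to an endpoint of $b_\ell$, so these two black edges lie in a common cycle which the first swap does not touch; both swaps are splits, $\Delta c=2$, at least two new trivial cycles appear (three when $1$ is brought to the front, covering $\Delta f=-1$), and your $\Delta g$ accounting goes through. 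With that paragraph written out in place of the sketch and the broken induction, your argument is a complete and valid alternative to the paper's proof; on the example above it produces $\beta(1,3,5,6)$, with $\pi\beta=\langle 6\ 3\ 1\ 2\ 4\ 5\rangle$ and $\Delta g=-1$.
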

\begin{proof}
\autoref{every-grey-edge-intersects-another-one} guarantees the existence of a grey edge $e'$ that intersects the first grey edge; moreover, the endpoints of $e'$ ordered by position connect elements whose values are either in decreasing (case 1 below) or increasing (case 2 below) order. In both cases, if $e'$ belongs to the leftmost cycle, then there exists a prefix block-interchange that extracts two $1$-cycles (we distinguish an additional third case where $e'$ and the first grey edge share the endpoints of a black edge):
 
 \begin{enumerate}
  \item 
\begin{tikzpicture}[scale=.75,transform shape,baseline]

    \foreach [count=\i] \color in {black,black,black,black}
        \draw[color=\color,very thick] (2*\i-2, 0) -- (2*\i-1, 0);

    \foreach [count=\i from 0] \name in {a,b,c,d,e,f,g,h}
    \node[vertex] (\name) at (\i,0) [draw,circle] 
    {};
    \foreach \name/\lab in {a/$0$,b/$\pi'_1$,c/$\pi'_{2i-2}$,d/$\pi'_{2i-1}$,e/$\pi'_{2j-2}$,f/$\pi'_{2j-1}$,g/$\pi'_{2k-2}$,h/$\pi'_{2k-1}$}
    \node (\name2) at (\name) [label=below:\strut\lab] {};


    \begin{scope}[color=gray!50, very thick]
        \draw[] (d) to [bend left] (g);
        \draw[dotted] (a) to [bend left] (h);
        \draw[] (b) to [bend left] (e);
        \draw[dotted] (c) to [bend left] (f);
    \end{scope}
    
    \draw (.5, -.75) rectangle (2.45, -.2);
    \draw (4.5, -.75) rectangle (6.5, -.2);
\end{tikzpicture} 
\begin{tikzpicture}[scale=.75,transform shape,baseline]
\draw[->, >=stealth] (0, 0) -- (1, 0);
\end{tikzpicture}
\begin{tikzpicture}[scale=.75,transform shape,baseline]

    \foreach [count=\i] \color in {black,black,black,black}
        \draw[color=\color,very thick] (2*\i-2, 0) -- (2*\i-1, 0);

    \foreach [count=\i from 0] \name in {a,f,g,d,e,b,c,h}
    \node[vertex] (\name) at (\i,0) [draw,circle] 
    {};
    \foreach \name/\lab in {a/$0$,b/$\pi'_1$,c/$\pi'_{2i-2}$,d/$\pi'_{2i-1}$,e/$\pi'_{2j-2}$,f/$\pi'_{2j-1}$,g/$\pi'_{2k-2}$,h/$\pi'_{2k-1}$}
    \node at (\name) [label=below:\strut\lab] {};


    \begin{scope}[color=gray!50, very thick]
        \draw[] (d) to [bend right] (g);
        \draw[dotted] (a) to [bend left] (h);
        \draw[] (b) to [bend right] (e);
        \draw[dotted] (c) to [bend right] (f);
        
    \end{scope}
\end{tikzpicture} 

\item 

\begin{tikzpicture}[scale=.75,transform shape,baseline]

    \foreach [count=\i] \color in {black,black,black,black}
        \draw[color=\color,very thick] (2*\i-2, 0) -- (2*\i-1, 0);

    \foreach [count=\i from 0] \name in {a,b,c,d,e,f,g,h}
    \node[vertex] (\name) at (\i,0) [draw,circle] 
    {};
    \foreach \name/\lab in {a/$0$,b/$\pi'_1$,c/$\pi'_{2i-2}$,d/$\pi'_{2i-1}$,e/$\pi'_{2j-2}$,f/$\pi'_{2j-1}$,g/$\pi'_{2k-2}$,h/$\pi'_{2k-1}$}
    \node (\name2) at (\name) [label=below:\strut\lab] {};


    \begin{scope}[color=gray!50, very thick]
        \draw[dotted] (f) to [bend left] (g);
        \draw[dotted] (a) to [bend left] (d);
        \draw[] (b) to [bend left] (e);
        \draw[] (c) to [bend left] (h);
    \end{scope}
    
    \draw (.5, -.75) rectangle (2.45, -.2);
    \draw (4.5, -.75) rectangle (6.5, -.2);
\end{tikzpicture} 
\begin{tikzpicture}[scale=.75,transform shape,baseline]
\draw[->, >=stealth] (0, 0) -- (1, 0);
\end{tikzpicture} 
\begin{tikzpicture}[scale=.75,transform shape,baseline]

    \foreach [count=\i] \color in {black,black,black,black}
        \draw[color=\color,very thick] (2*\i-2, 0) -- (2*\i-1, 0);

    \foreach [count=\i from 0] \name in {a,f,g,d,e,b,c,h}
    \node[vertex] (\name) at (\i,0) [draw,circle] 
    {};
    \foreach \name/\lab in {a/$0$,b/$\pi'_1$,c/$\pi'_{2i-2}$,d/$\pi'_{2i-1}$,e/$\pi'_{2j-2}$,f/$\pi'_{2j-1}$,g/$\pi'_{2k-2}$,h/$\pi'_{2k-1}$}
    \node at (\name) [label=below:\strut\lab] {};


    \begin{scope}[color=gray!50, very thick]
        \draw[dotted] (f) to [bend left] (g);
        \draw[dotted] (a) to [bend left] (d);
        \draw[] (b) to [bend right] (e);
        \draw[] (c) to [bend left] (h);
    \end{scope}
\end{tikzpicture} 
\item \begin{tikzpicture}[scale=.75,transform shape,baseline]

    \foreach [count=\i] \color in {black,black,black}
        \draw[color=\color,very thick] (2*\i-2, 0) -- (2*\i-1, 0);

    \foreach [count=\i from 0] \name in {a,b,c,d,e,f}
    \node[vertex] (\name) at (\i,0) [draw,circle] 
    {};
    \foreach \name/\lab in {a/$0$,b/$\pi'_1$,c/$\pi'_{2i-2}$,d/$\pi'_{2i-1}$,e/$\pi'_{2j-2}$,f/$\pi'_{2j-1}$}
    \node (\name2) at (\name) [label=below:\strut\lab] {};


    \begin{scope}[color=gray!50, very thick]
        \draw[dotted] (a) to [bend left] (d);
        \draw[] (b) to [bend left] (e);
        \draw[] (c) to [bend left] (f);
    \end{scope}
    
    \draw (.5, -.75) rectangle (2.45, -.2);
    \draw (2.55, -.75) rectangle (4.5, -.2);
\end{tikzpicture} 
\hspace{1.4cm}
\begin{tikzpicture}[scale=.75,transform shape,baseline]
\draw[->, >=stealth] (0, 0) -- (1, 0);
\end{tikzpicture} 
\begin{tikzpicture}[scale=.75,transform shape,baseline]

    \foreach [count=\i] \color in {black,black,black}
        \draw[color=\color,very thick] (2*\i-2, 0) -- (2*\i-1, 0);

    \foreach [count=\i from 0] \name in {a,d,e,b,c,f}
    \node[vertex] (\name) at (\i,0) [draw,circle] 
    {};
    \foreach \name/\lab in {a/$0$,b/$\pi'_1$,c/$\pi'_{2i-2}$,d/$\pi'_{2i-1}$,e/$\pi'_{2j-2}$,f/$\pi'_{2j-1}$}
    \node at (\name) [label=below:\strut\lab] {};


    \begin{scope}[color=gray!50, very thick]
        \draw[dotted] (a) to [bend left] (d);
        \draw[] (b) to [bend right] (e);
        \draw[] (c) to [bend left] (f);
    \end{scope}
\end{tikzpicture} 
 \end{enumerate}

Otherwise, all edges that intersect the first grey edge belong to cycles other than the leftmost cycle. One of those edges, which belongs to some cycle $C$, must be in the same configuration as in case 1 above: indeed, if $C$ contains a grey edge that takes us from the interval covered by the first grey edge $(1, y)$ to a vertex located after $y$, then $C$ must also contain a grey edge that takes us back before $y$. 
Without loss of generality, we thus assume that $e'$ is such an edge, and therefore a prefix block-interchange that extracts two new $1$-cycles can also be applied:
 
\begin{enumerate}\setcounter{enumi}{3} 
  \item

\begin{tikzpicture}[scale=.75,transform shape,baseline]

    \foreach [count=\i] \color in {black,black,black,black}
        \draw[color=\color,very thick] (2*\i-2, 0) -- (2*\i-1, 0);

    \foreach [count=\i from 0] \name in {a,b,c,d,e,f,g,h}
    \node[vertex] (\name) at (\i,0) [draw,circle] 
    {};
    \foreach \name/\lab in {a/$0$,b/$\pi'_1$,c/$\pi'_{2i-2}$,d/$\pi'_{2i-1}$,e/$\pi'_{2j-2}$,f/$\pi'_{2j-1}$,g/$\pi'_{2k-2}$,h/$\pi'_{2k-1}$}
    \node at (\name) [label=below:\strut\lab] {};


    \begin{scope}[color=gray!50, very thick]
        \draw[] (d) to [bend left] (g);
        \draw[dotted] (a) to [bend left] (f);
        \draw[] (b) to [bend left] (e);
        \draw[dotted] (c) to [bend left] (h);
        
    \end{scope}
    \draw (.5, -.75) rectangle (2.5, -.2);
    \draw (4.5, -.75) rectangle (6.5, -.2);
\end{tikzpicture} 
\begin{tikzpicture}[scale=.75,transform shape,baseline]
\draw[->, >=stealth] (0, 0) -- (1, 0);
\end{tikzpicture}
\begin{tikzpicture}[scale=.75,transform shape,baseline]

    \foreach [count=\i] \color in {black,black,black,black}
        \draw[color=\color,very thick] (2*\i-2, 0) -- (2*\i-1, 0);

    \foreach [count=\i from 0] \name in {a,f,g,d,e,b,c,h}
    \node[vertex] (\name) at (\i,0) [draw,circle] 
    {};
    \foreach \name/\lab in {a/$0$,b/$\pi'_1$,c/$\pi'_{2i-2}$,d/$\pi'_{2i-1}$,e/$\pi'_{2j-2}$,f/$\pi'_{2j-1}$,g/$\pi'_{2k-2}$,h/$\pi'_{2k-1}$}
    \node at (\name) [label=below:\strut\lab] {};


    \begin{scope}[color=gray!50, very thick]
        \draw[] (d) to [bend right] (g);
        \draw[dotted] (a) to [bend left] (f);
        \draw[] (b) to [bend right] (e);
        \draw[dotted] (c) to [bend left] (h);
        
    \end{scope}
\end{tikzpicture} 
 \end{enumerate}

In all four cases, we have $\Delta c(\pi, \pi\beta)=2$ and $\Delta c_1(\pi,  \pi\beta)\ge 2$. The exact value of $\Delta g(\pi, \pi\beta)$ will vary depending on the above configurations and is computed using  \autoref{eqn:delta-g}:
 \begin{enumerate}
  \item the leftmost cycle in $G(\pi\beta)$ is nontrivial, so $\Delta c_1(\pi, \pi\beta)=2$, $\Delta f(\pi, \pi\beta)=0$ and $\Delta g(\pi, \pi\beta)=-1$; 
\item for exactly the same reasons as case 1, we obtain $\Delta g(\pi, \pi\beta)=-1$; 
\item we have two possible subcases: 
\begin{enumerate}
 \item if the leftmost cycle in $G(\pi\beta)$ is nontrivial, then $\Delta c(\pi, \pi\beta)=2$ and $\Delta f(\pi, \pi\beta)=0$, so $\Delta g(\pi, \pi\beta)=-1$; 
 \item otherwise, $\Delta c_1(\pi, \pi\beta)=3$ and $\Delta f(\pi, \pi\beta)=-1$, which in turn implies $\Delta g(\pi, \pi\beta)= -1$. 
\end{enumerate}
\item we have three possible subcases:
 \begin{enumerate}
  \item $\Delta c_1(\pi, \pi\beta)=2$: the leftmost cycle in $G(\pi\beta)$ is nontrivial, so $\Delta f(\pi, \pi\beta)=0$ and $\Delta g(\pi, \pi\beta)=-1$; 
  \item $\Delta c_1(\pi, \pi\beta)=3$: the leftmost cycle in $G(\pi\beta)$ may or may not be trivial, so $\Delta f(\pi, \pi\beta)\le 0$ and $\Delta g(\pi, \pi\beta)\le -1$; 
  \item $\Delta c_1(\pi, \pi\beta)=4$: the leftmost cycle in $G(\pi\beta)$ is trivial, so $\Delta f(\pi, \pi\beta)=-1$ and $\Delta g(\pi, \pi\beta)= -2$.
 \end{enumerate}
 \end{enumerate}
\end{proof}

We handle the case where $\pi_1=1$ in the proof of our upper bound below.

\begin{theorem}\label{thm:first-upper-bound-on-pbid}
For any $\pi$ in $S_n$, we have $pbid(\pi)\leq g(\pi)$.
\end{theorem}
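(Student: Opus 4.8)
The plan is to establish $pbid(\pi)\le g(\pi)$ by strong induction on $g(\pi)$, which is a nonnegative integer: it is nonnegative because it lower-bounds a distance~\cite{labarre-lower}, and it is an integer because $n+1-c(G(\pi))$ is always even (a standard fact about breakpoint graphs, underlying the block-interchange distance formula of~\cite{Christie1996}). The inductive engine will be the statement that if $\pi\ne\iota$ then there is a prefix block-interchange $\beta$ with $\Delta g(\pi,\pi\beta)\le -1$, from which $pbid(\pi)\le 1+pbid(\pi\beta)\le 1+g(\pi\beta)\le g(\pi)$ follows by applying the induction hypothesis to $\pi\beta$ (whose $g$-value lies in $[0,g(\pi)-1]$).

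For the base case $g(\pi)=0$ I would show $\pi=\iota$: if $\pi_1\ne 1$, \autoref{lemma:pi-neq-1} would give a prefix block-interchange $\beta$ with $g(\pi\beta)\le -1$, contradicting nonnegativity; hence $\pi_1=1$, so $f(\pi)=0$, and then $(n+1+c(G(\pi)))/2=c_1(G(\pi))\le c(G(\pi))$ forces $c(G(\pi))=n+1$, meaning every cycle of $G(\pi)$ is trivial and $\pi=\iota$. For the inductive step with $g(\pi)\ge 1$ (so $\pi\ne\iota$), the case $\pi_1\ne 1$ is handed to us directly by \autoref{lemma:pi-neq-1}.

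The case $\pi_1=1$ with $\pi\ne\iota$ is where I expect the real work to be, precisely because \autoref{every-grey-edge-intersects-another-one} is of no help there: the first grey edge lies in the trivial leftmost cycle and intersects nothing. I would first prove the elementary fact that $\pi_1=1$ and $\pi\ne\iota$ force positions $a<b$ with $b-a\ge 2$ and $\pi_b=\pi_a+1$ to exist --- otherwise the position of each value $v+1$ would exceed that of $v$ by at most one, which together with $\pi_1=1$ forces $\pi_v=v$ for all $v$. I would then consider the prefix block-transposition $\tau=\beta(1,a+1,a+1,b)$ and show, by inspecting $G(\pi)$, that $\Delta c(\pi,\pi\tau)=0$, $\Delta c_1(\pi,\pi\tau)=0$ and $\Delta f(\pi,\pi\tau)=1$, so that \autoref{eqn:delta-g} gives $\Delta g(\pi,\pi\tau)=-1$.

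Verifying those three equalities is the technical heart of the argument. The black edges cut by $\tau$ are $b_1$ --- a trivial cycle on its own, since $\pi_1=1$ --- together with $b_{a+1}$ and $b_b$; the grey edge $\{2\pi_a,2\pi_a+1\}$ joins an endpoint of $b_{a+1}$ to an endpoint of $b_b$, so those two black edges lie in one common cycle and neither of them is trivial (either would contradict $b-a\ge 2$). After $\tau$, that grey edge forms a trivial cycle by itself, while the two new black edges incident to the vertices $0$ and $1$ --- glued together by the surviving grey edge $\{0,1\}$ --- merely reroute what used to be the leftmost cycle through the remainder of the old cycle; hence neither the number of cycles nor the number of trivial cycles changes. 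Finally, $(\pi\tau)_1=\pi_{a+1}\ne 1$ gives $\Delta f(\pi,\pi\tau)=1$, and the induction closes.
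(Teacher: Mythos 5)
Your proof is correct and follows essentially the same route as the paper's: the case $\pi_1\neq 1$ is delegated to \autoref{lemma:pi-neq-1}, and for $\pi_1=1$, $\pi\neq\iota$ you apply exactly the paper's operation $\beta(1,i,i,j)$ (your $\beta(1,a+1,a+1,b)$, the pair $(a,b)$ being just the combinatorial translation of a grey edge in a nontrivial cycle) and carry out the same accounting $\Delta c=\Delta c_1=0$, $\Delta f=1$, hence $\Delta g=-1$; your explicit induction on $g$ merely spells out what the paper leaves implicit. The one quibble is your justification of $g(\pi)\ge 0$ (being a lower bound on a nonnegative distance does not by itself imply nonnegativity), but the fact itself is immediate, since $n+1\ge 2c(G(\pi))-c_1(G(\pi))$ and $c(G(\pi))>c_1(G(\pi))$ whenever $\pi_1\neq 1$, so nothing essential is affected.
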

\begin{proof}
If $\pi_1\neq 1$, then we apply \autoref{lemma:pi-neq-1} to decrease $g(\pi)$ by at least $1$. Otherwise, 
$\{\pi'_0, \pi'_1\}$ is a $1$-cycle in $G(\pi)$ and $f(\pi)=0$.  
Assume $\pi\neq\iota$ to avoid triviality; then $G(\pi)$ contains a nontrivial cycle, from which we select a grey edge $\{\pi'_{2i-2}, \pi'_{2j-1}\}$ with $j>i$. Applying the prefix block-interchange $\beta(1, i, i, j)$ then makes $\pi_i$ and $\pi_{i}+1$ contiguous in $\pi\beta$, and that pair corresponds to a new $1$-cycle in $G(\pi\beta)$. On the other hand, $\beta$ merges the $1$-cycle $\{\pi'_0, \pi'_1\}$ in $G(\pi)$ with the cycle that contains $\{\pi'_{2i-2}, \pi'_{2j-1}\}$, 
so 
$\Delta c(\pi, \pi\beta)=0=\Delta c_1(\pi, \pi\beta)$, $\Delta f(\pi, \pi\beta)=1$ and \autoref{eqn:delta-g} yields $\Delta g(\pi, \pi\beta)=0/2-0-1=-1$.
\end{proof}

The smallest example of a permutation for which the inequality in \autoref{thm:first-upper-bound-on-pbid} is strict is $\pi=\langle 3\ 2\ 1\rangle$,  with $pbid(\pi)=1<g(\pi)=2$. \autoref{algo:2approx} implements the strategy described in \autoref{thm:first-upper-bound-on-pbid}. We prove in the next subsection that \autoref{algo:2approx} is a 2-approximation. 

\begin{algorithm}[htbp]
\caption{\textsc{ApproximateSbpbi}($\pi$)}
\label{algo:2approx}
\KwIn{A permutation $\pi$ of $[n]$.}
\KwOut{A sorting sequence of prefix block-interchanges for $\pi$.}
\BlankLine
$S\leftarrow$ empty sequence\;
\While{$\pi\neq\iota$}{
    \If{$\pi_1\neq 1$}{
        $j\leftarrow$ the position of $\pi_1-1$\;
        \If{there exists a pair $(\pi_i, \pi_k=\pi_i+1)$ such that $i\le j\le k$ and the corresponding grey arc belongs to the leftmost cycle of $G(\pi)$}{
            $\sigma\leftarrow \beta(1, i, j, k)$\tcp*{\autoref{lemma:pi-neq-1} cases 1--3}
        }
        \Else{
            $i, k\leftarrow$ positions such that $i\le j\le k$ and $\pi_k=\pi_i-1$\;
            $\sigma\leftarrow \beta(1, i, j, k)$\tcp*{\autoref{lemma:pi-neq-1} case 4}
        }
    }
    \Else(\tcp*[h]{\autoref{thm:first-upper-bound-on-pbid}}){
        $i\leftarrow$ smallest index such that $\pi_{i+1}\neq\pi_i+1$\;
        $j\leftarrow$ the position of $\pi_i+1$\;
        $\sigma\leftarrow \beta(1, i, i, j)$\;
    }
    $\pi\leftarrow \pi\sigma$\;
    $S$.append($\sigma$)\;
}
\textbf{return} $S$\;
\end{algorithm}

\subsection{A lower bound based on the breakpoint graph}\label{sec:lb}

We now prove a lower bound on $pbid$ that allows us to show that \autoref{algo:2approx} is a 2-approximation for \SBPBIshort. To that end, we use a framework introduced in \cite{labarre-lower}. 
The starting point is the following mapping, in which the symmetric group on $[n+1]$ is identified with the symmetric group on $\{0\}\cup[n]$ and where $A_n$ is the subgroup of $S_n$ formed by the set of all \emph{even} permutations, i.e. permutations with an even number of even cycles:
\begin{equation}\label{eq-alpha}
\psi:S_n\rightarrow A_{n+1}:\pi\mapsto\overline{\pi} =  (0,1,2,\ldots,n) (0,\pi_n,\pi_{n-1},\ldots,\pi_1).
\end{equation}
This mapping associates to every permutation $\pi$ another permutation $\overline{\pi}$ whose disjoint cycles are in one-to-one correspondence with the cycles of $G(\pi)$. As a result, terminology based on the disjoint cycle decomposition of $\overline{\pi}$ or on the alternating cycle decomposition of $G(\pi)$ can conveniently be used indifferently, including the notation introduced at the beginning of \autoref{sec:first-2-approx} (e.g. $c(\overline{\pi})=c(G(\pi))$, and therefore $\Delta c(\overline{\pi}, \overline{\pi\sigma})=c(\overline{\pi\sigma})-c(\overline{\pi})=c(G(\pi\sigma))-c(G(\pi))=\Delta c(\pi, \pi\sigma)$). The following result will be our main tool for proving our lower bound. 

\begin{theorem}\label{thm:main-theorem}\cite{labarre-lower}
Let $S$ be a subset of $S_n$ whose elements are mapped by $\psi(\cdot)$ onto $S'\subseteq A_{n+1}$. Moreover, let $\mathscr C$ be the union of the conjugacy classes (of $S_{n+1}$) that intersect with $S'$; then for any $\pi$ in $S_n$, any factorisation of $\pi$ into $t$ elements of $S$ yields a factorisation of $\overline{\pi}$ into $t$ elements of $\mathscr C$.
\end{theorem}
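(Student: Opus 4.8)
The plan is to recognise $\psi$ as (essentially) a crossed homomorphism from $S_n$ to $A_{n+1}$, which lets one telescope any factorisation of $\pi$. Write $\gamma=(0,1,2,\ldots,n)$ for the $(n+1)$-cycle, and for $\pi\in S_n$ let $\hat\pi\in S_{n+1}$ be the permutation that fixes $0$ and acts as $\pi$ on $[n]$; note $\widehat{\pi\sigma}=\hat\pi\,\hat\sigma$ for all $\pi,\sigma\in S_n$. The first step is to rewrite \autoref{eq-alpha} as the commutator $\overline{\pi}=\gamma\,\hat\pi\,\gamma^{-1}\,\hat\pi^{-1}$: since conjugating a cycle relabels its entries, $\hat\pi\,\gamma^{-1}\,\hat\pi^{-1}=(\hat\pi(0),\hat\pi(n),\hat\pi(n-1),\ldots,\hat\pi(1))=(0,\pi_n,\pi_{n-1},\ldots,\pi_1)$, and left-multiplying by $\gamma=(0,1,\ldots,n)$ recovers the right-hand side of \autoref{eq-alpha}.

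Using this form together with $\widehat{\pi\sigma}=\hat\pi\,\hat\sigma$, the key identity is then obtained by direct computation:
\[
\overline{\pi}\cdot\bigl(\hat\pi\,\overline{\sigma}\,\hat\pi^{-1}\bigr)
=\bigl(\gamma\,\hat\pi\,\gamma^{-1}\,\hat\pi^{-1}\bigr)\hat\pi\bigl(\gamma\,\hat\sigma\,\gamma^{-1}\,\hat\sigma^{-1}\bigr)\hat\pi^{-1}
=\gamma\,\hat\pi\,\hat\sigma\,\gamma^{-1}\,\hat\sigma^{-1}\,\hat\pi^{-1}
=\overline{\pi\sigma},
\]
where the middle equality comes from the two internal cancellations $\hat\pi^{-1}\hat\pi=1$ and $\gamma^{-1}\gamma=1$, and the last one is the commutator form applied to $\pi\sigma$ (using $\hat\sigma^{-1}\hat\pi^{-1}=\widehat{\pi\sigma}^{-1}$). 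Thus $\psi(\pi\sigma)=\psi(\pi)\cdot\bigl(\hat\pi\,\psi(\sigma)\,\hat\pi^{-1}\bigr)$; in words, $\overline{\pi\sigma}$ equals $\overline{\pi}$ times a conjugate in $S_{n+1}$ of $\overline{\sigma}$.

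Finally, given a factorisation $\pi=\sigma_1\sigma_2\cdots\sigma_t$ with each $\sigma_i\in S$, I would induct on $t$, applying the key identity with $\pi\leftarrow\sigma_1\cdots\sigma_{i-1}$ and $\sigma\leftarrow\sigma_i$ at each step, to obtain
\[
\overline{\pi}=\overline{\sigma_1}\cdot\bigl(\widehat{\sigma_1}\,\overline{\sigma_2}\,\widehat{\sigma_1}^{-1}\bigr)\cdot\bigl(\widehat{\sigma_1\sigma_2}\,\overline{\sigma_3}\,\widehat{\sigma_1\sigma_2}^{-1}\bigr)\cdots\bigl(\widehat{\sigma_1\cdots\sigma_{t-1}}\,\overline{\sigma_t}\,\widehat{\sigma_1\cdots\sigma_{t-1}}^{-1}\bigr).
\]
Each of these $t$ factors is a conjugate in $S_{n+1}$ of some $\overline{\sigma_i}$ with $\overline{\sigma_i}\in S'$, hence lies in a conjugacy class of $S_{n+1}$ that meets $S'$, that is, in $\mathscr C$; this is the required factorisation of $\overline{\pi}$ into $t$ elements of $\mathscr C$. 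The one delicate point is the middle step: one must track the right-to-left composition convention and the exact shape of $\psi$ carefully so that both internal cancellations genuinely occur. Once the cocycle identity is established the remainder is a routine induction, and the additional properties of $\psi$ recorded after \autoref{eq-alpha} play no role here.
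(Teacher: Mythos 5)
Your proof is correct. Note that the paper does not prove this theorem itself but imports it from the cited reference; your argument is essentially the one used there: the cocycle identity $\overline{\pi\sigma}=\overline{\pi}\,\bigl(\hat\pi\,\overline{\sigma}\,\hat\pi^{-1}\bigr)$ that you derive from the commutator form of $\psi$ is precisely \autoref{lemma:value-of-overline-brack-pi-circ-sigma-brack} (also quoted from that reference), and the telescoping induction together with the observation that $S_{n+1}$-conjugates of elements of $S'$ lie in $\mathscr C$ is the standard way to conclude, so there is no gap and no genuinely different route to report.
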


Consequently, if we let $d_S(\sigma)$ denote the length of a shortest sorting sequence for $\sigma$ consisting solely of elements from $S$, then 
\autoref{thm:main-theorem} implies that for any $\pi$ in $S_n$ and any choice of $S\subseteq S_n$, we have $d_S(\pi)\ge d_{S'}(\overline{\pi})$. 
In order to use \autoref{thm:main-theorem}, we need a translation of the effect of an operation on $\pi$ in terms of a transformation on $\overline{\pi}$, as well as a precise characterisation of the image of a prefix block-interchange under the mapping $\psi$. Both are provided, respectively, by the following results.

\begin{lemma}\label{lemma:value-of-overline-brack-pi-circ-sigma-brack}
\cite{labarre-lower} For all $\pi$, $\sigma$ in $S_n$, we have $\overline{\pi\sigma}=\overline{\pi}(\overline{\sigma}^\pi)$.
\end{lemma}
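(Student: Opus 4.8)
The plan is to first put $\overline{\pi}$ into a ``commutator'' form that isolates its dependence on $\pi$, and then finish with a short group-theoretic cancellation. Write $\rho = (0, 1, 2, \ldots, n)$ for the long cycle, so that by definition $\overline{\pi} = \rho\,(0, \pi_n, \pi_{n-1}, \ldots, \pi_1)$. Throughout I identify each permutation of $[n]$ with the permutation of $\{0\}\cup[n]$ that fixes $0$, so that products and conjugates such as $\overline{\sigma}^\pi = \pi\,\overline{\sigma}\,\pi^{-1}$ are well-defined elements of $S_{n+1}$.

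The key first step is the identity $(0, \pi_n, \pi_{n-1}, \ldots, \pi_1) = \pi\,\rho^{-1}\,\pi^{-1}$. This is just the standard fact that conjugation relabels a cycle: since $\pi$ fixes $0$, we have $\pi\,\rho\,\pi^{-1} = (\pi(0), \pi(1), \ldots, \pi(n)) = (0, \pi_1, \pi_2, \ldots, \pi_n)$, and inverting both sides gives the claim. Hence
\[
\overline{\pi} = \rho\,\pi\,\rho^{-1}\,\pi^{-1},
\]
and this holds for every element of $S_n$; in particular, since $\pi\sigma$ also fixes $0$, applying the formula to $\pi\sigma$ gives $\overline{\pi\sigma} = \rho\,(\pi\sigma)\,\rho^{-1}\,(\pi\sigma)^{-1} = \rho\,\pi\,\sigma\,\rho^{-1}\,\sigma^{-1}\,\pi^{-1}$.

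It then remains to expand the right-hand side of the asserted identity. Substituting $\overline{\pi} = \rho\pi\rho^{-1}\pi^{-1}$ and $\overline{\sigma} = \rho\sigma\rho^{-1}\sigma^{-1}$ yields $\overline{\pi}\,(\overline{\sigma}^\pi) = (\rho\pi\rho^{-1}\pi^{-1})\,\pi\,(\rho\sigma\rho^{-1}\sigma^{-1})\,\pi^{-1}$; cancelling the central $\pi^{-1}\pi$ and then the resulting $\rho^{-1}\rho$ leaves $\rho\,\pi\,\sigma\,\rho^{-1}\,\sigma^{-1}\,\pi^{-1}$, which is exactly the expression obtained above for $\overline{\pi\sigma}$. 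This proves the lemma.

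I expect the only real care needed to be bookkeeping: using the identification of $S_n$ with the stabiliser of $0$ in $S_{n+1}$ consistently (so that $\overline{\sigma}^\pi$ and the step $(\pi\sigma)\rho^{-1}(\pi\sigma)^{-1} = \pi\sigma\rho^{-1}\sigma^{-1}\pi^{-1}$ are both legitimate), and applying the conjugation-of-a-cycle formula in the direction compatible with the paper's right-to-left composition convention. Once the commutator form $\overline{\pi} = \rho\pi\rho^{-1}\pi^{-1}$ is in hand, nothing subtle remains.
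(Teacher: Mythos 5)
Your proof is correct: with $\pi$ viewed as fixing $0$, the identity $(0,\pi_n,\ldots,\pi_1)=\pi\rho^{-1}\pi^{-1}$ does give $\overline{\pi}=\rho\pi\rho^{-1}\pi^{-1}$, and the cancellation argument then verifies the lemma under the paper's right-to-left composition and its definition $\overline{\sigma}^{\pi}=\pi\overline{\sigma}\pi^{-1}$. Note that this paper only cites the lemma from \cite{labarre-lower} without reproducing a proof, so there is nothing in-paper to compare against; your commutator formulation is essentially the standard algebraic route taken in that cited work.
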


\begin{lemma}\label{lemma:block-interchange-image}\cite{labarre-lower}
For any block-interchange $\beta(i, j, k, \ell)$ in $S_n$, we have $\overline{\beta(i, j, k, \ell)}=(j, \ell)(i, k)$.
\end{lemma}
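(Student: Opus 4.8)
The plan is to compute $\overline{\beta(i,j,k,\ell)}$ directly from the definition $\psi(\sigma)=(0,1,\ldots,n)\,(0,\sigma_n,\sigma_{n-1},\ldots,\sigma_1)$ in \autoref{eq-alpha}. Writing $\gamma=(0,1,\ldots,n)$ and, for $\sigma$ extended to fix $0$ so that $\sigma_0=0$, $\delta_\sigma=(0,\sigma_n,\ldots,\sigma_1)$, one reads off how these cycles act on $\{0,1,\ldots,n\}$: $\gamma$ sends $x$ to $x+1$ for $x<n$ and $n$ to $0$ (which is precisely the identification of $n+1$ with $0$ underlying the identification of $S_{n+1}$ with the symmetric group on $\{0\}\cup[n]$), while $\delta_\sigma$ sends $\sigma_p$ to $\sigma_{p-1}$ with the position index read modulo $n+1$. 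Composing right to left yields the identity
\[
\overline{\sigma}(\sigma_p)=\sigma_{p-1}+1\pmod{n+1}\qquad\text{for all }p\in\{0,1,\ldots,n\},
\]
so that $\overline{\sigma}$ fixes $\sigma_p$ exactly when $\sigma_{p-1}$ and $\sigma_p$ are consecutive modulo $n+1$. I would then specialise $\sigma$ to $\beta(i,j,k,\ell)$.

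By \autoref{def:block-interchange}, prepending $\beta_0=0$, the values at positions $0,1,\ldots,n$ of $\beta(i,j,k,\ell)$ form, in this order, the (possibly empty) intervals of consecutive integers $[0,i-1]$, $[k,\ell-1]$, $[j,k-1]$, $[i,j-1]$, $[\ell,n]$. Read cyclically and modulo $n+1$ this list is a concatenation of maximal runs of consecutive values (the last interval and $[0,i-1]$ merging across the wrap when $\ell\le n$), and a break occurs exactly before the values $k$, $j$, $i$ and $\ell$ — where $\ell$ is read as $0$ when $\ell=n+1$, and where the break before $j$ disappears, so that $j$ and $k$ coincide, when $j=k$. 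By the identity above these are precisely the elements displaced by $\overline{\beta}$, and each is sent to one more than the last value of the run preceding it: this gives $\overline{\beta}(k)=i$, $\overline{\beta}(j)=\ell$, $\overline{\beta}(i)=k$, $\overline{\beta}(\ell)=j$ when $j<k$, hence $\overline{\beta}=(i,k)(j,\ell)=(j,\ell)(i,k)$; and $\overline{\beta}(k)=i$, $\overline{\beta}(i)=\ell$, $\overline{\beta}(\ell)=j=k$ when $j=k$, hence $\overline{\beta}=(i,\ell,j)$, which is again $(j,\ell)(i,k)$ because that product collapses to a $3$-cycle when $j=k$.

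The main obstacle is not conceptual but the bookkeeping of degenerate configurations: $j=k$ (a block-transposition), $i=1$ (the interval $[0,i-1]$ is the singleton $\{0\}$), $\ell=n+1$ (the interval $[\ell,n]$ is empty and the relevant break is the cyclic wrap), and single-element blocks such as $j=i+1$ or $\ell=k+1$. In each case one checks that no run boundary is spuriously created or destroyed and that the identification $n+1\equiv 0$ is applied consistently; once this is done, the computation of the three or four displaced values goes through unchanged, and the identity $\overline{\sigma}(\sigma_p)=\sigma_{p-1}+1\pmod{n+1}$ carries the rest.
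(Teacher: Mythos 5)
There is nothing to compare against inside the paper: \autoref{lemma:block-interchange-image} is imported from \cite{labarre-lower} and stated without proof, so your argument stands on its own, and it is correct. The key identity $\overline{\sigma}(\sigma_p)=\sigma_{p-1}+1\pmod{n+1}$ (with $\sigma_0=0$ and indices read mod $n+1$) is exactly what \autoref{eq-alpha} gives under the paper's right-to-left composition convention, and specialising to the one-line form of $\beta(i,j,k,\ell)$ from \autoref{def:block-interchange} --- the runs $[0,i-1]$, $[k,\ell-1]$, $[j,k-1]$, $[i,j-1]$, $[\ell,n]$ --- correctly identifies the displaced values: the constraints $1\le i<j\le k<\ell\le n+1$ guarantee that every boundary between (nonempty) runs is a genuine break, yielding $k\mapsto i$, $j\mapsto\ell$, $i\mapsto k$, $\ell\mapsto j$ when $j<k$, i.e.\ the disjoint product $(j,\ell)(i,k)$. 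Your handling of the two degenerate situations that actually matter is also right: when $j=k$ the two transpositions share the point $j=k$ and $(j,\ell)(i,k)$ collapses (right-to-left) to the $3$-cycle $(i,\ell,j)$, which matches your computed images, and when $\ell=n+1$ the wrap-around gives $\overline{\beta}(0)=j$ and $\overline{\beta}(j)=0$, consistent with the identification of $n+1$ with $0$ in $A_{n+1}$ that the paper uses. The remaining special cases you flag ($i=1$, $j=i+1$, $\ell=k+1$) are indeed vacuous, since singleton or empty runs neither create nor destroy breaks; the only point worth making explicit, as you do, is that the composition convention is what makes the $j=k$ case come out as $(i,\ell,j)$ rather than $(i,j,\ell)$, so the formula remains literally true for prefix block-transpositions.
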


As is well-known, a $2$-cycle in a permutation $\sigma$ containing elements from different cycles in a permutation $\pi$ merges those cycles in $\pi\sigma$, while a $2$-cycle in $\sigma$ containing elements from the same cycle in $\pi$ splits that cycle into two cycles in $\pi\sigma$. \autoref{lemma:value-of-overline-brack-pi-circ-sigma-brack} and \autoref{lemma:block-interchange-image} therefore provide us with a very simple way of analysing the effects of a block-interchange: the effect of $\beta$ on the cycles of $G(\pi)$ is the same as the effect of $\overline{\beta}^\pi$ on the cycles of $\overline{\pi}$, and therefore bounds on the (prefix) block-interchange distance of $\pi$ can be obtained by studying the effects of pairs of $2$-cycles on $\overline{\pi}$. 
The following lemma will be useful in restricting the number of cases in the proof of our lower bound (\autoref{thm:first-lower-bound-on-pbid}).

\begin{lemma}\label{lemma:values-for-delta-c}
 For any $\pi$ in $S_n$ and any block-interchange $\beta$, we have $\Delta c(\pi,\pi\beta)\in\{-2, 0, 2\}$.
\end{lemma}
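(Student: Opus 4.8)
The plan is to transport the computation through the mapping $\psi$ of \autoref{eq-alpha} and reduce everything to the elementary behaviour of cycle counts under multiplication by transpositions. Since $c(\overline{\sigma})=c(G(\sigma))$ for every $\sigma$, we have $\Delta c(\pi,\pi\beta)=c(\overline{\pi\beta})-c(\overline{\pi})$, so it suffices to understand how $c$ changes when we pass from $\overline{\pi}$ to $\overline{\pi\beta}$. By \autoref{lemma:value-of-overline-brack-pi-circ-sigma-brack}, $\overline{\pi\beta}=\overline{\pi}\,(\overline{\beta}^{\,\pi})$, so the whole question becomes: by how much does right-multiplication by the fixed permutation $\overline{\beta}^{\,\pi}$ change the number of cycles of $\overline{\pi}$?

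The next step is to describe $\overline{\beta}^{\,\pi}$ just precisely enough. By \autoref{lemma:block-interchange-image}, $\overline{\beta}=(j,\ell)(i,k)$, and the defining inequalities $1\le i<j\le k<\ell\le n+1$ give $j\neq\ell$ and $i\neq k$, so $(j,\ell)$ and $(i,k)$ are genuine transpositions; hence $\overline{\beta}$ is a product of two transpositions (they are disjoint when $j<k$ and share the point $j=k$ otherwise, but this distinction is irrelevant here). Since conjugation is a group homomorphism, $\overline{\beta}^{\,\pi}=(j,\ell)^{\pi}\,(i,k)^{\pi}=\tau_1\tau_2$, where $\tau_1$ and $\tau_2$ are conjugates of transpositions and therefore are themselves transpositions.

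Finally, I would invoke the standard fact recalled just before the statement: right-multiplying an arbitrary permutation by a transposition either merges two of its cycles (when the two transposed points lie in different cycles) or splits one of them (when they lie in the same cycle), and hence changes the number of cycles by exactly $\pm 1$. Applying this to $\tau_1$ and then to $\tau_2$ yields
$\Delta c(\pi,\pi\beta)=\bigl(c(\overline{\pi}\tau_1\tau_2)-c(\overline{\pi}\tau_1)\bigr)+\bigl(c(\overline{\pi}\tau_1)-c(\overline{\pi})\bigr)\in\{-1,+1\}+\{-1,+1\}=\{-2,0,2\}$,
which is exactly the claim. There is no real obstacle in this argument; the only points needing a moment's care are verifying from the index inequalities that $(j,\ell)$ and $(i,k)$ really are transpositions (so that the factorisation of $\overline{\beta}^{\,\pi}$ into two transpositions is legitimate, including in the degenerate block-transposition case $j=k$) and observing that conjugation distributes over this product.
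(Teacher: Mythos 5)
Your proposal is correct and follows essentially the same route as the paper: it combines \autoref{lemma:block-interchange-image} and \autoref{lemma:value-of-overline-brack-pi-circ-sigma-brack} with the standard fact that multiplying by a transposition changes the cycle count by exactly $\pm 1$, yielding $\{-2,0,2\}$. Your explicit treatment of the conjugation step and of the degenerate case $j=k$ is a welcome bit of extra care, but it does not change the argument.
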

\begin{proof}
 By \autoref{lemma:block-interchange-image}, $\overline{\beta}$ consists of two $2$-cycles, each of which might split a cycle into two cycles or merge two cycles into one (\autoref{lemma:value-of-overline-brack-pi-circ-sigma-brack}). Combining all possible cases yields the set $\{-2, 0, 2\}$ as possible values for $\Delta c(\overline{\pi}, \overline{\pi\beta})=\Delta c(\pi,\pi\beta)$.
\end{proof}

Finally, the following technical observation will be useful in ruling out impossible values for $\Delta f(\pi, \sigma)$, whose set of possible values is $\{-1, 0, 1\}$ when no restrictions apply.

\begin{lemma}\label{obs:deltac1-implications-deltaf}
For any $\pi$ in $S_n$ and every prefix block-interchange $\beta$: if $\Delta c_1(\pi,\pi\beta)\ge 2$, then $\Delta f(\pi,\pi\beta)\neq 1$.
\end{lemma}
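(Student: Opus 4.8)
The plan is to establish the contrapositive: if $\Delta f(\pi,\pi\beta)=1$, then $\Delta c_1(\pi,\pi\beta)\le 1$. Since $f$ takes only the values $0$ and $1$, the hypothesis $\Delta f(\pi,\pi\beta)=1$ forces $f(\pi)=0$ and $f(\pi\beta)=1$, i.e. $\pi_1=1$ and $(\pi\beta)_1\neq 1$. In particular, when $\pi_1\neq 1$ we already have $f(\pi)=1$, hence $\Delta f(\pi,\pi\beta)=f(\pi\beta)-1\le 0\neq 1$ and nothing is to be proved; so I may assume $\pi_1=1$ from now on. Under this assumption $1$ is a fixed point of $\overline{\pi}$ (one checks directly from \autoref{eq-alpha} that $\overline{\pi}(1)=1$ iff $\pi_1=1$), so the trivial cycle $(1)$ contributes to the number of fixed points of $\overline{\pi}$, which equals $c_1(G(\pi))$ by the cycle correspondence underlying $\psi$.

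The next step is to translate $\beta$ through the $\psi$-machinery. By \autoref{lemma:value-of-overline-brack-pi-circ-sigma-brack} and \autoref{lemma:block-interchange-image} applied with $i=1$ (since $\beta$ is a \emph{prefix} block-interchange), $\overline{\pi\beta}=\overline{\pi}\,\overline{\beta}^{\pi}=\overline{\pi}\cdot(\pi_j,\pi_\ell)(1,\pi_k)$ for some indices $1<j\le k<\ell$ (reading $\pi_{n+1}:=0$ if $\ell=n+1$); since $j,k\ge 2$, we have $\pi_j,\pi_k,\pi_\ell\neq\pi_1=1$. The core of the argument is to read off the effect of this right multiplication one transposition at a time, in the order $\overline{\pi\beta}=\bigl(\overline{\pi}\,(\pi_j,\pi_\ell)\bigr)(1,\pi_k)$.

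Two elementary facts about right multiplication by a transposition then close the proof. First, because $1\notin\{\pi_j,\pi_\ell\}$, the element $1$ remains a fixed point of $\rho:=\overline{\pi}\,(\pi_j,\pi_\ell)$, and the number of fixed points of $\rho$ exceeds that of $\overline{\pi}$ by at most $2$: right multiplication by a transposition either splits one cycle (creating at most two trivial cycles and destroying none) or merges two cycles into a single non-trivial one. Second, because $\pi_k\neq 1$, the element $\pi_k$ lies in a cycle of $\rho$ distinct from the trivial cycle $(1)$, so $\rho\,(1,\pi_k)=\overline{\pi\beta}$ merges those two cycles into one of length at least $2$; this destroys the fixed point $1$ and creates no new one, so the number of fixed points of $\overline{\pi\beta}$ is at most that of $\rho$ minus $1$. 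Chaining these bounds gives $c_1(G(\pi\beta))\le c_1(G(\pi))+1$, that is $\Delta c_1(\pi,\pi\beta)\le 1$.

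The one place needing care is the order in which the two transpositions of $\overline{\beta}^{\pi}$ are stripped off: it must be $(\pi_j,\pi_\ell)$ first and $(1,\pi_k)$ last, precisely so that the fixed point $1$ survives until the final multiplication. The seemingly separate case $j=k$ (where $\beta$ is a prefix block-transposition and $\overline{\beta}^{\pi}$ is a $3$-cycle) needs no special treatment, since the second fact above uses only $\pi_k\neq 1$. I do not anticipate a genuine obstacle here; in fact the hypothesis $(\pi\beta)_1\neq 1$ is never used, being automatic because $(\pi\beta)_1=\pi_k\neq 1$.
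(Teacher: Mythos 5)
Your proof is correct, and it reaches the statement by a somewhat different route than the paper. The paper argues directly on $G(\pi)$, splitting into cases according to where the new $1$-cycles come from: if one of them arises from splitting the leftmost cycle, that cycle was nontrivial, so $f(\pi)=1$ and $\Delta f(\pi,\pi\beta)=1$ is impossible; otherwise the new $1$-cycles all come from a single non-leftmost cycle, and the fact that a prefix block-interchange always consumes the leftmost cycle is used to finish that case. You instead take the contrapositive, reduce immediately to $\pi_1=1$, and prove the quantitatively stronger uniform fact that $\pi_1=1$ forces $\Delta c_1(\pi,\pi\beta)\le 1$, by writing $\overline{\pi\beta}=(\overline{\pi}\,(\pi_j,\pi_\ell))\,(1,\pi_k)$ via \autoref{lemma:value-of-overline-brack-pi-circ-sigma-brack} and \autoref{lemma:block-interchange-image} and counting fixed points one transposition at a time: the first factor gains at most two fixed points while keeping $1$ fixed, and the second necessarily merges the fixed point $1$ into a cycle of length at least $2$, losing one. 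The only delicate points---the order in which the two transpositions are stripped off, the degenerate case $j=k$, and the convention $\pi_{n+1}=0$ when $\ell=n+1$---are exactly the ones you flag, and you handle them correctly; the identification of fixed points of $\overline{\pi}$ with $c_1(G(\pi))$ is part of the correspondence the paper already uses freely in \autoref{thm:first-lower-bound-on-pbid}. What your version buys is a single computation in place of a case analysis, a sharper intermediate statement ($\Delta c_1\le 1$ whenever $\pi_1=1$, rather than just $\le 2$), and it sidesteps the delicate step in the paper's second case, whose stated conclusion ($\pi_1=1$) would not by itself exclude $\Delta f(\pi,\pi\beta)=1$ and is evidently meant to read $\pi_1\neq 1$. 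What the paper's version buys is that it stays entirely in the breakpoint-graph language of the surrounding lemmas and makes explicit which cycle of $G(\pi)$ produces the new $1$-cycles.
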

\begin{proof}
If $\Delta c_1(\pi,\pi\beta)\ge 2$, then the new $1$-cycles are obtained in one of the following ways:
\begin{enumerate}
 \item if at least one of them is the result of a split of the leftmost cycle of $G(\pi)$, then that cycle is nontrivial and therefore $f(\pi)=1$, thereby forbidding the value $\Delta f(\pi,\pi\beta)=1$;
 \item otherwise, all new $1$-cycles are extracted from a cycle in $G(\pi)$ other than the leftmost cycle; since that cycle can only be split into at most two new cycles (\autoref{lemma:value-of-overline-brack-pi-circ-sigma-brack} and \autoref{lemma:block-interchange-image}), we have $\Delta c_1(\pi,\pi\beta)\le 2$ in this case. Moreover, we also have $\pi_1=1$, otherwise the $1$-cycle containing $\pi_1$ would vanish in $G(\pi\beta)$ and contradict our assumption that $\Delta c_1(\pi,\pi\beta)\ge 2$. Therefore, the value $\Delta f(\pi,\pi\beta)=1$ is also excluded in this case.
\end{enumerate}
\end{proof}

We now have everything we need to prove our lower bound on $pbid$. 

\begin{theorem}\label{thm:first-lower-bound-on-pbid}
For any $\pi$ in $S_n$, we have $pbid(\pi)\geq g(\pi)/2$.
\end{theorem}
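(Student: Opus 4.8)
The plan is to show that a single prefix block-interchange changes $g$ by at least $-2$, and to deduce the bound by a telescoping argument. Since $c(G(\iota))=c_1(G(\iota))=n+1$ and $\iota_1=1$, \autoref{eqn:labarre-lower-bound-ptd} gives $g(\iota)=0$; hence if $\pi=\pi^{(0)},\pi^{(1)},\ldots,\pi^{(t)}=\iota$ is a shortest sorting sequence of prefix block-interchanges, then $-g(\pi)=g(\iota)-g(\pi)=\sum_{s=0}^{t-1}\Delta g(\pi^{(s)},\pi^{(s+1)})$, so $g(\pi)\le 2t$ as soon as every term is at least $-2$. It therefore suffices to prove that $\Delta g(\pi,\pi\beta)\ge -2$ for every $\pi$ in $S_n$ and every prefix block-interchange $\beta$.

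I would then expand $\Delta g=\Delta c/2-\Delta c_1-\Delta f$ via \autoref{eqn:delta-g} and distinguish cases according to $\Delta c(\pi,\pi\beta)$, which lies in $\{-2,0,2\}$ by \autoref{lemma:values-for-delta-c}. By \autoref{lemma:value-of-overline-brack-pi-circ-sigma-brack} and \autoref{lemma:block-interchange-image}, $\overline{\pi\beta}$ is obtained from $\overline{\pi}$ by applying two transpositions in succession, each of which either merges two cycles or splits one cycle into two (the classical fact recalled before \autoref{lemma:values-for-delta-c}); a merge creates no fixed point, and splitting a cycle creates at most two new fixed points, exactly two only when that cycle has length $2$. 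Consequently $\Delta c_1\le 0$ when $\Delta c=-2$ (two merges), $\Delta c_1\le 2$ when $\Delta c=0$ (a split and a merge), and $\Delta c_1\le 4$ when $\Delta c=2$ (two splits). Using $\Delta f\le 1$ in general and \autoref{obs:deltac1-implications-deltaf} (so $\Delta f\le 0$ whenever $\Delta c_1\ge 2$), this already gives $\Delta g\ge -2$ in every case except possibly $\Delta c=2$ with $\Delta c_1=4$: indeed $\Delta g\ge -1-0-1=-2$ when $\Delta c=-2$; $\Delta g=-\Delta c_1-\Delta f\ge -2$ when $\Delta c=0$ (treating $\Delta c_1\le 1$ and $\Delta c_1=2$ separately); and $\Delta g=1-\Delta c_1-\Delta f\ge -2$ when $\Delta c=2$ and $\Delta c_1\le 3$.

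It then remains to handle $\Delta c=2$, $\Delta c_1=4$, and here I would show that necessarily $\Delta f=-1$, so that $\Delta g=1-4-(-1)=-2$. For two splits to create four new fixed points, each split must create exactly two of them, hence each must split a $2$-cycle; these two $2$-cycles — that is, two length-$2$ cycles $C_1$, $C_2$ of $G(\pi)$ — are distinct (a fixed point cannot be split), and since no other cycle of $G(\pi)$ is restructured, the four black edges on which $\beta$ acts are exactly the two black edges of $C_1$ together with the two black edges of $C_2$. Because $\beta$ is a \emph{prefix} block-interchange, it acts on the first black edge $\{\pi'_0,\pi'_1\}$, which is incident to $\pi'_0=0$; hence the leftmost cycle of $G(\pi)$ is $C_1$ or $C_2$, say $C_1$, and being a $2$-cycle it is nontrivial, which forces $\pi_1\neq 1$ and $f(\pi)=1$. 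After the operation $C_1$ splits into two trivial cycles, one of which contains $0$; since grey edges are never affected and the grey edge incident to $0$ is $\{0,1\}$ (\autoref{def:breakpoint-graph}), that trivial cycle must be $\{0,1\}$, i.e. $(\pi\beta)_1=1$ and $f(\pi\beta)=0$. Thus $\Delta f=-1$, and combining all cases yields $\Delta g(\pi,\pi\beta)\ge -2$, hence $pbid(\pi)\ge g(\pi)/2$.

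The main obstacle is exactly this last case: a priori, a single operation that extracts four trivial cycles could decrease $g$ by as much as $4$, and one must exploit the prefix restriction to argue that one of those four extractions is forced to happen at the front of the permutation, thereby also flipping $f$ from $1$ to $0$ and compensating for two of the four units. This case does occur (for instance when $\pi=\langle 3\ 2\ 1\rangle$, where $\beta(1,2,3,4)$ achieves $\Delta g=-2$), which is why the ratio in this bound cannot be improved without additional ideas.
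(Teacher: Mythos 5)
Your proof is correct and follows essentially the same route as the paper: the same decomposition of $\Delta g$ via \autoref{eqn:delta-g} and case analysis on $\Delta c\in\{-2,0,2\}$ using \autoref{lemma:values-for-delta-c} and \autoref{obs:deltac1-implications-deltaf}, with a direct telescoping of $g$ along an optimal sorting sequence in place of the paper's invocation of \autoref{thm:main-theorem}. Your detailed handling of the subcase $\Delta c=2$, $\Delta c_1=4$ — arguing that the prefix constraint forces one of the two split $2$-cycles to be the leftmost cycle and that the new trivial cycle containing $0$ must be $\{0,1\}$, so $\Delta f=-1$ — is sound and in fact spells out the step the paper only asserts.
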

\begin{proof}
By \autoref{thm:main-theorem} and \autoref{lemma:block-interchange-image}, we have $pbid(\pi)\ge d(\overline{\pi})$, where $d(\overline{\pi})$ is the length of a shortest sorting sequence for $\overline{\pi}$ where the only nontrivial cycles of each transformation in the sequence are two $2$-cycles, exactly one of which contains $1$. As a result, any lower bound on $d(\overline{\pi})$ is a lower bound on $pbid(\pi)$, and therefore we only need to show that a transformation of the kind we have just described can decrease the value of $g(\pi)$ by at most $2$.
 
Let $\overline{\beta}=(1, a)(b, c)$ be the image of a prefix block-interchange under the mapping $\psi(\cdot)$.  By \autoref{lemma:values-for-delta-c}, we only need to distinguish between the following three cases; in each situation, we aim to minimise the value of $\Delta g(\overline{\pi}, \overline{\pi\beta})$.
 
 \begin{enumerate}
  \item If $\Delta c(\overline{\pi}, \overline{\pi\beta})=-2$, then clearly $\Delta c_1(\overline{\pi}, \overline{\pi\beta})\le 0$, 
  and \autoref{eqn:delta-g} allows us to conclude that $\Delta g(\overline{\pi}, \overline{\pi\beta})\ge -1-0-1=-2$.
  
  \item If $\Delta c(\overline{\pi}, \overline{\pi\beta})=0$, then either $2$-cycle of $\beta$ merges two cycles while the other splits a cycle into two. The lengths of the involved cycles in $\overline{\pi}$ and in $\overline{\pi\beta}$ may vary, but this observation is enough to deduce that $\Delta c_1(\overline{\pi}, \overline{\pi\beta}) \le 2$.
The lowest value of $\Delta g(\overline{\pi}, \overline{\pi\beta})$ is obtained when $\Delta c_1(\overline{\pi}, \overline{\pi\beta})=2$, in which case \autoref{eqn:delta-g} and \autoref{obs:deltac1-implications-deltaf} yield $\Delta g(\overline{\pi}, \overline{\pi\beta})\ge 0-2-0=-2$, or when $\Delta c_1(\overline{\pi}, \overline{\pi\beta})=1$, in which case \autoref{eqn:delta-g} 
yields $\Delta g(\overline{\pi}, \overline{\pi\beta})\ge 0-1-1=-2$.

  \item If $\Delta c(\overline{\pi}, \overline{\pi\beta})=2$, then both elements of $\beta$ each split one cycle into two cycles. As in the previous case, the lengths of the involved cycles in $\overline{\pi}$ and in $\overline{\pi\beta}$ may vary, but this observation is enough to deduce that $\Delta c_1(\overline{\pi}, \overline{\pi\beta}) \le 4$, and as a result $\Delta f(\overline{\pi}, \overline{\pi\beta}) \in \{-1, 0\}$ (\autoref{obs:deltac1-implications-deltaf}). The lowest value of $\Delta g(\overline{\pi}, \overline{\pi\beta})$ is obtained in two cases:
  \begin{enumerate}
  \item when $\Delta c_1(\overline{\pi}, \overline{\pi\beta})=4$, in which case the leftmost cycle of $\overline{\pi}$ splits into two $1$-cycles; therefore $\Delta f(\overline{\pi}, \overline{\pi\beta})=-1$ and \autoref{eqn:delta-g} yields $\Delta g(\overline{\pi}, \overline{\pi\beta})\ge 1-4+1=-2$;
  \item or when $\Delta c_1(\overline{\pi}, \overline{\pi\beta})=3$, in which case \autoref{eqn:delta-g} and \autoref{obs:deltac1-implications-deltaf} yield $\Delta g(\overline{\pi}, \overline{\pi\beta})\ge 1-3+0=-2$.
  \end{enumerate}
  \end{enumerate}
\end{proof}

\autoref{thm:first-lower-bound-on-pbid} implies that \autoref{algo:2approx} is a 2-approximation for \SBPBIshort. 

\section{Tightening the bounds}\label{sec:tighter}

Although obtaining better approximation guarantees for {\SBPBIshort} seems as nontrivial as for other prefix sorting problems, the bounds obtained in the previous section can be improved. We show in this section how to tighten them, and then use those improved results in \autoref{sec:diameter} to compute the maximal value that the distance can reach.

\subsection{A tighter upper bound}

By \autoref{thm:first-lower-bound-on-pbid}, the largest value by which the upper bound of \autoref{thm:first-upper-bound-on-pbid} can decrease with a single prefix block-interchange is $2$. In this section, we characterise all permutations which admit such a prefix block-interchange. Other nontight permutations exist (see e.g. \autoref{lemma:2-cycle-except-LMC-implies-better-move-possible}), but they do not admit such an operation as the first step of an optimal sorting sequence. As a consequence, we obtain an improved upper bound on $pbid$ in \autoref{thm:second-upper-bound-on-pbid}. 

\begin{lemma}\label{lemma:2-cycle-crosses-first-inner-grey-edge}
 For any $\pi$ in $S_n$: if $G(\pi)$ contains a $2$-cycle that intersects the first grey edge, 
then there exists a prefix block-interchange $\beta$ such that $\Delta g(\pi, \pi\beta)=-2$.
\end{lemma}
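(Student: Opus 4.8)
The plan is to follow the blueprint of the proof of \autoref{lemma:pi-neq-1}, but to spend the $2$-cycle granted by the hypothesis so as to gain one more unit of $g$. If $\pi_1=1$, the first grey edge is $\{\pi'_0,\pi'_1\}=\{0,1\}$, which bounds no position strictly between its endpoints and hence is crossed by no grey edge, so the statement is vacuous; I may therefore assume $\pi_1\ne 1$, in which case $f(\pi)=1$ and the leftmost cycle $L$ --- which contains the first black edge $b_1$ and the first grey edge $g_1$ --- is nontrivial. Let $p$ be the position of the value $\pi_1-1$ (so $p\ge 2$); then the second endpoint of $g_1$ is the vertex $2\pi_1-2=\pi'_{2p}$, which lies in the black edge $b_{p+1}$ of $L$. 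I aim to produce a prefix block-interchange $\beta$ that turns the given $2$-cycle into two trivial cycles and splits $L$ into the trivial cycle $g_1=\{2\pi_1-2,2\pi_1-1\}$ and a remnant containing $b_1$; for such a $\beta$ one has $\Delta c(\pi,\pi\beta)=2$, while $\Delta c_1(\pi,\pi\beta)=3$ and $\Delta f(\pi,\pi\beta)=0$ when that remnant is nontrivial and $\Delta c_1(\pi,\pi\beta)=4$ and $\Delta f(\pi,\pi\beta)=-1$ when it is trivial (which happens exactly when $|L|=2$), and \autoref{eqn:delta-g} gives $\Delta g(\pi,\pi\beta)=-2$ in both cases --- the best possible by \autoref{thm:first-lower-bound-on-pbid}.

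I would first record the structural facts I need. The given $2$-cycle $C$ is not $L$: otherwise $L$ would consist of $b_1$, $g_1$, one further black edge and one further grey edge $e$, and $e$, being forced to join the endpoint $0$ of $b_1$ to the endpoint $1$ of that further black edge, would be $\{0,1\}$; comparing its span with that of $g_1$ shows $e$ to be nested with $g_1$ rather than crossing it, so $L$ has no grey edge crossing $g_1$ --- a contradiction. Hence $C$ has two black edges $b_s,b_t$ with $2\le s<t$ (the bound $s\ge 2$ because $b_1\in L\ne C$), and one of its grey edges crosses $g_1$. Since the two grey edges of a $2$-cycle each join an endpoint of $b_s$ to an endpoint of $b_t$, and since the endpoints of $b_s$ precede those of $b_t$ in position, the inner endpoint of the crossing grey edge (in a position between $1$ and $2p$) belongs to $b_s$ and its outer endpoint to $b_t$; this forces $s\le p$ and $t\ge p+1$. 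Moreover a $2$-cycle on $b_s,b_t$ forces the grey-partners $2\pi_{s-1}+1$ and $2\pi_s-2$ of the endpoints of $b_s$ to be the endpoints of $b_t$, whence $\pi_{t-1}=\pi_s-1$; this rules out $t=p+1$, as that would give $\pi_s=\pi_1$, contradicting $s\ge 2$. Thus $b_1,b_s,b_{p+1},b_t$ are four distinct black edges occurring in this order by position, with $1<s\le p<p+1<t$.

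I would then take $\beta=\beta(1,s,p+1,t)$, which acts on exactly those four black edges and exchanges the prefix $\pi_1\cdots\pi_{s-1}$ with the block $\pi_{p+1}\cdots\pi_{t-1}$, yielding $(\pi_{p+1}\cdots\pi_{t-1})(\pi_s\cdots\pi_p)(\pi_1\cdots\pi_{s-1})(\pi_t\cdots\pi_n)$. Inspecting the junctions of these rearranged blocks, the new black edges created by $\beta$ are $\{0,2\pi_{p+1}-1\}$, $\{2\pi_{t-1},2\pi_s-1\}=\{2\pi_s-2,2\pi_s-1\}$, $\{2\pi_p,2\pi_1-1\}=\{2\pi_1-2,2\pi_1-1\}$ and $\{2\pi_{s-1},2\pi_t-1\}=\{2\pi_{s-1},2\pi_{s-1}+1\}$ (using $\pi_{t-1}=\pi_s-1$, $\pi_p=\pi_1-1$ and $\pi_t=\pi_{s-1}+1$); the last three coincide with grey edges --- the two grey edges of $C$ and $g_1$ itself --- so each of them becomes a trivial cycle. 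This splits $C$ into its two grey edges (two trivial cycles) and splits $L$ into $g_1$ (trivial) and a remnant containing $b_1$ (trivial precisely when $|L|=2$), exactly as required; combining with \autoref{lemma:block-interchange-image} to see that $\overline\beta$ indeed performs two splits rather than a merge, the first paragraph gives $\Delta g(\pi,\pi\beta)=-2$.

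The part that needs the most care is the previous two steps: pinning down, from the mere fact that one grey edge of $C$ crosses $g_1$, that $b_s$ and $b_t$ straddle $g_1$ in the way described, and checking that the four junction black edges are the ones listed and that $\beta$ therefore carries out two splits. Both amount to routine substitutions, although if one prefers to argue from the before/after picture of $G(\pi)$ in the style of \autoref{lemma:pi-neq-1} then a short distinction on the orientation of $C$ along its crossing grey edge (the analogue of cases~1 and~2 there) is convenient; in either route I do not expect any genuinely new obstacle beyond those already handled in \autoref{lemma:pi-neq-1}.
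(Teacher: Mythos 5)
Your proof is correct and takes essentially the same route as the paper: the paper's proof simply invokes cases 4b and 4c of the proof of \autoref{lemma:pi-neq-1} when the intersecting cycle is a $2$-cycle, i.e.\ the prefix block-interchange acting on $b_1$, the two black edges of the $2$-cycle, and the black edge containing the right endpoint of the first grey edge, which is exactly the $\beta(1,s,p+1,t)$ you construct. Your write-up merely makes explicit the bookkeeping the paper leaves implicit (that the $2$-cycle is not the leftmost cycle, that $t\ge p+2$, and the $\Delta c$, $\Delta c_1$, $\Delta f$ computation yielding $\Delta g=-2$).
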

\begin{proof}
Follows from cases 4b and 4c of the proof of \autoref{lemma:pi-neq-1}, when the cycle that contains grey edge $f$ has length $2$.
\end{proof}

Following \cite{bafna-transpositions}, we say that a cycle $C$ with $b_i$ and $b_k$ as black edges of minimum and maximum indices, respectively, \emph{spans} a black edge $b_j$ if $i<j<k$.

\begin{lemma}\label{lemma:2-cycle-spans-edge-outside-LMC}
 For any $\pi$ in $S_n$: if $G(\pi)$ contains a $2$-cycle which is not the leftmost cycle and which spans a black edge that belongs to a nontrivial cycle different from the leftmost cycle, then there exists a prefix block-interchange $\beta$ such that $\Delta g(\pi, \pi\beta)=-2$. 
\end{lemma}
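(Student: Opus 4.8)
The plan is to exhibit, for every $\pi$ satisfying the hypothesis, a single explicit prefix block-interchange $\beta$ with $\Delta g(\pi,\pi\beta)=-2$, and then to verify this by computing $\Delta c$, $\Delta c_1$ and $\Delta f$ and substituting into \autoref{eqn:delta-g}, exactly as in the proofs of \autoref{lemma:pi-neq-1} and \autoref{thm:first-lower-bound-on-pbid}. Intuitively, a block-interchange whose image under $\psi$ pairs up the two black edges of the $2$-cycle $C$ splits $C$ into two trivial cycles; being forced to also act on $b_1$, it must cut exactly one further black edge, and ``$C$ spans a black edge of $D$'' is precisely the condition under which that further edge can be chosen inside the span of $C$ so that a valid prefix block-interchange of the right shape exists.

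Concretely, I would first fix notation: let $b_p$ and $b_q$ be the black edges of minimum and maximum index of the given $2$-cycle $C$, so $2\le p<q\le n+1$ (indeed $p\ge 2$, since $C$ is not the leftmost cycle and hence avoids $b_1$). By hypothesis $C$ spans a black edge $b_r$, i.e.\ $p<r<q$, lying on a nontrivial cycle $D$ that is not the leftmost cycle; since $r$ is strictly between the indices of the only two black edges of $C$, we have $b_r\notin C$, so $D\ne C$. Hence $b_1$, $b_p$, $b_r$, $b_q$ are four distinct black edges, lying respectively on the leftmost cycle, on $C$, on $C$, and on $D$.

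The operation to apply is $\beta=\beta(1,p,r,q)$, which is well defined since $1<p<r<q\le n+1$. Recalling how a block-interchange acts on the breakpoint graph (it cuts four black edges and reconnects their endpoints), $\beta$ deletes $b_1,b_p,b_r,b_q$ and creates in their place the black edges $\{\pi'_0,\pi'_{2r-1}\}$, $\{\pi'_{2r-2},\pi'_1\}$, $\{\pi'_{2p-2},\pi'_{2q-1}\}$ and $\{\pi'_{2q-2},\pi'_{2p-1}\}$. The crux is to identify the two grey edges of $C$: since $p,q\ge 2$, the vertices $\pi'_{2p-2},\pi'_{2q-2}$ carry even values and $\pi'_{2p-1},\pi'_{2q-1}$ carry odd values, and every grey edge joins an even value to the consecutive odd value; as $C$ is a $2$-cycle, its two grey edges can be neither $\{\pi'_{2p-2},\pi'_{2q-2}\},\{\pi'_{2p-1},\pi'_{2q-1}\}$ (both of wrong parity) nor $\{\pi'_{2p-2},\pi'_{2p-1}\},\{\pi'_{2q-2},\pi'_{2q-1}\}$ (those duplicate the black edges), so they must be exactly $\{\pi'_{2p-2},\pi'_{2q-1}\}$ and $\{\pi'_{2q-2},\pi'_{2p-1}\}$ — precisely the two new black edges $\beta$ creates in place of $b_p$ and $b_q$. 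Thus in $G(\pi\beta)$ these two vertex pairs each support parallel black and grey edges, so $C$ becomes two trivial cycles, while the remaining new black edges $\{\pi'_0,\pi'_{2r-1}\}$ and $\{\pi'_{2r-2},\pi'_1\}$ stitch the leftmost cycle and $D$ into a single cycle of length at least $3$.

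From here the bookkeeping is immediate: $\Delta c=(+1)+(-1)=0$; $\Delta c_1$ gains $2$ from the splitting of $C$ and loses $1$ exactly when the leftmost cycle of $G(\pi)$ was trivial, i.e.\ when $\pi_1=1$; and since the new leftmost cycle has length $\ge 3$ we get $f(\pi\beta)=1$, so $\Delta f=0$ if $\pi_1\ne 1$ and $\Delta f=1$ if $\pi_1=1$. In both cases \autoref{eqn:delta-g} gives $\Delta g(\pi,\pi\beta)=0/2-2-0=-2$ (when $\pi_1\ne1$) respectively $0/2-1-1=-2$ (when $\pi_1=1$). I expect the only genuinely delicate point to be the parity argument pinning down the grey edges of $C$ so that they coincide with the new black edges; it is also worth checking the degenerate index $q=n+1$, where $\pi'_{2q-2}=2\pi_n$ is still even and $\pi'_{2q-1}=2n+1$ is still odd, so the argument is unaffected.
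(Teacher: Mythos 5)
Your proof is correct and follows essentially the same route as the paper: you apply the prefix block-interchange determined by $b_1$, the two black edges of the $2$-cycle, and a spanned black edge of the nontrivial cycle $D$, then compute $\Delta c=0$, $\Delta c_1\in\{1,2\}$ and $\Delta f\in\{1,0\}$ according to whether $\pi_1=1$, obtaining $\Delta g(\pi,\pi\beta)=-2$ in both cases. Your explicit parity argument pinning down the grey edges of the $2$-cycle merely makes rigorous what the paper's figure leaves implicit, so the two proofs coincide in substance.
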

\begin{proof}
We apply a prefix block-interchange defined by the first black edge, both black edges of the $2$-cycle, and any black edge spanned by the $2$-cycle:
 
\begin{center}
\begin{tikzpicture}[scale=.75,transform shape,baseline]

    \foreach [count=\i] \color in {black,black,black,black}
        \draw[color=\color,very thick] (2*\i-2, 0) -- (2*\i-1, 0);

    \foreach [count=\i from 0] \name in {a,b,c,d,e,f,g,h}
    \node[vertex] (\name) at (\i,0) [draw,circle] 
    {};
    \foreach \name/\lab in {a/$0$,b/$\pi'_1$,c/$\pi'_{2i-2}$,d/$\pi'_{2i-1}$,e/$\pi'_{2j-2}$,f/$\pi'_{2j-1}$,g/$\pi'_{2k-2}$,h/$\pi'_{2k-1}$}
    \node at (\name) [label=below:\strut\lab] {};


    \begin{scope}[color=gray!50, very thick]
        \draw[] (d) to [bend left] (g);
        \draw[dotted] (a) to [bend left] (b);
        \draw[dotted] (e) to [bend left] (f);
        \draw[] (c) to [bend left] (h);
        
    \end{scope}
    
    \draw (.5, -.75) rectangle (2.5, -.2);
    \draw (4.5, -.75) rectangle (6.5, -.2);
\end{tikzpicture} 
\begin{tikzpicture}[scale=.75,transform shape,baseline]
\draw[->, >=stealth] (0, 0) -- (1, 0);
\end{tikzpicture} 
\begin{tikzpicture}[scale=.75,transform shape,baseline]

    \foreach [count=\i] \color in {black,black,black,black}
        \draw[color=\color,very thick] (2*\i-2, 0) -- (2*\i-1, 0);

    \foreach [count=\i from 0] \name in {a,f,g,d,e,b,c,h}
    \node[vertex] (\name) at (\i,0) [draw,circle] 
    {};
    \foreach \name/\lab in {a/$0$,b/$\pi'_1$,c/$\pi'_{2i-2}$,d/$\pi'_{2i-1}$,e/$\pi'_{2j-2}$,f/$\pi'_{2j-1}$,g/$\pi'_{2k-2}$,h/$\pi'_{2k-1}$}
    \node at (\name) [label=below:\strut\lab] {};


    \begin{scope}[color=gray!50, very thick]
        \draw[] (d) to [bend right] (g);
        \draw[dotted] (a) to [bend left] (b);
        \draw[dotted] (e) to [bend right] (f);
        \draw[] (c) to [bend left] (h);
        
    \end{scope}

\end{tikzpicture} 
\end{center}
The number of cycles does not change, so $\Delta c(\pi,\pi\beta)=0$. Either $\pi_1=1$, and then $\Delta c_1(\pi,\pi\beta)=1$ and $\Delta f(\pi,\pi\beta)=1$; or $\pi_1\neq 1$, and then  $\Delta c_1(\pi,\pi\beta)=2$ and $\Delta f(\pi,\pi\beta)=0$. In both cases,  \autoref{eqn:delta-g} yields $\Delta g(\pi,\pi\beta)=-2$.
\end{proof}

$2$-cycles other than the leftmost cycle and in a different configuration from our characterisations are still helpful. We show that even though they do not allow a prefix block-interchange that decreases $g(\cdot)$ by $2$ right away, they make it possible to obtain such an operation  \emph{eventually}. 

\begin{proposition}\label{lemma:2-cycle-except-LMC-implies-better-move-possible}
 For any $\pi$ in $S_n$: if $G(\pi)$ contains a $2$-cycle which is not the leftmost cycle, then $\pi$ admits a sequence $S$  of prefix-block interchanges that turns $\pi$ into a permutation $\sigma$ with $\Delta g(\pi, \sigma)=|S|$ and which admits a prefix block-interchange $\beta$ such that $\Delta g(\sigma, \sigma\beta)=-2$.
\end{proposition}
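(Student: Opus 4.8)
The plan is to argue by strong induction on $g(\pi)$. If $\pi$ already admits a prefix block-interchange $\beta$ with $\Delta g(\pi,\pi\beta)=-2$, take $S$ empty and $\sigma=\pi$; there is nothing to do. Otherwise, by \autoref{lemma:2-cycle-crosses-first-inner-grey-edge} (together with \autoref{lemma:2-cycle-spans-edge-outside-LMC}) the given $2$-cycle $C$ sits in a ``neutral'' position: it does not intersect the first grey edge, and it does not span any black edge belonging to a nontrivial cycle other than the leftmost one. The strategy is then to exhibit a single prefix block-interchange $\beta$ with $\Delta g(\pi,\pi\beta)=-1$ that leaves $C$ completely untouched; then $C$ is still a $2$-cycle of $G(\pi\beta)$, still not its leftmost cycle, and since $g(\pi\beta)=g(\pi)-1<g(\pi)$ the induction hypothesis applied to $\pi\beta$ yields a sequence $S'$ of $g$-decreasing prefix block-interchanges taking $\pi\beta$ to some $\sigma$ admitting a $\Delta g=-2$ move, so that prepending $\beta$ to $S'$ settles the case for $\pi$. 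This terminates because $g$ strictly decreases and is bounded below by $0$, while no permutation encountered equals $\iota$ (each still carries the $2$-cycle $C$). Two facts will be used repeatedly: every prefix block-interchange cuts the first black edge (the one incident to $0$) together with three further black edges and never alters a grey edge; and, under our standing assumption, \autoref{thm:first-lower-bound-on-pbid} forbids any prefix block-interchange from achieving $\Delta g<-1$.

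If $\pi_1\neq 1$, the first black edge lies in a nontrivial cycle, so \autoref{every-grey-edge-intersects-another-one} produces a grey edge $e'$ that intersects the first grey edge; since $C$ does not intersect the first grey edge, $e'$ is not an edge of $C$. Feeding $e'$ to the construction in the proof of \autoref{lemma:pi-neq-1} produces a prefix block-interchange $\beta$ with $\Delta g(\pi,\pi\beta)\le -1$, and checking the four configurations of that proof shows that the three black edges of $\beta$ besides the first one all lie either in the leftmost cycle or in the cycle of $e'$ --- in particular, none of them is a black edge of $C$. Hence $C$, with its black and grey edges intact, survives in $G(\pi\beta)$, and it is not the leftmost cycle there because the cycle containing $0$ in $G(\pi\beta)$ contains the reformed first black edge, which differs from both black edges of $C$. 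Finally $\Delta g(\pi,\pi\beta)\le -1$ together with the standing assumption forces $\Delta g(\pi,\pi\beta)=-1$.

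If $\pi_1=1$, the leftmost cycle of $G(\pi)$ is the trivial cycle $\{0,1\}$, and we must find a nontrivial cycle other than $C$ to act on. This is where the algebraic framework is needed: since $\psi(\pi)=\overline{\pi}$ is an \emph{even} permutation of $\{0\}\cup[n]$ (\autoref{eq-alpha}), the quantity $(n+1)-c(G(\pi))=(n+1)-c(\overline{\pi})$ is even; were $C$ the only nontrivial cycle of $G(\pi)$, then $G(\pi)$ would be $C$ plus $n-1$ trivial cycles, whence $c(G(\pi))=n$ and $(n+1)-n=1$ would be even, absurd. So $G(\pi)$ has a nontrivial cycle $C'\neq C$, necessarily distinct from the trivial leftmost cycle. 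Choosing a grey edge $\{\pi'_{2i-2},\pi'_{2j-1}\}$ of $C'$ with $j>i$ and applying the prefix block-transposition $\beta(1,i,i,j)$ from the proof of \autoref{thm:first-upper-bound-on-pbid} gives $\Delta g(\pi,\pi\beta)=-1$: it merges the trivial leftmost cycle with $C'$, creates a new trivial cycle elsewhere, and cuts only the first black edge and two black edges of $C'$, so that $C$ is again left intact and is not the leftmost cycle of $G(\pi\beta)$. This completes the inductive step.

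The main obstacle is the degenerate sub-case $\pi_1=1$: one has to exclude the possibility that $C$ is the \emph{only} nontrivial cycle, since then every $g$-decreasing prefix block-interchange would be forced to consume $C$ and the induction would collapse; ruling this out uses precisely the parity of $c(G(\pi))$ implied by $\overline{\pi}\in A_{n+1}$. The rest is careful but unsurprising bookkeeping --- verifying, across the configurations of \autoref{lemma:pi-neq-1} and the move of \autoref{thm:first-upper-bound-on-pbid}, that the chosen operation touches none of the black edges of $C$ and does not make $C$ the leftmost cycle --- together with the termination argument given above.
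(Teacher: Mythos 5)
Your proof is correct, and its engine is the same as the paper's: when no $\Delta g=-2$ move is available, \autoref{lemma:2-cycle-crosses-first-inner-grey-edge} and \autoref{lemma:2-cycle-spans-edge-outside-LMC} force the $2$-cycle $C$ into a neutral position, and you repeatedly apply a $\Delta g=-1$ move built from \autoref{lemma:pi-neq-1}, whose cut black edges are incident only to endpoints of the first grey edge and of a grey edge intersecting it, hence never to $C$ --- which is exactly the paper's iteration. Where you genuinely diverge is in the organisation and in the $\pi_1=1$ sub-case. The paper argues that once the leftmost cycle becomes trivial, $C$ must span a black edge outside it, so that \autoref{lemma:2-cycle-spans-edge-outside-LMC} applies (leaving implicit the check that some spanned black edge lies in a \emph{nontrivial} cycle, as that lemma requires); you instead use the parity of $c(G(\pi))$ forced by $\overline{\pi}\in A_{n+1}$ to exhibit a second nontrivial cycle $C'\neq C$ and reuse the move from the proof of \autoref{thm:first-upper-bound-on-pbid} on $C'$, which keeps $C$ intact and closes the case without any spanning argument (under your standing assumption this sub-case is in fact vacuous, but treating it directly is harmless and makes the argument more robust). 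Your strong induction on $g$ also makes termination and the bookkeeping $\Delta g(\pi,\sigma)=-|S|$ explicit, which the paper leaves informal. Two small points to tighten: the inequality $\Delta g\ge -2$ that you combine with the standing assumption comes from the \emph{proof} of \autoref{thm:first-lower-bound-on-pbid}, not from its statement, and you should note that $\Delta g$ is an integer (via $\Delta c\in\{-2,0,2\}$) before concluding that $\Delta g\le -1$ and ``no $-2$ move'' give $\Delta g=-1$; both gaps are trivial to fill.
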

\begin{proof}
Let $C$ denote the $2$-cycle of interest. If $C$ intersects the first grey edge or a cycle different from the leftmost cycle, then we are done (see respectively \autoref{lemma:2-cycle-crosses-first-inner-grey-edge} and \autoref{lemma:2-cycle-spans-edge-outside-LMC}). Otherwise, $C$ intersects another grey edge of the leftmost cycle, and \autoref{lemma:pi-neq-1} allows us to reduce $g(\pi)$ by one while reducing the length of the leftmost cycle without affecting $C$. Repeated applications of  \autoref{lemma:pi-neq-1} eventually yield a permutation $\sigma$ which satisfies one of the following conditions:
\begin{enumerate}
 \item $\sigma_1=1$, in which case $C$ necessarily spans a black edge that does not belong to the leftmost cycle and therefore we can apply \autoref{lemma:2-cycle-spans-edge-outside-LMC};
 \item $\sigma_1\neq 1$ and $C$ intersects another cycle than the leftmost cycle, in which case we can again apply \autoref{lemma:2-cycle-spans-edge-outside-LMC}; or
 \item $\sigma_1\neq 1$ and $C$ intersects the first grey edge, in which case we can apply \autoref{lemma:2-cycle-crosses-first-inner-grey-edge}.
\end{enumerate}
\end{proof}

The above results allow us to easily identify other nontight permutations (with respect to \autoref{thm:first-upper-bound-on-pbid}) whose breakpoint graph contains no $2$-cycle. For instance, if the first grey edge intersects a $3$-cycle $C$, then applying a prefix-block interchange selected according to \autoref{lemma:2-cycle-crosses-first-inner-grey-edge} decreases  the lengths of both the leftmost cycle and $C$, which becomes a $2$-cycle and which therefore eventually allows for a prefix block-interchange that decreases $g(\cdot)$ by $2$ according to \autoref{lemma:2-cycle-except-LMC-implies-better-move-possible}.

The interactions between $2$-cycles prevent us from simply reducing the upper bound of \autoref{thm:first-upper-bound-on-pbid} by the number of $2$-cycles in $G(\pi)$: indeed, the black edge spanned by the $2$-cycle described in \autoref{lemma:2-cycle-spans-edge-outside-LMC} may belong to a $2$-cycle whose length will increase in the resulting permutation. Therefore, we can only conclude the following.

\begin{theorem}\label{thm:second-upper-bound-on-pbid}
For any $\pi$ in $S_n$, we have $pbid(\pi)\leq g(\pi)-\lceil c_2^{\emptyset}(G(\pi))/2\rceil$, where $c_2^{\emptyset}(G(\pi))$ denotes the number of $2$-cycles in $G(\pi)$ excluding the leftmost cycle.
\end{theorem}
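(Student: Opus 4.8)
The plan is to prove the bound by strong induction on $g(\pi)$, using \autoref{thm:first-upper-bound-on-pbid} as the base of the argument. If $c_2^{\emptyset}(G(\pi))=0$, the claimed inequality is exactly \autoref{thm:first-upper-bound-on-pbid} (no recursion is needed), so suppose $c_2^{\emptyset}(G(\pi))=m\ge 1$ and fix a $2$-cycle $C$ of $G(\pi)$ other than the leftmost cycle. \autoref{lemma:2-cycle-except-LMC-implies-better-move-possible} supplies a sequence $S$ of prefix block-interchanges turning $\pi$ into some $\sigma$, each step decreasing $g$ by at least one, such that $\sigma$ admits a prefix block-interchange $\beta$ with $\Delta g(\sigma,\sigma\beta)=-2$. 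Writing $\sigma'=\sigma\beta$, we get $g(\sigma')\le g(\pi)-|S|-2<g(\pi)$, so the induction hypothesis applies to $\sigma'$.

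The heart of the proof is the structural claim that $c_2^{\emptyset}(G(\sigma'))\ge m-2$, i.e.\ that the whole process destroys at most two non-leftmost $2$-cycles. I would establish this by revisiting the operations produced in the proofs of \autoref{lemma:2-cycle-crosses-first-inner-grey-edge}, \autoref{lemma:2-cycle-spans-edge-outside-LMC} and \autoref{lemma:2-cycle-except-LMC-implies-better-move-possible}, using the fact that a prefix block-interchange cuts exactly four black edges and hence modifies only the (at most four) cycles containing them. Concretely: every step of $S$ is either internal to the leftmost cycle (cases~1--3 of \autoref{lemma:pi-neq-1}), in which case it touches no other cycle, or it involves the leftmost cycle together with one further cycle $C'$ (case~4); in the latter situation, were $C'$ itself a non-leftmost $2$-cycle it would intersect the first grey edge, so \autoref{lemma:2-cycle-crosses-first-inner-grey-edge} applied to $C'$ would already yield a single move with $\Delta g=-2$ and we could re-route, taking $C'$ as the distinguished $2$-cycle. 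Hence $S$ may be arranged so that it never destroys a non-leftmost $2$-cycle, while the final move $\beta$ destroys $C$ and --- only in the situation of \autoref{lemma:2-cycle-spans-edge-outside-LMC} --- possibly also the cycle containing the black edge spanned by $C$; any $2$-cycles \emph{created} along the way only help the inequality. This gives $c_2^{\emptyset}(G(\sigma'))\ge m-2$.

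Granting the structural claim, the rest is bookkeeping. Using $\lceil(m-2)/2\rceil=\lceil m/2\rceil-1$, monotonicity of the ceiling, and the induction hypothesis applied to $\sigma'$,
\begin{align*}
pbid(\pi) &\le |S|+1+pbid(\sigma') \le |S|+1+g(\sigma')-\lceil c_2^{\emptyset}(G(\sigma'))/2\rceil\\
 &\le |S|+1+\bigl(g(\pi)-|S|-2\bigr)-\lceil m/2\rceil+1 \;=\; g(\pi)-\lceil m/2\rceil,
\end{align*}
which is the desired bound. The induction is well founded since $g(\iota)=c_2^{\emptyset}(G(\iota))=0$ and each non-identity $\pi$ either has $c_2^{\emptyset}(G(\pi))=0$, and is then dispatched directly by \autoref{thm:first-upper-bound-on-pbid}, or produces a permutation $\sigma'$ with $g(\sigma')<g(\pi)$.

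The main obstacle is exactly the structural claim $c_2^{\emptyset}(G(\sigma'))\ge m-2$. Proving it requires descending into the case analysis underlying \autoref{lemma:pi-neq-1} and \autoref{lemma:2-cycle-except-LMC-implies-better-move-possible} to pin down which cycles each move affects, to rule out the possibility that an auxiliary step of $S$ silently destroys a non-leftmost $2$-cycle with no matching saving (this is what the re-routing argument handles), and to confirm that the final move $\beta$ never destroys more than the two advertised $2$-cycles. The induction skeleton and the arithmetic with $\lceil m/2\rceil$ are then routine.
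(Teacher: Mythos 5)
Your proposal is correct and follows essentially the same route as the paper: the paper also repeatedly invokes \autoref{lemma:2-cycle-except-LMC-implies-better-move-possible} and accounts for the fact that each move achieving $\Delta g=-2$ splits one non-leftmost $2$-cycle and, only in the situation of \autoref{lemma:2-cycle-spans-edge-outside-LMC}, may additionally sacrifice the $2$-cycle containing the spanned black edge, which is exactly the ``at most two $2$-cycles consumed per saving'' bookkeeping behind the $\lceil c_2^{\emptyset}(G(\pi))/2\rceil$ term. Your induction on $g(\pi)$ and the re-routing observation for case~4 of \autoref{lemma:pi-neq-1} simply make explicit what the paper states informally, so there is no substantive difference.
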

\begin{proof}
 We repeatedly apply \autoref{lemma:2-cycle-except-LMC-implies-better-move-possible} to take advantage of suitable $2$-cycles. Each prefix block-interchange we use transforms a $2$-cycle into two $1$-cycles without affecting the other $2$-cycles, except possibly in the case of \autoref{lemma:2-cycle-spans-edge-outside-LMC} when the edge spanned by the $2$-cycle we focus on belongs to another $2$-cycle. In the worst case, every $2$-cycle we try to split forces us to increase the length of a $2$-cycle it intersects, hence the improvement of only $\lceil c_2^{\emptyset}(G(\pi))/2\rceil$ over \autoref{thm:first-upper-bound-on-pbid}.
\end{proof}

\autoref{thm:second-upper-bound-on-pbid} again yields a tight upper bound, as shown by the permutation $\langle 1\ 4\ 3\ 2\rangle$ for which the value of the improved upper bound matches its distance.

\subsection{A tighter lower bound}

A trivial lower bound on $pbid$ is given by the value of the block-interchange distance (denoted by $bid(\pi)$), which can be computed in $O(n)$ time thanks to the following result.

\begin{theorem}\label{thm:bid}
 \cite{Christie1996} For any $\pi$ in $S_n$, we have $bid(\pi)=(n+1-c(G(\pi)))/2$.
\end{theorem}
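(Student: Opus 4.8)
The plan is to prove the two matching inequalities $bid(\pi) \ge (n+1-c(G(\pi)))/2$ and $bid(\pi) \le (n+1-c(G(\pi)))/2$, working in the algebraic setting already introduced via \autoref{lemma:value-of-overline-brack-pi-circ-sigma-brack} and \autoref{lemma:block-interchange-image}.

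For the lower bound, first observe that $c(G(\iota)) = n+1$: in $G(\iota)$ every black edge $\{2i, 2i+1\}$ coincides with a grey edge, so all $n+1$ cycles are trivial. By \autoref{lemma:block-interchange-image} and \autoref{lemma:value-of-overline-brack-pi-circ-sigma-brack}, applying a block-interchange $\beta$ to $\pi$ replaces $\overline{\pi}$ by $\overline{\pi}\,\overline{\beta}^{\pi}$, where $\overline{\beta}^{\pi}$ is a conjugate of the product of two transpositions $(j,\ell)(i,k)$; since right-multiplying a permutation by a single transposition changes its number of disjoint cycles by exactly $\pm 1$, we obtain $\Delta c(\pi,\pi\beta) \in \{-2, 0, 2\}$, which is exactly \autoref{lemma:values-for-delta-c}. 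Hence any sorting sequence of length $t$ satisfies $(n+1) - c(G(\pi)) = c(G(\iota)) - c(G(\pi)) \le 2t$, giving the lower bound; moreover $(n+1) - c(G(\pi))$ is even and nonnegative, since $\overline{\pi} \in A_{n+1}$ forces $c(G(\pi)) = c(\overline{\pi}) \equiv n+1 \pmod 2$.

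For the upper bound it suffices to show that every $\pi \ne \iota$ admits a block-interchange $\beta$ with $\Delta c(\pi,\pi\beta) = 2$: iterating such moves reaches a permutation with $n+1$ cycles, i.e. $\iota$, in exactly $(n+1-c(G(\pi)))/2$ steps, matching the lower bound. Since $\pi \ne \iota$, $\overline{\pi}$ has a cycle $Z$ of length $\ge 2$, and I would distinguish two cases. If $Z$ has length $m \ge 3$, pick three cyclically consecutive elements of $Z$; one then chooses a block-transposition (the degenerate case $j=k$ of a block-interchange, whose image under $\psi$ is a $3$-cycle by \autoref{lemma:block-interchange-image}) so that multiplying $\overline{\pi}$ by $\overline{\beta}^{\pi}$ splits off two fixed points from $Z$, leaving a cycle of length $m-2$; thus $\Delta c = 2$. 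If instead every nontrivial cycle of $\overline{\pi}$ is a $2$-cycle, then there are at least two of them — exactly one is impossible because $\overline{\pi} \in A_{n+1}$ has an even number of even-length cycles — and a block-interchange acting on the two black edges of each of two such $2$-cycles can be chosen to turn them into four trivial cycles, again giving $\Delta c = 2$. The same dichotomy can alternatively be argued directly on $G(\pi)$, using \autoref{every-grey-edge-intersects-another-one} to locate the interleaving grey edges inside a nontrivial cycle and reading off the quadruplet of black edges on which $\beta$ acts, much as in the proof of \autoref{lemma:pi-neq-1}.

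The main obstacle is the verification hidden in the two cases of the upper bound: one must check that the prescribed element of $S_{n+1}$ — the $3$-cycle, respectively the product of two disjoint transpositions — really is of the form $\overline{\beta}^{\pi}$ for a \emph{legal} block-interchange, i.e. that indices $1 \le i < j \le k < \ell \le n+1$ exist whose images under $\pi$ hit the desired elements in the required cyclic positions (this is where the constraint $i < j \le k < \ell$ bites, and where the element $0$ and the wraparound index $n+1$ need a little care). When the three chosen elements of a long cycle sit in an unfavourable order, so that a pure block-transposition would only achieve $\Delta c = 0$, one either rotates to another triple along $Z$ or incorporates a black edge of another cycle, turning the move into a genuine four-edge block-interchange; the combinatorial accounting that this always succeeds is precisely the technical core of \cite{Christie1996}.
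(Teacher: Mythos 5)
First, note that the paper does not prove this statement at all: it is quoted from \cite{Christie1996} and used as a black box, so there is no internal proof to compare yours against. Judged on its own, the lower-bound half of your argument is sound and complete: $c(G(\iota))=n+1$, conjugation preserves cycle type so $\overline{\beta}^{\pi}$ is again a product of two transpositions, each transposition changes the cycle count by $\pm 1$, hence $\Delta c(\pi,\pi\beta)\in\{-2,0,2\}$ (\autoref{lemma:values-for-delta-c}), and the parity argument via $\overline{\pi}\in A_{n+1}$ correctly shows the bound is an integer.

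The upper bound, however, has a genuine gap: everything hinges on the claim that every $\pi\neq\iota$ admits a block-interchange with $\Delta c(\pi,\pi\beta)=+2$, and neither of your two cases actually establishes it. For a long cycle, the assertion that some triple of cyclically consecutive elements yields a \emph{legal} block-transposition splitting off two fixed points is exactly the orientation obstruction that makes \SBT{} hard: for an unoriented cycle no triple works, so ``rotating to another triple along $Z$'' fails, and the fallback of ``incorporating a black edge of another cycle'' is not argued (and such a cycle need not exist or be usable). For the all-$2$-cycle case, the claim that a block-interchange acting on the black edges of \emph{any} two $2$-cycles splits both is false: since $\overline{\beta(i,j,k,\ell)}=(j,\ell)(i,k)$ with $i<j\le k<\ell$, the two transpositions of a block-interchange always have interleaving position supports, so only a pair of $2$-cycles whose black-edge pairs interleave can be resolved simultaneously; one must argue that such a crossing pair exists (e.g.\ via \autoref{every-grey-edge-intersects-another-one} together with the fact that the two grey edges of a single $2$-cycle can never cross each other, by the parity of vertex labels), and you do not. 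You acknowledge this yourself by stating that ``the combinatorial accounting that this always succeeds is precisely the technical core of \cite{Christie1996}'' --- but that accounting \emph{is} the theorem; deferring it to the citation means the upper bound, and hence the equality, is not proven. To close the gap you would need an explicit construction (Christie's is a greedy choice of indices placing a specific element, with a direct verification that the chosen $\beta$ satisfies $1\le i<j\le k<\ell\le n+1$ and increases $c$ by $2$), or a careful case analysis on crossing grey edges of $G(\pi)$ in the spirit of, but distinct from, the prefix-constrained analysis of \autoref{lemma:pi-neq-1}.
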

 
This lower bound often outperforms that of \autoref{thm:first-lower-bound-on-pbid}, but cases exist where the opposite holds ($\langle 1\ 4\ 3\ 2 \rangle$ is the smallest example). As we show below, it is possible to build on this trivial lower bound to obtain a much better and useful lower bound. The resulting lower bound also allows us to compute the maximum value that the prefix block-interchange can reach, a problem we address in \autoref{sec:diameter}.

\begin{definition}
\cite{DBLP:journals/jacm/HannenhalliP99} Let $\pi$ be a permutation. Two cycles $C$ and $D$ of $G(\pi)$ \emph{intersect} if $C$ contains a grey edge $e$ that intersects with a grey edge $f$ of $D$. A \emph{component} of $G(\pi)$ is a connected component of the intersection graph of the nontrivial cycles of $G(\pi)$.
\end{definition}

For instance, the breakpoint graph of \autoref{fig:breakpoint-graph} page~\pageref{fig:breakpoint-graph} has two components: the leftmost cycle, and the pair of intersecting $2$-cycles. Let $CC(G(\pi))$ denote the number of components of $G(\pi)$. We first show that prefix block-interchanges that merge components of the breakpoint graph cannot decrease the number of cycles it contains\footnote{See \autoref{app:proof-lemma:merging-components-cannot-decrease-bid} for the proof.}.

\begin{lemma}\label{lemma:merging-components-cannot-decrease-bid}
 For any $\pi$ in $S_n$, let $\beta$ be a prefix block-interchange with $CC(G(\pi\beta))<CC(G(\pi))$; then $\Delta c(G(\pi, \pi\beta))\in \{-2, 0\}$.
\end{lemma}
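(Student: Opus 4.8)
By \autoref{lemma:values-for-delta-c}, $\Delta c(\pi,\pi\beta)\in\{-2,0,2\}$, so it is enough to rule out $\Delta c(\pi,\pi\beta)=2$; equivalently, I will prove the contrapositive: if $\Delta c(\pi,\pi\beta)=2$, then $CC(G(\pi\beta))\ge CC(G(\pi))$.

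First I would pass to the algebraic picture. By \autoref{lemma:value-of-overline-brack-pi-circ-sigma-brack} and \autoref{lemma:block-interchange-image}, $\overline{\pi\beta}=\overline{\pi}(\overline{\beta}^{\pi})$, where $\overline{\beta}^{\pi}$ is a product of two transpositions. Since right-multiplying a permutation by a transposition either merges two of its cycles or splits one of them, $\Delta c(\pi,\pi\beta)=2$ forces \emph{both} transpositions to act as splits. Hence $\beta$ acts on either a single cycle $A$ of $\overline{\pi}$, which it breaks into three cycles of $\overline{\pi\beta}$, or on two cycles $A,B$, each of which it breaks into two; in either case every nontrivial cycle of $G(\pi\beta)$ uses only grey edges belonging to one cycle of $G(\pi)$ (a piece of $A$, a piece of $B$, or a cycle on which $\beta$ does not act), which I will call its \emph{origin}.

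Next I would use monotonicity of the intersection relation, for which it is convenient to argue in the cycle graph: there the reference order is the fixed natural order, two cycles intersect precisely when their vertex sets interleave in it, breaking a cycle into sub-cycles only shrinks vertex sets, and interleaving is preserved when a vertex set is enlarged. It follows that whenever two cycles of $G(\pi\beta)$ intersect, their origins either intersect or coincide, so the map sending a nontrivial cycle of $G(\pi\beta)$ to its origin carries a path of pairwise intersecting cycles to a walk of intersecting-or-equal cycles and therefore descends to a map $\Phi$ from the components of $G(\pi\beta)$ to those of $G(\pi)$. Every component $X$ of $G(\pi)$ contains a nontrivial cycle which either survives unchanged in $G(\pi\beta)$ (if $\beta$ does not act on it) or keeps a nontrivial piece there, and $\Phi$ sends the component of that surviving cycle (or piece) into $X$; thus $\Phi$ is surjective and $CC(G(\pi\beta))\ge CC(G(\pi))$, as wanted.

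The step I expect to be the main obstacle is the surjectivity of $\Phi$: it can fail exactly when $\beta$ turns \emph{every} cycle of some component of $G(\pi)$ into a trivial cycle, which — since a single block-interchange acts on at most four black edges — can only affect components of bounded total length. Controlling the component count in that residual situation requires a finer, breakpoint-graph-level analysis that tracks precisely which cycles the four black edges of $\beta$ belong to and how the prefix constraint restricts their configuration; I would dispose of this case separately, after settling the generic case above.
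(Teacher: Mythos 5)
Your reduction to ruling out $\Delta c(\pi,\pi\beta)=2$ (via \autoref{lemma:values-for-delta-c}) and the observation that this forces both $2$-cycles of $\overline{\beta}^{\pi}$ to act as splits are fine, but the step you defer --- surjectivity of $\Phi$ when $\beta$ turns every cycle of some component into trivial cycles --- is not a residual technicality that can be ``disposed of separately'': it is exactly where your contrapositive fails. Take $\pi=\langle 3\ 2\ 1\rangle$ and the prefix block-interchange $\beta(1,2,3,4)$, which sorts $\pi$ in one step (the paper itself records $pbid(\langle 3\ 2\ 1\rangle)=1$): both $2$-cycles of $G(\pi)$ are split into trivial cycles, so $\Delta c(\pi,\pi\beta)=2$ while $CC(G(\pi\beta))=0<1=CC(G(\pi))$; the same happens for $\pi=\langle 2\ 1\rangle$, whose single $3$-cycle is split into three trivial cycles by the prefix exchange. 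So the implication ``$\Delta c=2$ implies $CC$ does not decrease'' is false as stated, and your plan can only be completed after reinterpreting the hypothesis the way the paper implicitly does, namely that $\beta$ \emph{merges} components (creates a new intersection between cycles coming from distinct components of $G(\pi)$), which is also the only way the lemma is used in the proof of \autoref{thm:second-lower-bound-on-pbid}. The paper's proof takes that route: it first shows (\autoref{lemma:if-CC-decreases-then-pbi-acts-on-more-than-one-cycle}, by proving that independence of grey edges is preserved) that such a $\beta$ cannot act on a single cycle, and then verifies $\Delta c\in\{-2,0\}$ by an exhaustive case analysis of the configurations in which the four black edges of $\beta$ meet two, three or four cycles.

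A second, independent problem is your monotonicity step. Intersection of cycles is defined through grey edges crossing with respect to \emph{positions} in the current permutation, and $\beta$ changes those positions; your claim that two cycles intersect precisely when their vertex sets interleave in the fixed natural order is a nontrivial position-independent characterisation that you neither prove nor cite, and nothing in the paper supplies it. If it were available, the intersection relation would depend only on how the grey edges are partitioned into cycles, which is exactly the kind of control the paper's appendix obtains by hand through its case analysis; as written, the assertion that crossing cycles of $G(\pi\beta)$ have crossing (or equal) origins in $G(\pi)$ is unsupported, so both halves of your argument --- the generic case and the deferred case --- still need genuine work.
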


\begin{theorem}\label{thm:second-lower-bound-on-pbid}
 For any $\pi$ in $S_n$, we have $pbid(\pi)\ge bid(\pi) + CC(G(\pi)) -  \left\{\begin{array}{ll}
0 & \mbox{if } \pi_1=1, \\
1 & \mbox{otherwise}.
\end{array}
\right.
$
\end{theorem}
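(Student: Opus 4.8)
The plan is to argue by induction on $pbid(\pi)$, or equivalently to show that the potential $\phi(\pi):=bid(\pi)+CC(G(\pi))-f(\pi)$ (with $f$ as in \autoref{sec:ub}) drops by at most one under any single prefix block-interchange. Since $\phi(\iota)=0$, telescoping this inequality along an optimal sorting sequence for $\pi$ yields $pbid(\pi)\ge\phi(\pi)$, which is the claim. So the whole proof reduces to establishing, for every $\pi$ in $S_n$ and every prefix block-interchange $\beta$, that $\Delta bid(\pi,\pi\beta)+\Delta CC(\pi,\pi\beta)-\Delta f(\pi,\pi\beta)\ge -1$, where $\Delta bid$ and $\Delta CC$ are defined by analogy with $\Delta c$ and $\Delta f$.

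First I would collect the easy ingredients: by \autoref{thm:bid}, $\Delta bid=-\Delta c/2$, hence $\Delta bid\in\{-1,0,1\}$ by \autoref{lemma:values-for-delta-c}; $\Delta f\in\{-1,0,1\}$; and $\Delta f=1$ forces $\pi_1=1$ by definition of $f$, while conversely a prefix block-interchange $\beta(1,j,k,\ell)$ brings $\pi_k$ (which differs from $\pi_1$ because $k>1$) to the front, so $\pi_1=1$ already implies $(\pi\beta)_1\neq 1$, i.e. $\Delta f=1$. The crucial extra point is that when $\pi_1=1$ the pair $\{\pi'_0,\pi'_1\}$ is a trivial cycle of $G(\pi)$, and since every prefix block-interchange acts on the black edge $\{\pi'_0,\pi'_1\}$, passing to the $\overline{\pi}$-picture through \autoref{lemma:value-of-overline-brack-pi-circ-sigma-brack} and \autoref{lemma:block-interchange-image} shows that the factor of the image incident to $1$ necessarily merges this trivial cycle into a larger one, so $\Delta c\le 0$ (equivalently $\Delta bid\ge 0$) whenever $\pi_1=1$. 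This unavoidable ``wasted'' merge is precisely what lets the $\pi_1=1$ case tolerate the stronger bound (no correction term), while $\pi_1\neq 1$ only earns the correction $-1$.

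The core of the argument is a case split on whether $\beta$ merges two components of $G(\pi)$, paralleling \autoref{lemma:merging-components-cannot-decrease-bid}. If $\beta$ does not merge components, then $CC$ can decrease only by \emph{resolving} a component, i.e. turning all of its cycles into trivial cycles; since a block-interchange increases $c$ by at most $2$, at most one component can be resolved by $\beta$, which already bounds the loss in $CC$, and resolving a component is tied to the value of $(\pi\beta)_1$ (hence to $\Delta f$) — resolving the leftmost component forces $(\pi\beta)_1=1$ and $\Delta f=-1$ — so a short check of the resulting sign patterns of $(\Delta c,\Delta CC,\Delta f)$ gives $\Delta bid+\Delta CC-\Delta f\ge -1$. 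If $\beta$ does merge components, then \autoref{lemma:merging-components-cannot-decrease-bid} gives $\Delta bid\ge 0$, and I would exploit that the image of $\beta$ acts on $\overline{\pi}$ as a product of at most two transpositions (two disjoint $2$-cycles, or a $3$-cycle), each performing one elementary cycle-merge or cycle-split: a split merges no components, and each cycle-merge lowers $CC$ by at most one while lowering $c$ by exactly one; hence the loss in $CC$ due to merging is at most the number of cycle-merges, which is two (and then $\Delta bid=1$) when $\Delta c=-2$ and one (and then $\Delta bid=0$) when $\Delta c=0$ — and in the latter case one checks that $\pi_1\neq 1$ (otherwise the wasted merge of the trivial leftmost cycle would force $\Delta c=-2$, not $0$), so $\Delta f\le 0$. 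In every case this gives the per-step inequality.

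The main obstacle is the bookkeeping on $CC(G(\cdot))$. A block-interchange permutes the positions of \emph{all} vertices of the cycle graph, so a cycle split or merge can relocate grey edges and create or destroy intersections arbitrarily far from the black edges on which $\beta$ acts; consequently the bounds on $\Delta CC$, and their refinements linking a drop in $CC$ to $\Delta c$ and to the fate of the leftmost cycle, cannot be read off locally and require a structural analysis of how prefix block-interchanges reshape the intersection graph — in the same spirit as, and reusing the machinery behind, the proof of \autoref{lemma:merging-components-cannot-decrease-bid} in \autoref{app:proof-lemma:merging-components-cannot-decrease-bid}. Once the per-step inequality is in hand, the telescoping argument of the first paragraph finishes the proof.
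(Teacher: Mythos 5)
Your reduction of the theorem to a per-move inequality is where the proposal breaks: the claim that $\phi(\pi)=bid(\pi)+CC(G(\pi))-f(\pi)$ drops by at most $1$ under a single prefix block-interchange is false, so the telescoping argument cannot get off the ground. Concretely, take $\pi=\langle 1\ 3\ 2\ 4\ 6\ 5\ 7\ 9\ 8\rangle$ (the paper's extremal permutation for $n=9$): $G(\pi)$ has the trivial leftmost cycle plus three pairwise non-intersecting $3$-cycles, so $c(G(\pi))=4$, $bid(\pi)=3$ by \autoref{thm:bid}, $CC(G(\pi))=3$, $f(\pi)=0$, hence $\phi(\pi)=6$. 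Apply the prefix block-interchange $\beta(1,2,5,8)$, which exchanges the prefix $\langle 1\rangle$ with the block $\langle 6\ 5\ 7\rangle$, giving $\sigma=\langle 6\ 5\ 7\ 3\ 2\ 4\ 1\ 9\ 8\rangle$. One transposition of the image merges the trivial leftmost cycle with the second $3$-cycle, the other merges the first and third $3$-cycles, so $\Delta c=-2$ and $bid(\sigma)=4$; but the resulting $4$-cycle and $6$-cycle intersect (e.g.\ the grey edge $\{0,1\}$, spanning positions $0$ to $13$ of $\sigma'$, crosses the grey edge $\{2,3\}$, spanning positions $9$ to $14$), so $CC(G(\sigma))=1$, and $f(\sigma)=1$ since $\sigma_1=6$. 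Thus $\phi(\sigma)=4+1-1=4$ and $\phi$ drops by $2$ in one move. This lands exactly in the case your second branch waves away: when $\Delta c=-2$ you have $\Delta bid=1$ but $\Delta CC$ can be $-2$ \emph{and} $\Delta f$ can be $+1$ (the merged trivial leftmost cycle is what makes $\pi_1=1$ compatible with $\Delta c=-2$, and the two merged cycles may come to intersect each other afterwards), so the sum is $-2$; your closing sentence ``in every case this gives the per-step inequality'' is not justified, and as the example shows it cannot be repaired by tighter bookkeeping on $\Delta CC$. (A secondary soft spot: in the non-merging branch, $CC$ can also change because cycles untouched by $\beta$ get repositioned and gain or lose intersections, so ``resolving a component'' is not the only mechanism.)

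Note also that the paper does not argue per move: its proof is a global accounting over an entire sorting sequence — every component must eventually be sorted, which costs at least its contribution to $bid(\pi)$, and every component other than the one reachable from position $1$ requires at least one additional merging move, which by \autoref{lemma:merging-components-cannot-decrease-bid} can never decrease $bid$. The move above is consistent with that accounting ($bid$ went up while two extra components were absorbed at the price of one operation), which is precisely why a memoryless potential bound fails: to salvage your approach you would need an amortised argument charging such $\phi$-drops of $2$ against moves that must occur later, and that is in essence the sequence-level argument the paper gives.
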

\begin{proof}
The expression for the lower bound corresponds to the following strategy: for each component $C$ of $G(\pi)$, sort the corresponding subpermutation if $C$ contains $0$, or use a prefix block-interchange to make it contain $0$ and then sort it. Trivially, the number of steps in the sorting stage cannot be lower than the number or unrestricted block-interchanges it would require. Any other strategy would have to merge components; however, by \autoref{lemma:merging-components-cannot-decrease-bid}, a prefix block-interchange $\beta$ that merges connected components cannot increase the number of cycles, and therefore $bid(\pi\beta)\ge bid(\pi)$ for any permutation $\pi$ and any such prefix block-interchange.
\end{proof}

\section{The maximum value of the prefix block-interchange distance}\label{sec:diameter}

The \emph{diameter} of $S_n$ is the maximum value that a distance can reach for a particular family of operations. 
In this section, we use our results to compute its exact value in the case of prefix block-interchanges, and show along the way that our 2-approximation algorithm based on the breakpoint graph is also a 2-approximation with respect to the following notion.

\begin{definition}\cite{dias-prefix}\label{def:breakpoint}
Let $\pi$ be a permutation of $\{0, 1, 2, \ldots, n+1\}$ with $\pi_0=0$ and $\pi_{n+1}=n+1$. The pair $(\pi_i, \pi_{i+1})$ with $0\le i\le n$ is a \emph{breakpoint} if $i=0$ or $\pi_{i+1} - \pi_i\neq 1$, and an \emph{adjacency} otherwise. The number of breakpoints in a permutation $\pi$ is denoted by $\breakpoints{\pi}$. 
\end{definition}

For readability, we slightly abuse notation by using $\breakpoints{\pi}$ for $\pi$ in $S_n$, with the understanding that it refers to $\breakpoints{\langle 0\ \pi_1\ \pi_2\ \cdots\ \pi_n\ n+1\rangle}$. We let $\Delta b(\pi, \sigma)=b(\sigma)-b(\pi)$, and say that a prefix block-interchange $\beta$ with $\Delta b(\pi, \pi\beta)<0$ \emph{removes} breakpoints, or \emph{creates} adjacencies.

\begin{lemma}\label{lemma:delta-breakpoints-pbi}
For any $\pi$ in $S_n$ and any prefix block-interchange $\beta$, we have $|\Delta b(\pi,\pi\beta)|\le 3$.
\end{lemma}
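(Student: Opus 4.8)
\textbf{Proof plan for \autoref{lemma:delta-breakpoints-pbi}.}

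The plan is to bound the number of adjacencies that a single prefix block-interchange can create or destroy by counting the positions where the linear arrangement $\langle 0\ \pi_1\ \cdots\ \pi_n\ n+1\rangle$ is cut and re-glued. A prefix block-interchange $\beta(1,j,k,\ell)$ rearranges $\pi$ by exchanging the prefix $\pi_1\cdots\pi_{j-1}$ with the interval $\pi_k\cdots\pi_{\ell-1}$; the only pairs $(\pi_i,\pi_{i+1})$ whose status (breakpoint versus adjacency) can change are those straddling one of the boundaries that are severed, namely the boundary before position $1$ (i.e. the pair $(0,\pi_1)$), the boundary between positions $j-1$ and $j$, the boundary between $k-1$ and $k$, and the boundary between $\ell-1$ and $\ell$. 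First I would observe that the pair $(0,\pi_1)$ is \emph{always} a breakpoint by \autoref{def:breakpoint} (the case $i=0$), both before and after $\beta$; hence the boundary at the front of the prefix never contributes a change. That leaves at most three boundaries — at $j$, $k$, and $\ell$ — whose surrounding pairs can switch, so $|\Delta b(\pi,\pi\beta)|\le 3$.

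The key steps, in order, are: (1) recall from \autoref{def:block-interchange} that $\beta(1,j,k,\ell)$ fixes every element in positions $1,\dots,i-1$ (vacuous here since $i=1$), positions $j,\dots,k-1$, and positions $\ell,\dots,n$, and merely permutes the two exchanged blocks internally in an order-preserving way, so that \emph{within} each of the four untouched or translated runs all consecutive pairs retain their breakpoint/adjacency status; (2) conclude that the only pairs that can change are the (at most) four pairs sitting exactly at the junctions $(0\mid \pi_1)$, $(\pi_{j-1}\mid \pi_j)$, $(\pi_{k-1}\mid \pi_k)$, $(\pi_{\ell-1}\mid \pi_\ell)$ in $\pi$, together with the corresponding (at most) four pairs at the analogous junctions in $\pi\beta$; (3) use the fact that $(0,\pi_1)$ and the leading pair of $\pi\beta$ are both breakpoints, so the front junction contributes $0$ to $\Delta b$; (4) bound the net change at the remaining three junctions: each can turn a breakpoint into an adjacency (contributing $-1$) or vice versa (contributing $+1$), giving $-3\le\Delta b(\pi,\pi\beta)\le 3$.

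A small subtlety to address carefully is boundary degeneracy: when $j=k$ (a prefix block-transposition) or $\ell=n+1$ (the second block abuts the right sentinel $n+1$), two of the three junctions coincide or one of them involves the sentinel rather than an interior element, which can only \emph{reduce} the number of pairs whose status changes, never increase it; I would note this in one sentence rather than splitting into subcases. The main obstacle — though it is more bookkeeping than depth — is making precise the claim that the internal order of each exchanged block is preserved, so that no \emph{new} junctions are created inside a block: this is immediate from the two-line form in \autoref{def:block-interchange}, where the block $\pi_k\cdots\pi_{\ell-1}$ reappears verbatim in the same left-to-right order, and likewise for $\pi_1\cdots\pi_{j-1}$. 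Once that is stated, the count of at most three mutable junctions is direct and the bound $|\Delta b(\pi,\pi\beta)|\le 3$ follows.
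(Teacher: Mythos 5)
Your proposal is correct and follows essentially the same argument as the paper's proof: a prefix block-interchange touches at most four junction pairs, one of which is the front pair $(0,\cdot)$, which is a breakpoint both before and after by \autoref{def:breakpoint}, leaving at most three pairs whose status can change and hence $|\Delta b(\pi,\pi\beta)|\le 3$. Your added remarks on block-internal order preservation and the degenerate cases $j=k$ or $\ell=n+1$ are fine elaborations of the same counting, which the paper states more tersely.
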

\begin{proof}
A prefix block-interchange $\beta$ acts on at most four pairs of adjacent elements, including the pair $(0,\pi_1)$ which always counts as a breakpoint. Therefore, the number of breakpoints that $\beta$ can remove or create lies in the set $\{0, 1, 2, 3\}$.
\end{proof}

Since $\iota$ is the only permutation with exactly one breakpoint, \autoref{lemma:delta-breakpoints-pbi} immediately implies the following corollary.

\begin{corollary}\label{lemma:breakpoint-lower-bound-on-pbid}
For any $\pi$ in $S_n: pbid(\pi)\geq \left\lceil\frac{\breakpoints{\pi}-1}{3}\right\rceil.$
\end{corollary}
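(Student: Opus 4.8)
The plan is to derive the inequality directly from the per-move bound on the change in breakpoint count, exactly as the sentence preceding the statement suggests. First I would record the two ingredients needed. The first is \autoref{lemma:delta-breakpoints-pbi}: a single prefix block-interchange changes $\breakpoints{\cdot}$ by at most $3$ in absolute value, and in particular can \emph{decrease} it by at most $3$. The second is the observation that the identity is the unique permutation with $\breakpoints{\iota}=1$; this holds because, under the convention of \autoref{def:breakpoint}, the pair $(0,\pi_1)$ always counts as a breakpoint, while any $\pi\neq\iota$ must contain at least one further index $i\ge 1$ with $\pi_{i+1}-\pi_i\neq 1$, hence $\breakpoints{\pi}\ge 2$ for $\pi \neq \iota$.

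Next I would take an optimal sorting sequence $\beta_1,\beta_2,\ldots,\beta_{pbid(\pi)}$ for $\pi$ and track the value of $\breakpoints{\cdot}$ along the intermediate permutations $\pi,\ \pi\beta_1,\ \pi\beta_1\beta_2,\ \ldots,\ \iota$. The net decrease over the whole sequence equals $\breakpoints{\pi}-\breakpoints{\iota}=\breakpoints{\pi}-1$, and by the first ingredient this net decrease is at most $3\cdot pbid(\pi)$. Rearranging gives $pbid(\pi)\ge(\breakpoints{\pi}-1)/3$, and since $pbid(\pi)$ is a nonnegative integer the right-hand side may be replaced by its ceiling, which is precisely the claimed bound.

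I do not expect any genuine obstacle here: the statement is an immediate corollary of \autoref{lemma:delta-breakpoints-pbi}, and the argument above is essentially a one-line counting estimate. The only point worth spelling out is the boundary convention forcing $\breakpoints{\iota}=1$ rather than $0$, since that is exactly what produces the ``$-1$'' in the numerator of the bound; once that is noted, nothing further is required.
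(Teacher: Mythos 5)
Your proposal is correct and follows exactly the paper's route: the paper also deduces the corollary immediately from \autoref{lemma:delta-breakpoints-pbi} together with the fact that $\iota$ is the unique permutation with exactly one breakpoint, and your write-up merely spells out the standard telescoping/counting step that the paper leaves implicit.
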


\begin{lemma}\label{lemma:breakpoint-upper-bound-on-pbid}
 For any $\pi$ in $S_n$, we have $pbid(\pi)\le 2\left\lceil\frac{\breakpoints{\pi}-1}{3}\right\rceil$.
\end{lemma}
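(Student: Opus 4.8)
The plan is to reuse the sorting strategy behind \autoref{thm:first-upper-bound-on-pbid} (as implemented by \autoref{algo:2approx}), but to keep track of $b(\cdot)$ rather than $g(\cdot)$. The bridge between the two is the identity
\[
b(\pi)\ =\ g(\pi) + bid(\pi) + 1\qquad\text{for every }\pi\in S_n,
\]
which I would obtain by adding the defining expression for $g$ (\autoref{eqn:labarre-lower-bound-ptd}) to \autoref{thm:bid} and using that the $1$-cycles of $G(\pi)$ are precisely the adjacencies of $\pi$, plus the pair $(0,\pi_1)$ when $\pi_1=1$; equivalently $c_1(G(\pi)) = \bigl((n+1)-b(\pi)\bigr) + \bigl(1-f(\pi)\bigr)$. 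Combined again with \autoref{thm:bid}, this identity yields the per-move relation
\[
\Delta b(\pi,\pi\beta)\ =\ \Delta g(\pi,\pi\beta) - \frac{\Delta c(\pi,\pi\beta)}{2},
\]
valid for any permutation $\pi$ and any block-interchange $\beta$.

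I would then read off the breakpoint effect of the two kinds of moves used by the strategy. If $\pi_1\neq 1$, \autoref{lemma:pi-neq-1} supplies a prefix block-interchange $\beta$ with $\Delta c(\pi,\pi\beta)=2$ and $\Delta g(\pi,\pi\beta)\le -1$; the relation above then gives $\Delta b(\pi,\pi\beta) = \Delta g(\pi,\pi\beta)-1\le -2$, so this move removes at least two breakpoints. If $\pi_1=1$ and $\pi\neq\iota$, the move $\beta$ used in the proof of \autoref{thm:first-upper-bound-on-pbid} has $\Delta c(\pi,\pi\beta)=0$ and $\Delta g(\pi,\pi\beta)=-1$, hence $\Delta b(\pi,\pi\beta)=-1$; moreover $\Delta f(\pi,\pi\beta)=1$ forces $(\pi\beta)_1\neq 1$, so $\pi\beta\neq\iota$ and the next iteration is of the ``$\pi_1\neq 1$'' type. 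Thus no two ``$\pi_1=1$'' iterations are consecutive and each of them is immediately followed by a ``$\pi_1\neq 1$'' iteration; writing $p$ and $r$ for the numbers of iterations of the two kinds, this injection shows $p\le r$.

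It remains to assemble the bound. The strategy sorts $\pi$ in $p+r$ moves, and since the total decrease of $b$ equals $b(\pi)-1$, with each ``$\pi_1=1$'' move accounting for exactly $1$ and each ``$\pi_1\neq 1$'' move for at least $2$, we get $b(\pi)-1\ge p+2r$. Using $p\le r$ (which is exactly $3(p+r)\le 2(p+2r)$),
\[
pbid(\pi)\ \le\ p+r\ \le\ \frac{2}{3}\,(p+2r)\ \le\ \frac{2}{3}\,\bigl(b(\pi)-1\bigr)\ \le\ 2\left\lceil\frac{b(\pi)-1}{3}\right\rceil .
\]
The step I expect to require the most care is the first identity: getting the constant right hinges on the convention, fixed in \autoref{def:breakpoint}, that $(0,\pi_1)$ is always a breakpoint even when $\pi_1=1$, which is precisely where ``adjacency of $\pi$'' and ``$1$-cycle of $G(\pi)$'' fail to coincide. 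Once that bookkeeping is done, everything else is a short deduction from \autoref{lemma:pi-neq-1}, \autoref{thm:first-upper-bound-on-pbid} and \autoref{thm:bid}.
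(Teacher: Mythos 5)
Your proposal is correct and follows essentially the same route as the paper: the same two kinds of moves (\autoref{lemma:pi-neq-1} when $\pi_1\neq 1$, the move from the proof of \autoref{thm:first-upper-bound-on-pbid} when $\pi_1=1$), with each $\pi_1=1$ step paired with the immediately following step so that, amortised, two prefix block-interchanges remove at least three breakpoints. The only difference is presentational: you track the breakpoint changes through the identity $\breakpoints{\pi}=g(\pi)+bid(\pi)+1$ (equivalently $\Delta b=\Delta g-\Delta c/2$), whereas the paper counts directly via the correspondence between adjacencies and trivial cycles of $G(\pi)$; your bookkeeping is, if anything, a bit more explicit about why the $\pi_1\neq 1$ move removes at least two breakpoints in every subcase.
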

\begin{proof}
Assume $\pi\neq\iota$ to avoid triviality, and observe that adjacencies in $\langle 0\ \pi_1\ \pi_2\ \cdots\ \pi_n\ n+1\rangle$ are in one-to-one correspondence with trivial cycles in $G(\pi)$ (except for the pair $(0, \pi_1)$ which by \autoref{def:breakpoint} is always a breakpoint). If $\pi_1\neq 1$, then \autoref{lemma:pi-neq-1} guarantees the existence of a prefix block-interchange $\beta$ with $\Delta c_1(\pi, \pi\beta)\ge 2$ and in turn implies $\Delta b(\pi, \pi\beta)\ge 2$. If $\pi_1=1$, then we select $\beta$ as in the proof of \autoref{thm:first-upper-bound-on-pbid}, which creates a new trivial cycle in $G(\pi\beta)$ that corresponds to a new adjacency in $\pi\beta$. Since $\pi\beta_1\neq 1$, the previous case provides the next operation, and the number of breakpoints decreases by at least three using two prefix block-interchanges.
\end{proof}

Since $\breakpoints{\pi}\le n+1$ for all $\pi$ in $S_n$, we immediately obtain the following.

\begin{corollary}\label{cor:upper-bound-diameter}
 For any $\pi\in S_n$, we have $pbid(\pi)\le 2n/3$.
\end{corollary}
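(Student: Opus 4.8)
The plan is to read off \autoref{cor:upper-bound-diameter} directly from \autoref{lemma:breakpoint-upper-bound-on-pbid}: the latter already bounds $pbid(\pi)$ in terms of the breakpoint count $\breakpoints{\pi}$, so the only missing ingredient is a uniform upper bound on $\breakpoints{\pi}$ valid for every permutation of $[n]$.

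First I would record the elementary estimate $\breakpoints{\pi}\le n+1$ for all $\pi\in S_n$. Indeed, by \autoref{def:breakpoint}, $\breakpoints{\pi}$ counts the breakpoints among the $n+1$ consecutive pairs $(0,\pi_1),(\pi_1,\pi_2),\ldots,(\pi_{n-1},\pi_n),(\pi_n,n+1)$ of the padded sequence $\langle 0\ \pi_1\ \cdots\ \pi_n\ n+1\rangle$, and there are exactly $n+1$ such pairs, so at most $n+1$ of them can be breakpoints. (This can be sharpened slightly — $(0,\pi_1)$ is always a breakpoint, and $(\pi_n,n+1)$ is an adjacency whenever $\pi_n=n$ — but the crude version suffices.) I would then substitute $\breakpoints{\pi}-1\le n$ into the inequality $pbid(\pi)\le 2\lceil(\breakpoints{\pi}-1)/3\rceil$ of \autoref{lemma:breakpoint-upper-bound-on-pbid}, using that $x\mapsto\lceil x\rceil$ is nondecreasing, to obtain $pbid(\pi)\le 2\lceil n/3\rceil$; when $3\mid n$ this is exactly $2n/3$, which is the claimed bound, and in the remaining residue classes $2\lceil n/3\rceil$ differs from $2n/3$ only by a bounded additive term.

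I do not expect any genuine obstacle here: the real content of the argument already resides in the chain of results culminating in \autoref{lemma:breakpoint-upper-bound-on-pbid}, in particular \autoref{lemma:pi-neq-1} together with \autoref{thm:first-upper-bound-on-pbid}, which imply that two prefix block-interchanges always suffice to remove at least three breakpoints; all of these we are entitled to assume. The single point that warrants a little care is the ceiling in the last step: to replace $2\lceil n/3\rceil$ by the exact diameter value $\lfloor 2n/3\rfloor$ treated in \autoref{sec:diameter}, one would either run a residue-aware amortised count of how many breakpoints each prefix block-interchange removes, or observe that a permutation with $\breakpoints{\pi}$ close to $n+1$ has almost no adjacencies and hence many $2$-cycles in $G(\pi)$, so that the stronger upper bound of \autoref{thm:second-upper-bound-on-pbid} absorbs the discrepancy.
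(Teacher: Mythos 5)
Your derivation is precisely the paper's own proof: \autoref{cor:upper-bound-diameter} is obtained there by the single observation that $b(\pi)\le n+1$ for every $\pi\in S_n$, substituted into \autoref{lemma:breakpoint-upper-bound-on-pbid}, so in that sense you have matched the intended argument. The reservation you raise at the end, however, is a genuine subtlety and not just a cosmetic one: the substitution only yields $pbid(\pi)\le 2\lceil n/3\rceil$, which coincides with $2n/3$ exactly when $3\mid n$ and exceeds it by $4/3$ (for $n\equiv 1\pmod 3$) or $2/3$ (for $n\equiv 2\pmod 3$); the paper's ``immediately'' glosses over this, even though the corollary is later used in its sharp form $\lfloor 2n/3\rfloor$ for every residue of $n$ in the diameter theorem. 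So your instinct is correct that pinning down the constant $2n/3$ for $n\not\equiv 0\pmod 3$ requires an extra step --- for instance a residue-aware amortised count of breakpoints removed per move, or an appeal to \autoref{thm:second-upper-bound-on-pbid} when $b(\pi)$ is close to $n+1$ --- and your proposal, like the paper's one-line proof, sketches but does not carry out that step. In short: same approach as the paper, fully rigorous for $n\equiv 0\pmod 3$, and the residual discrepancy you flag is a gap shared with, not introduced relative to, the paper's own argument.
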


\begin{theorem}
 The diameter of $S_n$ under prefix block-interchanges is $\lfloor 2n/3\rfloor$.
\end{theorem}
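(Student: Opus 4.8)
The upper bound requires nothing new: by \autoref{cor:upper-bound-diameter} every $\pi\in S_n$ satisfies $pbid(\pi)\le 2n/3$, and since $pbid(\pi)$ is an integer this gives $pbid(\pi)\le\lfloor 2n/3\rfloor$. So the plan is to exhibit, for each $n$, a permutation $\pi\in S_n$ with $pbid(\pi)\ge\lfloor 2n/3\rfloor$, which together with the upper bound fixes the diameter at $\lfloor 2n/3\rfloor$.

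The lower bound will come from \autoref{thm:second-lower-bound-on-pbid}, the idea being to make $G(\pi)$ as fragmented as possible. If $G(\pi)$ consists of $q$ nontrivial cycles, all of them $3$-cycles, each forming a component on its own, and with no trivial cycle at all, then $n+1=3q$, \autoref{thm:bid} gives $bid(\pi)=(n+1-q)/2=q$ and $CC(G(\pi))=q$; choosing moreover $\pi_1\ne1$, \autoref{thm:second-lower-bound-on-pbid} yields $pbid(\pi)\ge q+q-1=2q-1=\lfloor 2n/3\rfloor$ for $n=3q-1$. The cases $n\equiv0$ and $n\equiv1$ modulo $3$ will be dealt with by small perturbations — a leading fixed point, which removes the $-1$ penalty, respectively one $3$-cycle replaced by a $5$-cycle — and the same short computation again produces $\lfloor 2n/3\rfloor$.

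To control $G(\pi)$ I would use its behaviour under \emph{direct sums}: write $\pi=\sigma\oplus\tau$ when $\pi$ acts as $\sigma$ on its first $|\sigma|$ positions, using the $|\sigma|$ smallest values, and as a shifted copy of $\tau$ on the remaining positions. Working from \autoref{def:breakpoint-graph}, one checks that $\sigma$ and $\tau$ contribute their cycles to $G(\sigma\oplus\tau)$ on disjoint intervals of vertices — so no cycle coming from $\sigma$ intersects one coming from $\tau$, and in particular they fall into distinct components — with the single exception that the cycle of $G(\sigma)$ through its last black edge and the cycle of $G(\tau)$ through its first black edge get spliced into one cycle at the ``seam''; hence $c(G(\sigma\oplus\tau))=c(G(\sigma))+c(G(\tau))-1$, and a length-$p$ and a length-$p'$ cycle meeting at a seam merge into a length-$(p+p'-1)$ cycle (so absorbing a trivial cycle at a seam leaves the other cycle's length unchanged). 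It then remains only to inspect the breakpoint graphs of a handful of constant-size gadgets: $\langle 2\ 1\rangle$ has a single $3$-cycle, $\langle 1\ 3\ 2\rangle$ has a single $3$-cycle together with one trivial cycle carried by its first black edge, $\langle 2\ 1\ 4\ 3\rangle$ has a single $5$-cycle, and the fixed point $\langle 1\rangle$ contributes only trivial cycles. Chaining these blocks gives the three families:
\begin{itemize}
\item if $n=3k-1$, take $\pi=\langle 2\ 1\rangle\oplus\langle 1\ 3\ 2\rangle\oplus\cdots\oplus\langle 1\ 3\ 2\rangle$ with $k-1$ copies of $\langle 1\ 3\ 2\rangle$ (explicitly $\pi=\langle 2\ 1\ 3\ 5\ 4\ 6\ 8\ 7\ \cdots\rangle$); each seam absorbs the following block's trivial cycle, leaving $k$ pairwise non-intersecting $3$-cycles and no trivial cycle, so $c(G(\pi))=k$, $bid(\pi)=k$, $CC(G(\pi))=k$, $\pi_1=2\ne1$, and $pbid(\pi)\ge 2k-1=\lfloor 2n/3\rfloor$;
\item if $n=3k$, prepend a fixed point to the previous permutation: $\pi=\langle 1\ 3\ 2\ 4\ 6\ 5\ \cdots\rangle$; now $\pi_1=1$ and $G(\pi)$ acquires exactly one extra trivial cycle, so $c(G(\pi))=k+1$, $bid(\pi)=k$, $CC(G(\pi))=k$, and $pbid(\pi)\ge k+k=2k=\lfloor 2n/3\rfloor$;
\item if $n=3k+1$, replace the leading block by the $5$-cycle gadget: $\pi=\langle 2\ 1\ 4\ 3\rangle\oplus\langle 1\ 3\ 2\rangle\oplus\cdots\oplus\langle 1\ 3\ 2\rangle$ with $k-1$ copies of $\langle 1\ 3\ 2\rangle$ (explicitly $\pi=\langle 2\ 1\ 4\ 3\ 5\ 7\ 6\ \cdots\rangle$); here $G(\pi)$ is one $5$-cycle and $k-1$ pairwise non-intersecting $3$-cycles, so $c(G(\pi))=k$, $bid(\pi)=(3k+2-k)/2=k+1$, $CC(G(\pi))=k$, $\pi_1=2\ne1$, and $pbid(\pi)\ge(k+1)+k-1=2k=\lfloor 2n/3\rfloor$.
\end{itemize}
The tiny cases (e.g.\ $n=1$) are immediate, and in every case $pbid(\pi)$ is trapped between $\lfloor 2n/3\rfloor$ and $2n/3$, hence equals $\lfloor 2n/3\rfloor$.

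The applications of \autoref{thm:bid} and \autoref{thm:second-lower-bound-on-pbid} to these families are one-line computations, so I expect the main obstacle to be the structural lemma on direct sums: proving cleanly that distinct summands contribute their cycles on disjoint vertex intervals (so that they never share a component), identifying precisely which two cycles are spliced at each seam, and verifying the $p+p'-1$ length rule — above all that absorbing a trivial cycle at a seam does not change a cycle's length, since every distance estimate above depends on the resulting cycle counts being exactly as stated. Keeping the three residue classes mutually consistent is a secondary, and purely mechanical, source of care.
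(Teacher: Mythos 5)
Your proposal is correct and follows essentially the same route as the paper: the upper bound comes from \autoref{cor:upper-bound-diameter}, and the lower bound comes from applying \autoref{thm:second-lower-bound-on-pbid} to families obtained by repeatedly appending a small block that contributes an isolated $3$-cycle component (the paper appends $\langle n+1\ n+3\ n+2\rangle$ to the base cases $\langle 1\ 3\ 2\rangle$, $\langle 1\ 4\ 3\ 2\rangle$, $\langle 1\ 3\ 2\ 5\ 4\rangle$, one of which is exactly your $n\equiv 0$ family). The only differences are cosmetic: your choice of base gadgets and your explicit direct-sum splicing lemma, which the paper leaves implicit when asserting the effect of each concatenation, and whose claimed cycle counts check out for your families.
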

\begin{proof}
The cases where $n\le 2$ are easily verified. 
We build tight families of permutations for any $n\ge 3$, starting with permutations $\pi=\langle 1\ 3\ 2\rangle$, $\sigma=\langle 1\ 4\ 3\ 2\rangle$, and $\tau=\langle 1\ 3\ 2\ 5\ 4\rangle$ as base cases for the values of $n$, $n-1$ and $n-2$ that are multiples of $3$, respectively. \autoref{thm:second-lower-bound-on-pbid} yields $pbid(\pi)\ge 2$, $pbid(\sigma)\ge 2$ and $pbid(\tau)\ge 3$, while \autoref{cor:upper-bound-diameter} yields $pbid(\pi)\le 2$, $pbid(\sigma)<3$ and $pbid(\tau)< 4$, thereby matching the lower bounds.

To obtain tight permutations for larger values of $n$, we concatenate the sequence $\langle n+1\ n+3\ n+2\rangle$ to $\pi$, $\sigma$ or $\tau$, and repeat the process as many times as needed. Each concatenation preserves the congruence of $n$ and adds a new component to $G(\cdot)$ which consists of an isolated $3$-cycle. The lower bound of \autoref{thm:second-lower-bound-on-pbid} thereby increases by $2$ with each concatenation, as does the upper bound of \autoref{cor:upper-bound-diameter}. As a result, a permutation with prefix block-interchange distance $\lfloor 2n/3\rfloor$ exists for every value of $n$ in $\mathbb{N}$.
\end{proof}

While many permutations reach the diameter when $n\not\equiv 0\pmod{3}$, the permutation $\langle 1\ 3\ 2\ 4\ 6\ 5\ \cdots\ n-2\ n\ n-1\rangle$ seems to be the only tight permutation when $n\equiv 0\pmod{3}$.

\section{Conclusions and future work}

We initiated in this work the study of sorting permutations by prefix block-interchanges, an operation that generalises several well-studied operations in genome rearrangements and interconnection network design. We gave tight upper and lower bounds on the corresponding distance, and derived a 2-approximation algorithm for the problem. We then showed how to obtain better bounds on the distance using a finer analysis of cycles and components of the breakpoint graph, and determined the maximum value that the distance can reach.

Several questions remain open, most notably the complexity of \SBPBIshort, and its approximability if it turns out to be \NP-complete. We note that improving the ratio of 2 will require improved lower bounds, since for all three upper bounds we have obtained (\autoref{thm:first-upper-bound-on-pbid}, \autoref{thm:second-upper-bound-on-pbid} and \autoref{lemma:breakpoint-upper-bound-on-pbid}) there are permutations whose actual distance match those bounds. 
A number of leads seem promising in that regard, the most obvious one being the computation of the exact value of the ``special purpose distance'' introduced in the proof of \autoref{thm:first-lower-bound-on-pbid}, as well as a more intricate analysis of the cycles of the breakpoint graph as well as their interactions as initiated in \autoref{sec:tighter}. 
Given how helpful $2$-cycles are in decreasing the upper bound of \autoref{thm:first-upper-bound-on-pbid}, it would seem natural to focus on \emph{simple permutations} (i.e. permutations whose breakpoint graph contains no cycle of length $>2$). This strategy eventually led to a polynomial-time algorithm for sorting by signed reversals~\cite{DBLP:journals/jacm/HannenhalliP99}, but we do not expect such an outcome for prefix block-interchanges since the 
simplification process does not preserve the prefix block-interchange distance (whereas it did preserve the signed reversal distance): the smallest counterexample is $\pi=\langle 3\ 1\ 4\ 2\rangle$, which simplifies to $\sigma=\langle 5\ 2\ 7\ 4\ 1\ 6\ 3\rangle$, and for which $pbid(\pi)=2\neq pbid(\sigma)=3$. 

In a broader context, we also hope that our results and the strategies we designed to tackle \SBPBIshort\ can be applied to other prefix sorting problems (for instance, a generalisation of the lower bounding strategy of \autoref{thm:second-lower-bound-on-pbid} to any distance would be of interest). The breakpoint graph approach provides a clear strategy for unrestricted sorting problems, which, informally, usually consists in increasing the number of cycles in as few steps as possible. As our bounds show, and as has been observed for most prefix sorting problems~\cite{akers-star,labarre-lower,labarre-burnt-pancakes}, this no longer works under the prefix constraint since operations that \emph{decrease} or do not affect the number of cycles can also decrease the value of our bounds. Nevertheless, bounds obtained for prefix exchanges, prefix block-transpositions, prefix block-interchanges and prefix signed reversals are all based on $g(\cdot)$, which seems to indicate common underlying features that could be taken advantage of, and possibly lead to a common framework for approximating these problems or solving them exactly.

\bibliographystyle{plainurl}
\bibliography{sbpbi}

\def\cprime{$'$}
\begin{thebibliography}{10}

\bibitem{akers-star}
Sheldon~B. Akers, Balakrishnan Krishnamurthy, and Dov Harel.
\newblock The star graph: An attractive alternative to the $n$-cube.
\newblock In {\em Proceedings of the Fourth International Conference on
  Parallel Processing}, pages 393--400. Pennsylvania State University Press,
  August 1987.
\newblock \href {http://dx.doi.org/10.5555/201173.201191}
  {\path{doi:10.5555/201173.201191}}.

\bibitem{bafna-transpositions}
Vineet Bafna and Pavel~A. Pevzner.
\newblock Sorting by transpositions.
\newblock {\em SIAM Journal on Discrete Mathematics}, 11(2):224--240
  (electronic), May 1998.
\newblock \href {http://dx.doi.org/10.1137/S089548019528280X}
  {\path{doi:10.1137/S089548019528280X}}.

\bibitem{DBLP:conf/esa/BermanHK02}
Piotr Berman, Sridhar Hannenhalli, and Marek Karpinski.
\newblock 1.375-approximation algorithm for sorting by reversals.
\newblock In Rolf~H. M{\"{o}}hring and Rajeev Raman, editors, {\em Proceedings
  of the 10th Annual European Symposium on Algorithms}, volume 2461 of {\em
  Lecture Notes in Computer Science}, pages 200--210, Rome, Italy, September
  2002. Springer.
\newblock \href {http://dx.doi.org/10.1007/3-540-45749-6_21}
  {\path{doi:10.1007/3-540-45749-6_21}}.

\bibitem{DBLP:journals/siamdm/BulteauFR12}
Laurent Bulteau, Guillaume Fertin, and Irena Rusu.
\newblock Sorting by transpositions is difficult.
\newblock {\em SIAM Journal on Discrete Mathematics}, 26(3):1148--1180, 2012.
\newblock \href {http://dx.doi.org/10.1137/110851390}
  {\path{doi:10.1137/110851390}}.

\bibitem{DBLP:journals/jcss/BulteauFR15}
Laurent Bulteau, Guillaume Fertin, and Irena Rusu.
\newblock Pancake flipping is hard.
\newblock {\em Journal of Computer and System Sciences}, 81(8):1556--1574,
  2015.
\newblock \href {http://dx.doi.org/10.1016/j.jcss.2015.02.003}
  {\path{doi:10.1016/j.jcss.2015.02.003}}.

\bibitem{Cai1997}
Liming Cai, Jianer Chen, Rodney~G. Downey, and Michael~R. Fellows.
\newblock {On the parameterized complexity of short computation and
  factorization}.
\newblock {\em Archive for Mathematical Logic}, 36(4-5):321--337, 1997.
\newblock \href {http://dx.doi.org/10.1007/s001530050069}
  {\path{doi:10.1007/s001530050069}}.

\bibitem{caprara-sorting}
Alberto Caprara.
\newblock Sorting permutations by reversals and {E}ulerian cycle
  decompositions.
\newblock {\em {SIAM} Journal on Discrete Mathematics}, 12(1):91--110
  (electronic), January 1999.
\newblock \href {http://dx.doi.org/10.1137/S089548019731994X}
  {\path{doi:10.1137/S089548019731994X}}.

\bibitem{DBLP:journals/jco/Chen13}
Xin Chen.
\newblock On sorting unsigned permutations by double-cut-and-joins.
\newblock {\em Journal of Combinatorial Optimization}, 25(3):339--351, 2013.
\newblock \href {http://dx.doi.org/10.1007/s10878-010-9369-8}
  {\path{doi:10.1007/s10878-010-9369-8}}.

\bibitem{DBLP:conf/intcompsymp/ChouYCL14}
Shih{-}Wen Chou, Chung{-}Han Yang, Kun{-}Tze Chen, and Chin~Lung Lu.
\newblock Prefix block-interchanges on binary strings.
\newblock In William~Cheng{-}Chung Chu, Han{-}Chieh Chao, and
  Stephen~Jenn{-}Hwa Yang, editors, {\em Proceedings of the International
  Computer Symposium on Intelligent Systems and Applications}, volume 274 of
  {\em Frontiers in Artificial Intelligence and Applications}, pages
  1960--1969, Taichung, Taiwan, December 2014. {IOS} Press.
\newblock \href {http://dx.doi.org/10.3233/978-1-61499-484-8-1960}
  {\path{doi:10.3233/978-1-61499-484-8-1960}}.

\bibitem{Christie1996}
David~A. Christie.
\newblock {Sorting permutations by block-interchanges}.
\newblock {\em Information Processing Letters}, 60(4):165--169, 1996.
\newblock \href {http://dx.doi.org/10.1016/S0020-0190(96)00155-X}
  {\path{doi:10.1016/S0020-0190(96)00155-X}}.

\bibitem{dias-prefix}
Zanoni Dias and Jo{\~a}o Meidanis.
\newblock Sorting by prefix transpositions.
\newblock In Alberto H.~F. Laender and Arlindo~L. Oliveira, editors, {\em
  Proceedings of the Ninth International Symposium on String Processing and
  Information Retrieval}, volume 2476 of {\em Lecture Notes in Computer
  Science}, pages 65--76, Lisbon, Portugal, September 2002. Springer-Verlag.
\newblock \href {http://dx.doi.org/10.1007/3-540-45735-6_7}
  {\path{doi:10.1007/3-540-45735-6_7}}.

\bibitem{Elias2006b}
Isaac Elias and Tzvika Hartman.
\newblock {A 1.375-approximation algorithm for sorting by transpositions}.
\newblock {\em IEEE/ACM Transactions on Computational Biology and
  Bioinformatics}, 3(4):369--379, 2006.
\newblock \href {http://dx.doi.org/10.1109/TCBB.2006.44}
  {\path{doi:10.1109/TCBB.2006.44}}.

\bibitem{fertin-combinatorics}
Guillaume Fertin, Anthony Labarre, Irena Rusu, Eric Tannier, and St{\'e}phane
  Vialette.
\newblock {\em Combinatorics of Genome Rearrangements}.
\newblock Computational Molecular Biology. The MIT Press, 2009.
\newblock URL:
  \url{https://mitpress.mit.edu/books/combinatorics-genome-rearrangements}.

\bibitem{DBLP:journals/jacm/HannenhalliP99}
Sridhar Hannenhalli and Pavel~A. Pevzner.
\newblock Transforming cabbage into turnip: Polynomial algorithm for sorting
  signed permutations by reversals.
\newblock {\em Journal of the {ACM}}, 46(1):1--27, 1999.
\newblock \href {http://dx.doi.org/10.1145/300515.300516}
  {\path{doi:10.1145/300515.300516}}.

\bibitem{jerrum-complexity}
Mark~R. Jerrum.
\newblock The complexity of finding minimum-length generator sequences.
\newblock {\em Theoretical Computer Science}, 36(2-3):265--289, June 1985.
\newblock \href {http://dx.doi.org/10.1016/0304-3975(85)90047-7}
  {\path{doi:10.1016/0304-3975(85)90047-7}}.

\bibitem{10.2307/2318260}
D.~J. Kleitman, Edvard Kramer, J.~H. Conway, Stroughton Bell, and Harry
  Dweighter.
\newblock Elementary problems: {E2564-E2569}.
\newblock {\em The American Mathematical Monthly}, 82(10):1009--1010, 1975.
\newblock \href {http://dx.doi.org/10.2307/2318260}
  {\path{doi:10.2307/2318260}}.

\bibitem{knuth-art}
Donald~E. Knuth.
\newblock {\em Sorting and Searching}, volume~3 of {\em The art of Computer
  Programming}.
\newblock Addison-Wesley, 1995.

\bibitem{labarre-new}
Anthony Labarre.
\newblock New bounds and tractable instances for the transposition distance.
\newblock {\em IEEE/ACM Transactions on Computational Biology and
  Bioinformatics}, 3(4):380--394, October 2006.
\newblock \href {http://dx.doi.org/10.1109/TCBB.2006.56}
  {\path{doi:10.1109/TCBB.2006.56}}.

\bibitem{labarre-lower}
Anthony Labarre.
\newblock Lower bounding edit distances between permutations.
\newblock {\em SIAM Journal on Discrete Mathematics}, 27(3):1410--1428, 2013.
\newblock \href {http://dx.doi.org/10.1137/13090897X}
  {\path{doi:10.1137/13090897X}}.

\bibitem{labarre-burnt-pancakes}
Anthony Labarre and Josef Cibulka.
\newblock Polynomial-time sortable stacks of burnt pancakes.
\newblock {\em Theoretical Computer Science}, 412(8-10):695--702, March 2011.
\newblock \href {http://dx.doi.org/10.1016/j.tcs.2010.11.004}
  {\path{doi:10.1016/j.tcs.2010.11.004}}.

\bibitem{lakshmivarahan-symmetry}
S.~Lakshmivarahan, Jung-Sing Jwo, and S.~K. Dhall.
\newblock Symmetry in interconnection networks based on {C}ayley graphs of
  permutation groups: A survey.
\newblock {\em Parallel Computing}, 19(4):361--407, April 1993.
\newblock \href {http://dx.doi.org/10.1016/0167-8191(93)90054-O}
  {\path{doi:10.1016/0167-8191(93)90054-O}}.

\bibitem{DBLP:journals/bioinformatics/YancopoulosAF05}
Sophia Yancopoulos, Oliver Attie, and Richard Friedberg.
\newblock Efficient sorting of genomic permutations by translocation, inversion
  and block interchange.
\newblock {\em Bioinformatics}, 21(16):3340--3346, 2005.
\newblock \href {http://dx.doi.org/10.1093/bioinformatics/bti535}
  {\path{doi:10.1093/bioinformatics/bti535}}.

\end{thebibliography}

\clearpage
\appendix
\section{Proof of \autoref{lemma:merging-components-cannot-decrease-bid}}\label{app:proof-lemma:merging-components-cannot-decrease-bid}

A prefix block-interchange $\beta$ acts on one, two, three or four cycles. We first show that if $\beta$ reduces the number of connected components of $G(\pi)$, then it cannot act on a single cycle of $G(\pi)$. This is important for the proof of \autoref{lemma:merging-components-cannot-decrease-bid}, because some prefix block-interchanges acting on a single cycle increase the number of cycles and therefore may decrease the value of $bid(\pi)$ regardless of their effect on $pbid(\pi)$ or $g(\pi)$. The following concepts will be helpful.

\begin{definition}
 For any permutation $\pi$, let $e=(e_1, e_2)$ and $f=(f_1, f_2)$ be two grey edges in $G(\pi)$, with $e_1<e_2$ and $f_1<f_2$. We say that $e$ and $f$ are \emph{independent} if they are:
 \begin{itemize}
  \item \emph{nested}, i.e. $e_1<f_1<f_2<e_2$ (written $f\subset_\pi e$) or $f_1<e_1<e_2<f_2$ (written $e\subset_\pi f$); or
  \item \emph{ordered}, i.e. $e_1<e_2<f_1<f_2$, in which case we say that $e$ \emph{precedes} $f$ (written $e<_\pi f$), or $f_1<f_2<e_1<e_2$ (i.e. $f$ precedes $e$).
 \end{itemize}
\end{definition}

Grey edges naturally define intervals in $\pi'$, so we use the same notation to compare intervals, or grey edges with intervals.  We will sometimes need to distinguish  \emph{proper} block-interchanges, i.e. of the form $\beta(i, j, k, \ell)$ with $j<k$, from prefix block-transpositions, which are of the form $\beta(i, j, j, \ell)$.

\begin{lemma}\label{lemma:if-CC-decreases-then-pbi-acts-on-more-than-one-cycle}
  For any $\pi$ in $S_n$, let $\beta$ be a prefix block-interchange with $CC(G(\pi\beta))<CC(G(\pi))$; then $\beta$ cannot act on a single cycle of $G(\pi)$.
\end{lemma}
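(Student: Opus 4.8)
The plan is to prove the contrapositive: assuming the prefix block-interchange $\beta$ acts on a single cycle $C$ of $G(\pi)$, show that $CC(G(\pi\beta))\ge CC(G(\pi))$. Two preliminary observations: since $\beta$ is a \emph{prefix} operation, the black edge $b_1=\{\pi'_0,\pi'_1\}$ is always among the black edges it acts on, so $C$ is the leftmost cycle (it contains $\pi'_0$); and since $\beta$ cuts at least three black edges, all lying in $C$, the cycle $C$ has length at least $3$. Moreover every cycle $D\neq C$ keeps \emph{all} of its edges --- grey edges are never affected, and the black edges $\beta$ cuts all belong to $C$ --- so only the positions of $D$'s vertices can change.

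The first real ingredient I would isolate is a crossing observation: if a cycle $C$ has a vertex at a position strictly inside the interval delimited by a grey edge $e$ of another cycle $D$, and also a vertex at a position strictly outside it, then $C$ and $D$ intersect. Indeed, no black edge of $C$ has an endpoint at a position occupied by an endpoint of $e$ (those positions are occupied by vertices of $D\neq C$, and cycles are vertex-disjoint), so $C$ can pass between the inside and the outside of $e$'s interval only along grey edges; having vertices of both kinds, $C$ therefore has an even and positive number of grey edges with one endpoint inside and one outside $e$'s interval, and each such grey edge intersects $e$. Feeding this with the endpoints of one of $\beta$'s cut black edges (which lie in $C$) as the ``inside'' witnesses and $\pi'_0$ (at position $0$, hence outside the interval of any grey edge of $D$, whose endpoints have positions $\ge 1$) as the ``outside'' witness yields: any cycle $D\neq C$ with a grey edge spanning one of $\beta$'s cuts intersects $C$.

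I would then carry out the structural part. The cuts of $\beta$ split the positions $0,1,\dots,2n+1$ into at most five blocks; $\beta$ preserves the relative order of positions within each block and maps each block onto a contiguous interval of positions. By the previous paragraph, any cycle $D\neq C$ not intersecting $C$ has no grey edge spanning a cut, and it has no black edge straddling a cut either (cut black edges lie in $C$), hence $D$ is contained in a single block. Therefore every component of $G(\pi)$ other than the component $\Gamma$ containing $C$ is contained in a single block --- two cycles in different blocks occupy disjoint intervals of positions and cannot intersect --- so it reappears in $G(\pi\beta)$ with the same internal intersection pattern and creates no new intersection, either with another such component (same block: order preserved; different block: disjoint position intervals) or with any of the at most three cycles into which $C$ is transformed (those cycles' vertices occupy exactly the positions $C$'s vertices occupied, and the crossing observation, applied in $G(\pi)$, shows $C$ has no vertex strictly inside any grey edge of such a $D$). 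Thus the $CC(G(\pi))-1$ components of $G(\pi)$ distinct from $\Gamma$ survive as distinct components of $G(\pi\beta)$; and since the cycles obtained from $C$ have total length $|C|\ge 3$ while the other cycles of $\Gamma$ (if any) are untouched and still nontrivial, $\Gamma$'s territory still contributes at least one component, giving $CC(G(\pi\beta))\ge(CC(G(\pi))-1)+1=CC(G(\pi))$.

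The step I expect to require the most care is the final one, that $\Gamma$ still contributes a component. This is immediate except when $\Gamma=\{C\}$, $C$ has length exactly $3$, and $\beta$ is a prefix block-transposition acting on all three black edges of $C$ --- a degenerate configuration in which $C$ may split into three trivial cycles and $\Gamma$ disappears. This case has to be treated separately; it cannot arise for proper prefix block-interchanges (there the acted-upon cycle has length at least $4$), and pinning it down cleanly, with the help of the proper/block-transposition distinction introduced above, is the real obstacle. Everything else follows uniformly from the crossing observation together with the block decomposition.
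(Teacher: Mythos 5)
Your argument is sound up to the last step, but the case you flag and leave open --- $\Gamma=\{C\}$, $C$ a $3$-cycle, $\beta$ a prefix block-transposition cutting all three of its black edges --- is not a technicality you can ``treat separately'': it is a genuine counterexample to the statement as written, so no amount of care will close the gap. Take $\pi=\langle 2\ 3\ 1\rangle$. Its breakpoint graph has a single nontrivial cycle, the leftmost $3$-cycle on the black edges $b_1=\{0,3\}$, $b_3=\{6,1\}$, $b_4=\{2,7\}$, so $CC(G(\pi))=1$. The prefix block-transposition $\beta(1,3,3,4)$ acts exactly on those three black edges, hence on a single cycle of $G(\pi)$, and $\pi\beta=\iota$; since components are defined on the intersection graph of \emph{nontrivial} cycles, $CC(G(\pi\beta))=0<CC(G(\pi))$. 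This is precisely your degenerate configuration realised: the component containing $C$ vanishes rather than merging with anything. Consequently, the most your (otherwise correct, and carefully justified) crossing observation plus block decomposition can deliver is the weaker conclusion that a prefix block-interchange acting on a single cycle creates no new intersection between grey edges of other cycles --- i.e.\ it cannot \emph{merge} components --- not that $CC$ cannot decrease.

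For comparison, the paper's own proof takes a terser route: it argues that any two grey edges that are independent in $G(\pi)$ remain independent in $G(\pi\beta)$ when $\beta$ acts on a single cycle. That argument likewise only rules out merges and tacitly equates ``$CC$ decreases'' with ``components merge''; the vanishing-component scenario you isolated is not addressed there either, and the example above shows it is exactly where the literal statement (and, via $\Delta c=+2$, the literal statement of \autoref{lemma:merging-components-cannot-decrease-bid}) breaks down. So your write-up is in fact more explicit about the true obstruction than the paper, but as a proof of the lemma as stated it is incomplete, and the missing case is fatal rather than fillable; what you can salvage is the no-merge version, which is also all that the proof of \autoref{thm:second-lower-bound-on-pbid} actually uses.
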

\begin{proof}
Let $e=(e_1, e_2)$ and $f=(f_1, f_2)$ with $e_1<e_2$ and $f_1<f_2$ be two grey edges of $G(\pi)$. We show that if $e$ and $f$ are independent in $G(\pi)$, then they remain independent in $G(\pi\beta)$. For readability, we assume that the indices of $e$ and $f$ correspond to positions in $\pi$ rather than $\pi'$. 
The connections between the black edges of $G(\pi)$ on which $\beta(1, i, j, k)$ acts imply the following:
\begin{itemize}
 \item both $e_1$ and $f_1$ lie in the interval $[1, k]$, otherwise $\beta$ would not affect $e$ or $f$;
 \item at least $e$ or $f$ has both endpoints in $[1,i]$, $[i, j]$ (which is empty if $\beta$ is not proper) or $[j,k]$, otherwise they both intersect the cycle on which $\beta$ acts and therefore $CC(G(\pi\beta))\ge CC(G(\pi))$;
 \item if $x \subset_\pi I$ holds for all possible combinations of $x$ in $\{e, f\}$ and $I$ in $\{[1, i], [i, j], [j, k]\}$, then $\beta$ trivially preserves the order of endpoints and therefore the interactions between $e$ and $f$ as well, so that their independence in $G(\pi)$ is preserved in $G(\pi\beta)$.
\end{itemize}
Without loss of generality, the only cases left to examine are those where $e\subset_\pi f$ and $e\subset_\pi [1, i]$, $e\subset_\pi [i, j]$, or $e\subset_\pi [j, k]$. The only two ways of making $e$ and $f$ intersect in $G(\pi\beta)$ are therefore either to exchange $e_2$ and $f_2$ without moving $e_1$ and $f_1$, which is impossible because $\beta$ is a \emph{prefix} block-interchange; or to exchange $f_1$ and $e_1$ without moving $e_2$ and $f_2$, which is impossible as well since $\beta$ must act on the four black edges of the cycle.
\end{proof}

We can now prove \autoref{lemma:merging-components-cannot-decrease-bid}.

\begin{proof}[Proof of \autoref{lemma:merging-components-cannot-decrease-bid}]
By \autoref{lemma:if-CC-decreases-then-pbi-acts-on-more-than-one-cycle}, we have the following three cases to analyse:
 
 \begin{enumerate}
  \item if $\beta$ acts on two cycles from different components, then two or three of the black edges on which $\beta$ acts belong to the same cycle. In all resulting cases, we have $\Delta c(\pi, \pi\beta)=0$; omitted cases are symmetric, and only proper block-interchanges are considered since the property we seek to prove is already known to hold for block-transpositions (see \cite[Lemma 3.2 page 228]{bafna-transpositions}):

  \begin{enumerate}
  \item 
  
  \begin{tikzpicture}[scale=.75,transform shape,baseline]

    \foreach [count=\i] \color in {black,black,black,black}
        \draw[color=\color,very thick] (2*\i-2, 0) -- (2*\i-1, 0);

    \foreach [count=\i from 0] \name in {a,b,c,d,e,f,g,h}
    \node[vertex] (\name) at (\i,0) [draw,circle] 
    {};
    \foreach \name/\lab in {a/$0$,b/$\pi'_1$,c/$\pi'_{2i-2}$,d/$\pi'_{2i-1}$,e/$\pi'_{2j-2}$,f/$\pi'_{2j-1}$,g/$\pi'_{2k-2}$,h/$\pi'_{2k-1}$}
    \node (\name2) at (\name) [label=below:\strut\lab] {};


    \begin{scope}[color=gray!50, very thick, dotted]
        \foreach \u/\v in {a/d, b/c, e/h, f/g}
            \draw[] (\u) to [bend left] (\v);
    \end{scope}
    
    \draw (.5, -.75) rectangle (2.45, -.2);
    \draw (4.5, -.75) rectangle (6.5, -.2);
\end{tikzpicture} 
\begin{tikzpicture}[scale=.75,transform shape,baseline]
\draw[->, >=stealth] (0, 0) -- (1, 0);
\end{tikzpicture} 
\begin{tikzpicture}[scale=.75,transform shape,baseline]

    \foreach [count=\i] \color in {black,black,black,black}
        \draw[color=\color,very thick] (2*\i-2, 0) -- (2*\i-1, 0);

    \foreach [count=\i from 0] \name in {a,f,g,d,e,b,c,h}
    \node[vertex] (\name) at (\i,0) [draw,circle] 
    {};
    \foreach \name/\lab in {a/$0$,b/$\pi'_1$,c/$\pi'_{2i-2}$,d/$\pi'_{2i-1}$,e/$\pi'_{2j-2}$,f/$\pi'_{2j-1}$,g/$\pi'_{2k-2}$,h/$\pi'_{2k-1}$}
    \node at (\name) [label=below:\strut\lab] {};

    \begin{scope}[color=gray!50, very thick, dotted]
        \foreach \u/\v in {a/d, b/c, e/h, f/g}
            \draw[] (\u) to [bend left] (\v);
    \end{scope}
\end{tikzpicture} 

\item 

  \begin{tikzpicture}[scale=.75,transform shape,baseline]

    \foreach [count=\i] \color in {black,black,black,black}
        \draw[color=\color,very thick] (2*\i-2, 0) -- (2*\i-1, 0);

    \foreach [count=\i from 0] \name in {a,b,c,d,e,f,g,h}
    \node[vertex] (\name) at (\i,0) [draw,circle] 
    {};
    \foreach \name/\lab in {a/$0$,b/$\pi'_1$,c/$\pi'_{2i-2}$,d/$\pi'_{2i-1}$,e/$\pi'_{2j-2}$,f/$\pi'_{2j-1}$,g/$\pi'_{2k-2}$,h/$\pi'_{2k-1}$}
    \node (\name2) at (\name) [label=below:\strut\lab] {};


    \begin{scope}[color=gray!50, very thick, dotted]
        \foreach \u/\v in {a/h, b/g, c/f, d/e}
            \draw[] (\u) to [bend left] (\v);
    \end{scope}
    
    \draw (.5, -.75) rectangle (2.45, -.2);
    \draw (4.5, -.75) rectangle (6.5, -.2);
\end{tikzpicture} 
\begin{tikzpicture}[scale=.75,transform shape,baseline]
\draw[->, >=stealth] (0, 0) -- (1, 0);
\end{tikzpicture} 
\begin{tikzpicture}[scale=.75,transform shape,baseline]

    \foreach [count=\i] \color in {black,black,black,black}
        \draw[color=\color,very thick] (2*\i-2, 0) -- (2*\i-1, 0);

    \foreach [count=\i from 0] \name in {a,f,g,d,e,b,c,h}
    \node[vertex] (\name) at (\i,0) [draw,circle] 
    {};
    \foreach \name/\lab in {a/$0$,b/$\pi'_1$,c/$\pi'_{2i-2}$,d/$\pi'_{2i-1}$,e/$\pi'_{2j-2}$,f/$\pi'_{2j-1}$,g/$\pi'_{2k-2}$,h/$\pi'_{2k-1}$}
    \node at (\name) [label=below:\strut\lab] {};

    \begin{scope}[color=gray!50, very thick, dotted]
        \foreach \u/\v in {a/h, g/b, f/c, d/e}
            \draw[] (\u) to [bend left] (\v);
    \end{scope}
\end{tikzpicture} 

\item 

  \begin{tikzpicture}[scale=.75,transform shape,baseline]

    \foreach [count=\i] \color in {black,black,black,black}
        \draw[color=\color,very thick] (2*\i-2, 0) -- (2*\i-1, 0);

    \foreach [count=\i from 0] \name in {a,b,c,d,e,f,g,h}
    \node[vertex] (\name) at (\i,0) [draw,circle] 
    {};
    \foreach \name/\lab in {a/$0$,b/$\pi'_1$,c/$\pi'_{2i-2}$,d/$\pi'_{2i-1}$,e/$\pi'_{2j-2}$,f/$\pi'_{2j-1}$,g/$\pi'_{2k-2}$,h/$\pi'_{2k-1}$}
    \node (\name2) at (\name) [label=below:\strut\lab] {};


    \begin{scope}[color=gray!50, very thick, dotted]
        \foreach \u/\v in {a/b, c/f, d/g, e/h}
            \draw[] (\u) to [bend left] (\v);
    \end{scope}
    
    \draw (.5, -.75) rectangle (2.45, -.2);
    \draw (4.5, -.75) rectangle (6.5, -.2);
\end{tikzpicture} 
\begin{tikzpicture}[scale=.75,transform shape,baseline]
\draw[->, >=stealth] (0, 0) -- (1, 0);
\end{tikzpicture} 
\begin{tikzpicture}[scale=.75,transform shape,baseline]

    \foreach [count=\i] \color in {black,black,black,black}
        \draw[color=\color,very thick] (2*\i-2, 0) -- (2*\i-1, 0);

    \foreach [count=\i from 0] \name in {a,f,g,d,e,b,c,h}
    \node[vertex] (\name) at (\i,0) [draw,circle] 
    {};
    \foreach \name/\lab in {a/$0$,b/$\pi'_1$,c/$\pi'_{2i-2}$,d/$\pi'_{2i-1}$,e/$\pi'_{2j-2}$,f/$\pi'_{2j-1}$,g/$\pi'_{2k-2}$,h/$\pi'_{2k-1}$}
    \node at (\name) [label=below:\strut\lab] {};

    \begin{scope}[color=gray!50, very thick, dotted]
        \foreach \u/\v in {a/b, f/c, g/d, e/h}
            \draw[] (\u) to [bend left] (\v);
    \end{scope}
\end{tikzpicture}

\item 

  \begin{tikzpicture}[scale=.75,transform shape,baseline]

    \foreach [count=\i] \color in {black,black,black,black}
        \draw[color=\color,very thick] (2*\i-2, 0) -- (2*\i-1, 0);

    \foreach [count=\i from 0] \name in {a,b,c,d,e,f,g,h}
    \node[vertex] (\name) at (\i,0) [draw,circle] 
    {};
    \foreach \name/\lab in {a/$0$,b/$\pi'_1$,c/$\pi'_{2i-2}$,d/$\pi'_{2i-1}$,e/$\pi'_{2j-2}$,f/$\pi'_{2j-1}$,g/$\pi'_{2k-2}$,h/$\pi'_{2k-1}$}
    \node (\name2) at (\name) [label=below:\strut\lab] {};


    \begin{scope}[color=gray!50, very thick, dotted]
        \foreach \u/\v in {a/b, c/h, d/e, f/g}
            \draw[] (\u) to [bend left] (\v);
    \end{scope}
    
    \draw (.5, -.75) rectangle (2.45, -.2);
    \draw (4.5, -.75) rectangle (6.5, -.2);
\end{tikzpicture} 
\begin{tikzpicture}[scale=.75,transform shape,baseline]
\draw[->, >=stealth] (0, 0) -- (1, 0);
\end{tikzpicture} 
\begin{tikzpicture}[scale=.75,transform shape,baseline]

    \foreach [count=\i] \color in {black,black,black,black}
        \draw[color=\color,very thick] (2*\i-2, 0) -- (2*\i-1, 0);

    \foreach [count=\i from 0] \name in {a,f,g,d,e,b,c,h}
    \node[vertex] (\name) at (\i,0) [draw,circle] 
    {};
    \foreach \name/\lab in {a/$0$,b/$\pi'_1$,c/$\pi'_{2i-2}$,d/$\pi'_{2i-1}$,e/$\pi'_{2j-2}$,f/$\pi'_{2j-1}$,g/$\pi'_{2k-2}$,h/$\pi'_{2k-1}$}
    \node at (\name) [label=below:\strut\lab] {};

    \begin{scope}[color=gray!50, very thick, dotted]
        \foreach \u/\v in {a/b, c/h, d/e, f/g}
            \draw[] (\u) to [bend left] (\v);
    \end{scope}
\end{tikzpicture}

\item 

  \begin{tikzpicture}[scale=.75,transform shape,baseline]

    \foreach [count=\i] \color in {black,black,black,black}
        \draw[color=\color,very thick] (2*\i-2, 0) -- (2*\i-1, 0);

    \foreach [count=\i from 0] \name in {a,b,c,d,e,f,g,h}
    \node[vertex] (\name) at (\i,0) [draw,circle] 
    {};
    \foreach \name/\lab in {a/$0$,b/$\pi'_1$,c/$\pi'_{2i-2}$,d/$\pi'_{2i-1}$,e/$\pi'_{2j-2}$,f/$\pi'_{2j-1}$,g/$\pi'_{2k-2}$,h/$\pi'_{2k-1}$}
    \node (\name2) at (\name) [label=below:\strut\lab] {};


    \begin{scope}[color=gray!50, very thick, dotted]
        \foreach \u/\v in {a/h, b/e, c/d, f/g}
            \draw[] (\u) to [bend left] (\v);
    \end{scope}
    
    \draw (.5, -.75) rectangle (2.45, -.2);
    \draw (4.5, -.75) rectangle (6.5, -.2);
\end{tikzpicture} 
\begin{tikzpicture}[scale=.75,transform shape,baseline]
\draw[->, >=stealth] (0, 0) -- (1, 0);
\end{tikzpicture} 
\begin{tikzpicture}[scale=.75,transform shape,baseline]

    \foreach [count=\i] \color in {black,black,black,black}
        \draw[color=\color,very thick] (2*\i-2, 0) -- (2*\i-1, 0);

    \foreach [count=\i from 0] \name in {a,f,g,d,e,b,c,h}
    \node[vertex] (\name) at (\i,0) [draw,circle] 
    {};
    \foreach \name/\lab in {a/$0$,b/$\pi'_1$,c/$\pi'_{2i-2}$,d/$\pi'_{2i-1}$,e/$\pi'_{2j-2}$,f/$\pi'_{2j-1}$,g/$\pi'_{2k-2}$,h/$\pi'_{2k-1}$}
    \node at (\name) [label=below:\strut\lab] {};

    \begin{scope}[color=gray!50, very thick, dotted]
        \foreach \u/\v in {a/h, e/b, d/c, f/g}
            \draw[] (\u) to [bend left] (\v);
    \end{scope}
\end{tikzpicture}

\item 

  \begin{tikzpicture}[scale=.75,transform shape,baseline]

    \foreach [count=\i] \color in {black,black,black,black}
        \draw[color=\color,very thick] (2*\i-2, 0) -- (2*\i-1, 0);

    \foreach [count=\i from 0] \name in {a,b,c,d,e,f,g,h}
    \node[vertex] (\name) at (\i,0) [draw,circle] 
    {};
    \foreach \name/\lab in {a/$0$,b/$\pi'_1$,c/$\pi'_{2i-2}$,d/$\pi'_{2i-1}$,e/$\pi'_{2j-2}$,f/$\pi'_{2j-1}$,g/$\pi'_{2k-2}$,h/$\pi'_{2k-1}$}
    \node (\name2) at (\name) [label=below:\strut\lab] {};


    \begin{scope}[color=gray!50, very thick, dotted]
        \foreach \u/\v in {a/d, b/g, c/h, e/f}
            \draw[] (\u) to [bend left] (\v);
    \end{scope}
    
    \draw (.5, -.75) rectangle (2.45, -.2);
    \draw (4.5, -.75) rectangle (6.5, -.2);
\end{tikzpicture} 
\begin{tikzpicture}[scale=.75,transform shape,baseline]
\draw[->, >=stealth] (0, 0) -- (1, 0);
\end{tikzpicture} 
\begin{tikzpicture}[scale=.75,transform shape,baseline]

    \foreach [count=\i] \color in {black,black,black,black}
        \draw[color=\color,very thick] (2*\i-2, 0) -- (2*\i-1, 0);

    \foreach [count=\i from 0] \name in {a,f,g,d,e,b,c,h}
    \node[vertex] (\name) at (\i,0) [draw,circle] 
    {};
    \foreach \name/\lab in {a/$0$,b/$\pi'_1$,c/$\pi'_{2i-2}$,d/$\pi'_{2i-1}$,e/$\pi'_{2j-2}$,f/$\pi'_{2j-1}$,g/$\pi'_{2k-2}$,h/$\pi'_{2k-1}$}
    \node at (\name) [label=below:\strut\lab] {};

    \begin{scope}[color=gray!50, very thick, dotted]
        \foreach \u/\v in {a/d, g/b, c/h, f/e}
            \draw[] (\u) to [bend left] (\v);
    \end{scope}
\end{tikzpicture} 
\end{enumerate}

  \item if $\beta$ acts on three cycles, then exactly two of the black edges on which $\beta$ acts belong to the same cycle. In all cases, we have $\Delta c(\pi, \pi\beta)\in\{-2, 0\}$; again, omitted cases are symmetric, and only proper block-interchanges are considered since block-transpositions acting on three cycles decrease the number of cycles by two~\cite[Lemma 2.1 page 227]{bafna-transpositions}: 
  
  \begin{enumerate}
  \item 
    \begin{tikzpicture}[scale=.75,transform shape,baseline]

    \foreach [count=\i] \color in {black,black,black,black}
        \draw[color=\color,very thick] (2*\i-2, 0) -- (2*\i-1, 0);

    \foreach [count=\i from 0] \name in {a,b,c,d,e,f,g,h}
    \node[vertex] (\name) at (\i,0) [draw,circle] 
    {};
    \foreach \name/\lab in {a/$0$,b/$\pi'_1$,c/$\pi'_{2i-2}$,d/$\pi'_{2i-1}$,e/$\pi'_{2j-2}$,f/$\pi'_{2j-1}$,g/$\pi'_{2k-2}$,h/$\pi'_{2k-1}$}
    \node (\name2) at (\name) [label=below:\strut\lab] {};


    \begin{scope}[color=gray!50, very thick, dotted]
        \foreach \u/\v in {a/d, b/c, e/f, g/h}
            \draw[] (\u) to [bend left] (\v);
    \end{scope}
    
    \draw (.5, -.75) rectangle (2.45, -.2);
    \draw (4.5, -.75) rectangle (6.5, -.2);
\end{tikzpicture} 
\begin{tikzpicture}[scale=.75,transform shape,baseline]
\draw[->, >=stealth] (0, 0) -- (1, 0);
\end{tikzpicture} 
\begin{tikzpicture}[scale=.75,transform shape,baseline]

    \foreach [count=\i] \color in {black,black,black,black}
        \draw[color=\color,very thick] (2*\i-2, 0) -- (2*\i-1, 0);

    \foreach [count=\i from 0] \name in {a,f,g,d,e,b,c,h}
    \node[vertex] (\name) at (\i,0) [draw,circle] 
    {};
    \foreach \name/\lab in {a/$0$,b/$\pi'_1$,c/$\pi'_{2i-2}$,d/$\pi'_{2i-1}$,e/$\pi'_{2j-2}$,f/$\pi'_{2j-1}$,g/$\pi'_{2k-2}$,h/$\pi'_{2k-1}$}
    \node at (\name) [label=below:\strut\lab] {};

    \begin{scope}[color=gray!50, very thick, dotted]
        \foreach \u/\v in {a/d, b/c, f/e, g/h}
            \draw[] (\u) to [bend left] (\v);
    \end{scope}
\end{tikzpicture} 

  \item 
    \begin{tikzpicture}[scale=.75,transform shape,baseline]

    \foreach [count=\i] \color in {black,black,black,black}
        \draw[color=\color,very thick] (2*\i-2, 0) -- (2*\i-1, 0);

    \foreach [count=\i from 0] \name in {a,b,c,d,e,f,g,h}
    \node[vertex] (\name) at (\i,0) [draw,circle] 
    {};
    \foreach \name/\lab in {a/$0$,b/$\pi'_1$,c/$\pi'_{2i-2}$,d/$\pi'_{2i-1}$,e/$\pi'_{2j-2}$,f/$\pi'_{2j-1}$,g/$\pi'_{2k-2}$,h/$\pi'_{2k-1}$}
    \node (\name2) at (\name) [label=below:\strut\lab] {};


    \begin{scope}[color=gray!50, very thick, dotted]
        \foreach \u/\v in {a/b, c/f, d/e, g/h}
            \draw[] (\u) to [bend left] (\v);
    \end{scope}
    
    \draw (.5, -.75) rectangle (2.45, -.2);
    \draw (4.5, -.75) rectangle (6.5, -.2);
\end{tikzpicture} 
\begin{tikzpicture}[scale=.75,transform shape,baseline]
\draw[->, >=stealth] (0, 0) -- (1, 0);
\end{tikzpicture} 
\begin{tikzpicture}[scale=.75,transform shape,baseline]

    \foreach [count=\i] \color in {black,black,black,black}
        \draw[color=\color,very thick] (2*\i-2, 0) -- (2*\i-1, 0);

    \foreach [count=\i from 0] \name in {a,f,g,d,e,b,c,h}
    \node[vertex] (\name) at (\i,0) [draw,circle] 
    {};
    \foreach \name/\lab in {a/$0$,b/$\pi'_1$,c/$\pi'_{2i-2}$,d/$\pi'_{2i-1}$,e/$\pi'_{2j-2}$,f/$\pi'_{2j-1}$,g/$\pi'_{2k-2}$,h/$\pi'_{2k-1}$}
    \node at (\name) [label=below:\strut\lab] {};

    \begin{scope}[color=gray!50, very thick, dotted]
        \foreach \u/\v in {a/b, f/c, d/e, g/h}
            \draw[] (\u) to [bend left] (\v);
    \end{scope}
\end{tikzpicture} 
  \item 
    \begin{tikzpicture}[scale=.75,transform shape,baseline]

    \foreach [count=\i] \color in {black,black,black,black}
        \draw[color=\color,very thick] (2*\i-2, 0) -- (2*\i-1, 0);

    \foreach [count=\i from 0] \name in {a,b,c,d,e,f,g,h}
    \node[vertex] (\name) at (\i,0) [draw,circle] 
    {};
    \foreach \name/\lab in {a/$0$,b/$\pi'_1$,c/$\pi'_{2i-2}$,d/$\pi'_{2i-1}$,e/$\pi'_{2j-2}$,f/$\pi'_{2j-1}$,g/$\pi'_{2k-2}$,h/$\pi'_{2k-1}$}
    \node (\name2) at (\name) [label=below:\strut\lab] {};


    \begin{scope}[color=gray!50, very thick, dotted]
        \foreach \u/\v in {a/b, c/d, e/h, f/g}
            \draw[] (\u) to [bend left] (\v);
    \end{scope}
    
    \draw (.5, -.75) rectangle (2.45, -.2);
    \draw (4.5, -.75) rectangle (6.5, -.2);
\end{tikzpicture} 
\begin{tikzpicture}[scale=.75,transform shape,baseline]
\draw[->, >=stealth] (0, 0) -- (1, 0);
\end{tikzpicture} 
\begin{tikzpicture}[scale=.75,transform shape,baseline]

    \foreach [count=\i] \color in {black,black,black,black}
        \draw[color=\color,very thick] (2*\i-2, 0) -- (2*\i-1, 0);

    \foreach [count=\i from 0] \name in {a,f,g,d,e,b,c,h}
    \node[vertex] (\name) at (\i,0) [draw,circle] 
    {};
    \foreach \name/\lab in {a/$0$,b/$\pi'_1$,c/$\pi'_{2i-2}$,d/$\pi'_{2i-1}$,e/$\pi'_{2j-2}$,f/$\pi'_{2j-1}$,g/$\pi'_{2k-2}$,h/$\pi'_{2k-1}$}
    \node at (\name) [label=below:\strut\lab] {};

    \begin{scope}[color=gray!50, very thick, dotted]
        \foreach \u/\v in {a/b, d/c, f/g, e/h}
            \draw[] (\u) to [bend left] (\v);
    \end{scope}
\end{tikzpicture} 

  \item 
    \begin{tikzpicture}[scale=.75,transform shape,baseline]

    \foreach [count=\i] \color in {black,black,black,black}
        \draw[color=\color,very thick] (2*\i-2, 0) -- (2*\i-1, 0);

    \foreach [count=\i from 0] \name in {a,b,c,d,e,f,g,h}
    \node[vertex] (\name) at (\i,0) [draw,circle] 
    {};
    \foreach \name/\lab in {a/$0$,b/$\pi'_1$,c/$\pi'_{2i-2}$,d/$\pi'_{2i-1}$,e/$\pi'_{2j-2}$,f/$\pi'_{2j-1}$,g/$\pi'_{2k-2}$,h/$\pi'_{2k-1}$}
    \node (\name2) at (\name) [label=below:\strut\lab] {};


    \begin{scope}[color=gray!50, very thick, dotted]
        \foreach \u/\v in {a/f, b/e, c/d, g/h}
            \draw[] (\u) to [bend left] (\v);
    \end{scope}
    
    \draw (.5, -.75) rectangle (2.45, -.2);
    \draw (4.5, -.75) rectangle (6.5, -.2);
\end{tikzpicture} 
\begin{tikzpicture}[scale=.75,transform shape,baseline]
\draw[->, >=stealth] (0, 0) -- (1, 0);
\end{tikzpicture} 
\begin{tikzpicture}[scale=.75,transform shape,baseline]

    \foreach [count=\i] \color in {black,black,black,black}
        \draw[color=\color,very thick] (2*\i-2, 0) -- (2*\i-1, 0);

    \foreach [count=\i from 0] \name in {a,f,g,d,e,b,c,h}
    \node[vertex] (\name) at (\i,0) [draw,circle] 
    {};
    \foreach \name/\lab in {a/$0$,b/$\pi'_1$,c/$\pi'_{2i-2}$,d/$\pi'_{2i-1}$,e/$\pi'_{2j-2}$,f/$\pi'_{2j-1}$,g/$\pi'_{2k-2}$,h/$\pi'_{2k-1}$}
    \node at (\name) [label=below:\strut\lab] {};

    \begin{scope}[color=gray!50, very thick, dotted]
        \foreach \u/\v in {a/f, e/b, d/c, g/h}
            \draw[] (\u) to [bend left] (\v);
    \end{scope}
\end{tikzpicture}

  \item 
    \begin{tikzpicture}[scale=.75,transform shape,baseline]

    \foreach [count=\i] \color in {black,black,black,black}
        \draw[color=\color,very thick] (2*\i-2, 0) -- (2*\i-1, 0);

    \foreach [count=\i from 0] \name in {a,b,c,d,e,f,g,h}
    \node[vertex] (\name) at (\i,0) [draw,circle] 
    {};
    \foreach \name/\lab in {a/$0$,b/$\pi'_1$,c/$\pi'_{2i-2}$,d/$\pi'_{2i-1}$,e/$\pi'_{2j-2}$,f/$\pi'_{2j-1}$,g/$\pi'_{2k-2}$,h/$\pi'_{2k-1}$}
    \node (\name2) at (\name) [label=below:\strut\lab] {};


    \begin{scope}[color=gray!50, very thick, dotted]
        \foreach \u/\v in {a/h, b/g, c/d, e/f}
            \draw[] (\u) to [bend left] (\v);
    \end{scope}
    
    \draw (.5, -.75) rectangle (2.45, -.2);
    \draw (4.5, -.75) rectangle (6.5, -.2);
\end{tikzpicture} 
\begin{tikzpicture}[scale=.75,transform shape,baseline]
\draw[->, >=stealth] (0, 0) -- (1, 0);
\end{tikzpicture} 
\begin{tikzpicture}[scale=.75,transform shape,baseline]

    \foreach [count=\i] \color in {black,black,black,black}
        \draw[color=\color,very thick] (2*\i-2, 0) -- (2*\i-1, 0);

    \foreach [count=\i from 0] \name in {a,f,g,d,e,b,c,h}
    \node[vertex] (\name) at (\i,0) [draw,circle] 
    {};
    \foreach \name/\lab in {a/$0$,b/$\pi'_1$,c/$\pi'_{2i-2}$,d/$\pi'_{2i-1}$,e/$\pi'_{2j-2}$,f/$\pi'_{2j-1}$,g/$\pi'_{2k-2}$,h/$\pi'_{2k-1}$}
    \node at (\name) [label=below:\strut\lab] {};

    \begin{scope}[color=gray!50, very thick, dotted]
        \foreach \u/\v in {a/h, g/b, d/c, f/e}
            \draw[] (\u) to [bend left] (\v);
    \end{scope}
\end{tikzpicture} 

\end{enumerate}
  
  \item if $\beta$ acts on four cycles, then all black edges on which $\beta$ acts belong to their own distinct cycle and $\Delta c(\pi, \pi\beta)=-2$:

  \begin{center}
    \begin{tikzpicture}[scale=.75,transform shape,baseline]

    \foreach [count=\i] \color in {black,black,black,black}
        \draw[color=\color,very thick] (2*\i-2, 0) -- (2*\i-1, 0);

    \foreach [count=\i from 0] \name in {a,b,c,d,e,f,g,h}
    \node[vertex] (\name) at (\i,0) [draw,circle] 
    {};
    \foreach \name/\lab in {a/$0$,b/$\pi'_1$,c/$\pi'_{2i-2}$,d/$\pi'_{2i-1}$,e/$\pi'_{2j-2}$,f/$\pi'_{2j-1}$,g/$\pi'_{2k-2}$,h/$\pi'_{2k-1}$}
    \node (\name2) at (\name) [label=below:\strut\lab] {};


    \begin{scope}[color=gray!50, very thick, dotted]
        \foreach \u/\v in {a/b, c/d, e/f, g/h}
            \draw[] (\u) to [bend left] (\v);
    \end{scope}
    
    \draw (.5, -.75) rectangle (2.45, -.2);
    \draw (4.5, -.75) rectangle (6.5, -.2);
\end{tikzpicture} 
\begin{tikzpicture}[scale=.75,transform shape,baseline]
\draw[->, >=stealth] (0, 0) -- (1, 0);
\end{tikzpicture} 
\begin{tikzpicture}[scale=.75,transform shape,baseline]

    \foreach [count=\i] \color in {black,black,black,black}
        \draw[color=\color,very thick] (2*\i-2, 0) -- (2*\i-1, 0);

    \foreach [count=\i from 0] \name in {a,f,g,d,e,b,c,h}
    \node[vertex] (\name) at (\i,0) [draw,circle] 
    {};
    \foreach \name/\lab in {a/$0$,b/$\pi'_1$,c/$\pi'_{2i-2}$,d/$\pi'_{2i-1}$,e/$\pi'_{2j-2}$,f/$\pi'_{2j-1}$,g/$\pi'_{2k-2}$,h/$\pi'_{2k-1}$}
    \node at (\name) [label=below:\strut\lab] {};

    \begin{scope}[color=gray!50, very thick, dotted]
        \foreach \u/\v in {a/b, d/c, f/e, g/h}
            \draw[] (\u) to [bend left] (\v);
    \end{scope}
\end{tikzpicture} 

  \end{center}
 \end{enumerate}
\end{proof}

\end{document}